\newcommand{\Rmnum}[1]{\expandafter\@slowromancap\romannumeral #1@}
\newtheorem{theorem}{Theorem}
\newtheorem{corollary}{\bf Corollary}
\newtheorem{lemma}{Lemma}
\newtheorem{assumption}{\bf Assumption}
\begin{document}
	% \begin{CJK}{UTF8}{gbsn}
%
% paper title
% Titles are generally capitalized except for words such as a, an, and, as,
% at, but, by, for, in, nor, of, on, or, the, to and up, which are usually
% not capitalized unless they are the first or last word of the title.
% Linebreaks \\ can be used within to get better formatting as desired.
% Do not put math or special symbols in the title.
%\title{Underwater Differential Game: DRL-Assisted Target Hunting Task with Communication Delay}
\title{Optimizing Split Federated Learning with Unstable Client Participation}
\markboth{IEEE Transactions on  Mobile Computing}%
% \markboth{ on Mobile Computing}%
{Shell \MakeLowercase{\textit{et al.}}: Bare Demo of IEEEtran.cls for IEEE Journals}
% The only time the second header will appear is for the odd numbered pages
% after the title page when using the twoside option.
% 
% *** Note that you probably will NOT want to include the author's ***
% *** name in the headers of peer review papers.                   ***
% You can use \ifCLASSOPTIONpeerreview for conditional compilation here if
% you desire.

% If you want to put a publisher's ID mark on the page you can do it like
% this:
%\IEEEpubid{0000--0000/00\$00.00~\copyright~2015 IEEE}
% Remember, if you use this you must call \IEEEpubidadjcol in the second
% column for its text to clear the IEEEpubid mark.

% use for special paper notices
%\IEEEspecialpapernotice{(Invited Paper)}

% make the title area
	
\author{Wei Wei,~\IEEEmembership{Graduate Student Member,~IEEE,}
	Zheng~Lin,
 Xihui Liu,~\IEEEmembership{Member,~IEEE,}
 Hongyang~Du,~\IEEEmembership{Member,~IEEE,}
 Dusit~Niyato,~\IEEEmembership{Fellow,~IEEE,} and~Xianhao~Chen,~\IEEEmembership{Member,~IEEE}

\thanks{The work was supported in part by the Research Grants Council of Hong Kong under Grant 27213824 and CRS HKU702/24, in part by HKU-SCF FinTech Academy R\&D Funding, and in part by HKU IDS Research Seed Fund under Grant IDS-RSF2023-0012.  
\textit{(Corresponding author: Xianhao Chen)}
}
\thanks{ Wei Wei, Zheng Lin, Xihui Liu, Hongyang Du, and Xianhao Chen are with the Department of Electrical and Electronic Engineering, the University of Hong Kong, Pok Fu Lam, Hong Kong SAR, 999077, China. Xihui Liu and Xianhao Chen are also with the HKU Musketeers Foundation Institute of Data Science, the University of Hong Kong, Pok Fu Lam, Hong Kong SAR, 999077, China. Dusit Niyato is with the College of Computing
and Data Science, Nanyang Technological University, Singapore 639798. (e-mail: weiwei@eee.hku.hk; linzheng@eee.hku.hk; xihuiliu@eee.hku.hk; duhy@eee.hku.hk;
xchen@eee.hku.hk; dniyato@ntu.edu.sg).
}
}

\maketitle

% As a general rule, do not put math, special symbols or citations
% in the abstract or keywords.
\begin{abstract}
To enable training of large artificial intelligence (AI) models at the network edge, split federated learning (SFL) has emerged as a promising approach by distributing computation between edge devices and a server. However, while unstable network environments pose significant challenges to SFL, prior schemes often overlook such an effect by assuming perfect client participation, rendering them impractical for real-world scenarios. In this work, we develop an optimization framework for SFL with unstable client participation. We theoretically derive the first convergence upper bound for SFL with unstable client participation by considering activation uploading failures, gradient downloading failures, and model aggregation failures. Based on the theoretical results, we formulate a joint optimization problem for client sampling and model splitting to minimize the upper bound. We then  develop an efficient solution approach to solve the problem optimally. 
% Extensive simulations across diverse datasets demonstrate the superiority of our proposed framework.
Extensive simulations on EMNIST and CIFAR-10 demonstrate the superiority of our proposed framework  compared to existing benchmarks.
\end{abstract}

% Note that keywords are not normally used for peerreview papers.
\begin{IEEEkeywords}
Split federated learning, client sampling, model splitting, convergence analysis, optimization algorithm, wireless networks.
\end{IEEEkeywords}

% For peer review papers, you can put extra information on the cover
% page as needed:
% \ifCLASSOPTIONpeerreview
% \begin{center} \bfseries EDICS Category: 3-BBND \end{center}
% \fi
%
% For peerreview papers, this IEEEtran command inserts a page break and
% creates the second title. It will be ignored for other modes.
\IEEEpeerreviewmaketitle
\section{Introduction}
Edge learning enables artificial intelligence (AI) computation close to the data source, enhancing privacy, saving bandwidth, and enabling personalization of edge devices~\cite{Edge-AI-Technology-Report, 10843329, 10040976,wei2025pipeliningsplitlearningmultihop, ZW_spectrum,9944188,9839238,10298247}. However, the rapid expansion of AI models presents unprecedented computational challenges. State-of-the-art models are inherently incompatible with the limited hardware resources of edge devices or servers. For instance, edge devices such as the NVIDIA Jetson AGX Xavier with 16 GB of RAM struggle to train models that exceed 50-million parameters with full precision \cite{jetson-agx}. Given these challenges, split federated learning (SFL)~\cite{vepakomma2018split, lin2023split,10934144}, which utilizes a client-edge server architecture, presents a promising solution. In SFL, the AI model is divided into two segments: the initial layers are trained on clients over local data, whereas the subsequent layers are trained on a server that provides more powerful computing resources. Moreover, a federated server (or simply Fed server) periodically aggregates client-side sub-models, akin to federated learning (FL) paradigm~\cite{9084352,lin2024fedsn,konevcny2016federated, 10233897,lin2024efficient,lin2025leo}, to synchronize models across multiple clients in parallel. SFL alleviates the hardware constraints of clients by offloading the most demanding computations to more powerful servers, thereby recognized as a promising framework to enable large-scale model training on edge devices.

\begin{figure}[t!]

         \subfigure[\centering Illustration of network failures in SFL.] {
\centering\includegraphics[width=4.1cm]{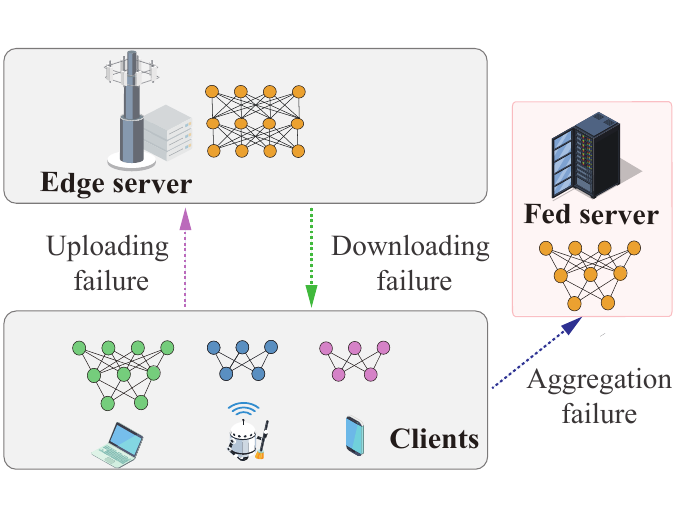}
	}
         \subfigure[\centering Test accuracy v.s. the index of cut layer.] {
 \centering
		\centering\includegraphics[width=4.1cm]{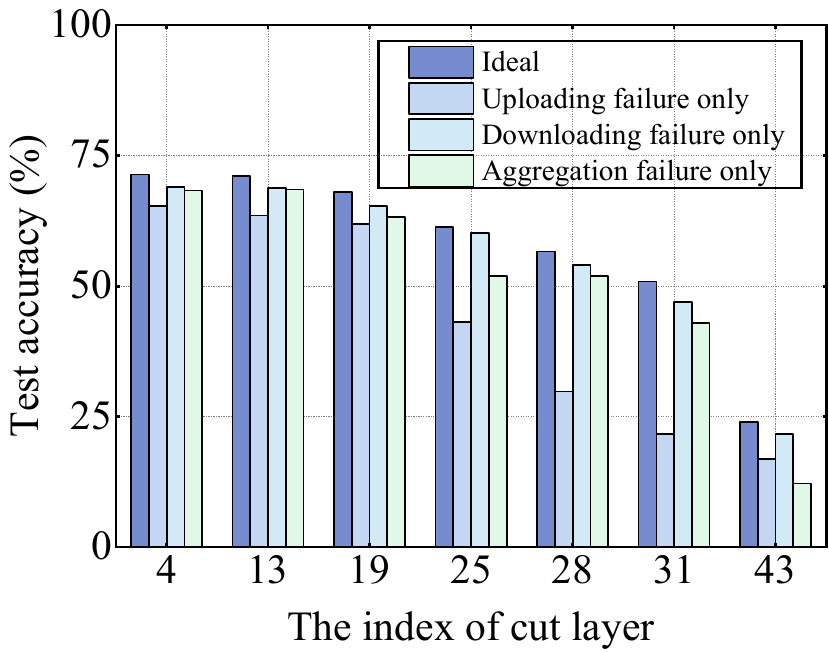}
        }
	\centering\caption{\textcolor{black}{SFL under dynamic network conditions. (a)  SFL scenario includes uploading, downloading and aggregation failures due to unstable client participation. (b) The failure occurs exclusively during uploading, downloading, or aggregation,  with a probability ranging from 0.2 to 0.6. The experiments select 5 clients out of 100 clients in each training epoch. The training process, consisting of 500 epochs, is conducted with ResNet-50 on the CIFAR-10 dataset under non-IID settings. } }
		\label{sfig:motivation_1_different_I}
        \vspace{-0.8em}
\end{figure}

However, although tremendous efforts have been made on SFL in recent years, prior SFL schemes could perform poorly under unreliable communication networks. Specifically, SFL requires not only periodical client-side model aggregation but also frequent exchanges of activations and activations' gradients between clients and the edge server during every forward and backward pass~\cite{thapa2022splitfed,lin2025hasfl,9923620,lin2025hsplitlora,chen2026siftmoesimilarityawareenergyefficientexpert,qu2026slidesimultaneousmodeldownloading}. In practice, device instability and network fluctuations, caused by memory overflow, client mobility, or deep channel fading, can easily cause updating/transmission failures in SFL. As shown in Fig.~\ref{sfig:motivation_1_different_I} (a), three major failure cases exist. (i) \textit{uploading failures} may occur when a client cannot complete the forward pass in time due to limited processing power or fails to upload activations promptly because of poor uplink channel conditions; (ii) \textit{downloading failures} arise when the downlink channel conditions from the edge server is too harsh to download activations' gradients timely; and (iii) \textit{aggregation failures} happen when a client experiences adverse uplink or downlink channel conditions when exchanging client-side models with the Fed server for aggregation. Each of these failure cases negatively affect the overall efficiency and model convergence in SFL.

Tackling unstable client participation requires rigorous convergence analysis of SFL with unstable client participation. 
The \textbf{main challenge} lies in the fact that failure can occur at different SFL stages, and these stages have \textit{varying impacts} on training performance. 
As shown in Fig.~\ref{sfig:motivation_1_different_I} (b), uploading failures prevent both server-side and client-side model updates in the affected iteration, which causes the most severe degradation. Downloading failures only halt client-side updates and lead to a smaller impact. Aggregation failures result in missing client-side models and have a moderate effect.
The combined impact of these different failure types has not been theoretically quantified in the literature. This interplay fundamentally distinguishes the convergence analysis and optimization framework for SFL from those of conventional FL systems~\cite{9060868,10292582,chen2024fedmeld,hu2024accelerating}. Moreover, while some recent SFL studies have examined various aspects to improve robustness of SFL via dynamic weight adjustments \cite{10095067} and packet loss resilience \cite{shiranthika2023splitfedresiliencepacketloss}, none of them has theoretically characterized the impact of unstable client participation on SFL and developed a unified optimization framework under unstable conditions.

\textcolor{black}{To address the above challenges, in this paper, we develop an SFL framework considering unstable client participation. To mitigate the adverse effect, optimizing client sampling and model splitting is the key. (i) \textit{Client sampling} should strike an optimal diversity-stability trade-off by sampling clients to preserve data diversity while also prioritizing relatively stable clients under non-IID data distributions. (ii) \textit{Model splitting} can also be optimized, as it affects communication-computing latency and determines how significantly client-side update failures influence overall training performance. Towards this end, we \textit{first} perform the convergence analysis of SFL under stable client participation by considering an impact of split points and failures in different stages. \textit{Then}, we devise an optimal client sampling and model splitting strategy for the SFL framework based on the theoretical upper bound. The primary contributions are as follows:}

\textcolor{black}{
\begin{itemize}
\item  We provide the first theoretical convergence analysis that reveals how model splitting and client sampling strategies impact training convergence in SFL with unstable client participation. It reveals that shallower model splitting and more frequent client-side aggregation improve convergence. The analysis provides useful engineering insights that uploading failures have a more pronounced impact on convergence than downloading or aggregation failures, and that unstable environments inevitably degrade convergence performance if not properly addressed.
% Key findings indicate that lower client participation probability
%  leads to a higher sampling probability, which balances client heterogeneity and ensures sufficient participation of clients with abundant data, to meet convergence targets.
% Moreover, we also derive that shallow model splitting point enhances SFL's resilience to model loss.
 \item Based on the convergence bound, we formulate a joint optimization framework for client sampling and model splitting aimed at minimizing the upper bound of training loss. We further derive an optimal polynomial-time solution, enabling an effective balance between convergence accuracy and communication--computation efficiency in practical SFL deployments.
  \item Extensive simulations on EMNIST and CIFAR-10 under both IID and non-IID settings demonstrate that our proposed solution achieves superior test accuracy compared with baselines.
\end{itemize}}

The remainder is outlined as follows. Section~\ref{Rel_Work} reviews prior literature. Section~\ref{sec:systemModel} introduces the system model. We further provide the convergence analysis for the optimal client sampling and model splitting framework and present the problem formulation in Section~\ref{convergence_SFL}. Section~\ref{solu_appro} develops the optimal solution to this problem, and Section~\ref{simulations} presents the simulation results. Finally, in Section~\ref{conclusion}, we conclude our work.

\begin{table}[t]\label{notation}
\caption{{Summary of Important Notations.}}
{\small % 开始小字体区域
  \renewcommand{\arraystretch}{0.95}{
  \setlength{\tabcolsep}{0.1mm}{
\begin{tabular}{ll}
			\toprule
			\toprule
\textbf{Notation}                                                              & ~~~\textbf{Description}  \\ \hline
~~~$\mathcal{N}$ &  ~~~The set of all clients     \\
~~~$\mathcal{K}^{(t)}{(\bm q)}$ &  ~~~The set of clients randomly sampled in round 
$t$   \\
~~~$N$ &  ~~~The number of clients     \\
~~~$K$ &  ~~~The number of sampled clients     \\
~~~${\mathcal{D}_i}$           &  ~~~Client $i$'s local dataset  \\
~~~$f\left( {\bf{w}} \right)$     &  ~~~The global loss function parameterized by ${{\bf{w}}}$ \\
% ~~~$E$     &  ~~~The number of training epochs \\
~~~$R$     &  ~~~The number of training rounds \\
~~~${\mathcal{B}_i}$     &  ~~~The mini-batch sampled from ${\mathcal{D}_i}$ \\
~~~${b}$     &  ~~~The size of mini-batch \\
% ~~~$b$     &  ~~~The mini-batch size \\
~~~${{\bf{w}}_{c,i}}$/${{\bf{w}}_{s,i}}$            &  ~~~The client-side/server-side sub-model of client $i$  \\
~~~${\bf{ h}}_{s}$           &  ~~~The server-side common sub-model  \\
~~~${{\bf{ h}}_{m,i}}$           &  ~~~The server-side non-common sub-model of client $i$  \\
 ~~~${{\bf{h}}_{c,i}}$           &  ~~~The forged client-specific model of client $i$  \\
~~~$L_c$           &  ~~~The maximum cut layer index of all clients  \\
~~~$L$           &  ~~~The number of layers in the global model  \\
~~~$L_c^i$           &  ~~~The cut layer of client $i$   \\
% ~~~${{f _s}}$/${{f _i}}$           &  ~~~The computing capability of server/edge device $i$  \\
% ~~~${\rho _j}$/${\varpi _j}$           &  ~~~The FP/BP computing workload of propagating the\\
% & ~~ first $j$ layer neural  network for one data sample\\
% ~~~${\psi _j}$/${\chi _j}$           &  ~~~The data size of the activations/activations' gradients\\
% &  ~~ at the cut layer $j$\\
% ~~~${\delta _j}$           &  ~~ The data size of client-side sub-model with the cut layer $j$ \\
% ~~~${r_i^{U}}$/$r_{i,f}^U$       &  ~~~The uplink transmission rate from edge device $i$ to \\
% &  ~~ edge server/fed server.\\
% ~~~${r_i^{D}}$/$r_{i,f}^D$    &  ~~~The downlink transmission rate from edge server/ \\
% &  ~~ fed server to edge device $i$.\\
% ~~~$r_{s,f}$    &  ~~~The transmission rate from the edge server to fed  \\
% &  ~~ server. \\
% ~~~$r_{f,s}$    &  ~~~The transmission rate from fed server to edge server. \\
~~~$\beta, {\sigma _j^2}, {G _j^2}$     &  ~~ Loss function constants (detailed in Section~\ref{convergence_SFL})\\
~~~$I$           &  ~~~The interval (in iterations) between client-side MA
\\
~~~$p_i$           &  ~~~Upload failure probability from client $i$
 to server
\\
~~~$\varphi_i$           &  ~~~Download failure  probability  from  server to client $i$
\\
~~~$a_i$           &  ~~~Upload failure probability from client $i$
 to Fed server
\\
~~~$\bm q$           &  ~~~Client sampling probability
\\
~~~$m_i$           &  ~~~The weight of client i's dataset
\\
~~~${s_{u}^{i}}, {s_{d}^{i}}, {s_{a}^{i}}$           &  ~~~Bernoulli random variables
\\
~~~$\gamma$           &  ~~~Learning rate
% \\
% ~~~$T_1$-$T_6$           &  ~~~The auxiliary variables (to linearize problem~\eqref{problem_1})
\\ \bottomrule
\end{tabular}}}
}
\end{table}

\section{Related Work}\label{Rel_Work}
% \subsection{有没有工作分析过SFL/sl/fed中的不稳定的问题}
% \cite{yang2022robustsplitfederatedlearning}(sfl)
% \subsection{有没有工作分析过SFL/sl/fed中的client sampling问题}
% \cite{10547401}

\textbf{Convergence analysis of FL.}
FedAvg~\cite{mcmahan2017communication} is widely recognized as the first and the most commonly used FL algorithm. Several works have shown the convergence of FedAvg under different settings, e.g., IID setting~\cite{10.5555/3546258.3546471,NEURIPS2018_3ec27c2c}, non-IID setting~\cite{10292582,9261995} even with partial clients participation~\cite{cho2020clientselectionfederatedlearning}.
% Numerous studies in FL have conducted the convergence analysis\cite{10292582,9261995}.
% Unstable client participation presents a critical challenge in FL paradigms. 
% This instability is primarily driven by two factors: first, the communication environment’s instability caused by network congestion, bandwidth fluctuations, and signal attenuation; and second, the intentional dropout of clients for energy conservation or resource management purposes.
% Numerous studies in FL have addressed the unstable communication issue with various techniques, including local updating strategies \cite{9716792}, client sampling \cite{9491936},  sparsification \cite{10.1609/aaai.v37i7.25977}, and quantization\cite{10658402}.
% Soft-DSGD \cite{9716792} is a robust decentralized SGD algorithm that leverages partially received messages and adapts mixing weights based on communication link reliability to overcome the challenges of unstable communications in decentralized FL, while achieving the same asymptotic convergence rate as in perfectly reliable networks
% Under IID conditions, Wang et al. \cite{10.5555/3546258.3546471} present a unified framework for communication‑efficient strategies and establish convergence guarantees that both reduce communication overhead and achieve fast error convergence. 
Under IID conditions, Wang et al. \cite{10.5555/3546258.3546471} present a unified, communication-efficient FL framework that reduces communication overhead and accelerates convergence with convergence guarantees.
Moreover, Woodworth et al.~\cite{NEURIPS2018_3ec27c2c} introduce a graph‑based oracle model for parallel stochastic optimization and derive the optimal lower bound that depends only on the graph depth and size, thereby clarifying fundamental limits of communication–parallelism trade-offs. For non-IID settings, Rodio et al.~\cite{10292582} analyze heterogeneous and temporally/spatially correlated client availability, demonstrating that correlation deteriorates FedAvg’s convergence rate if not handled properly. Dinh et al.~\cite{ 9261995} propose FEDL under strong convexity and smoothness, establish linear convergence by controlling local inexactness and learning rate. They derive closed-form solutions jointly tuning FL hyperparameters to balance wall-clock convergence and device energy costs. Meanwhile, Cho et al. \cite{cho2020clientselectionfederatedlearning} study biased client selection under partial participation and show that prioritizing high-loss clients accelerates FedAvg’s convergence but may introduce a small bias tied to data heterogeneity.
\textcolor{black}{Recent work has extended FL theory and system design to unreliable communication and intermittent participation scenarios, proposing robust learning mechanisms~\cite{9716792,wang2025robustfederatedlearningunreliable,10336724}. In particular, Ye et al.~\cite{9716792} study FL with unreliable communications, provide convergence analysis under stochastic link failures, and propose a Soft-DSGD approach that performs partial aggregation of received packets with reliability-aware mixing weights to improve robustness against packet losses.
Wang et al.~\cite{wang2025robustfederatedlearningunreliable} investigate robust FL in unreliable wireless networks and propose a client selection strategy that optimizes selection probabilities by considering clients’ transmission failure probabilities and local label distributions to improve training robustness and efficiency.
Sun et al.~\cite{10336724} propose stochastic coded FL, provide theoretical analysis under straggler-induced missing updates, and develop a contract-based incentive mechanism to motivate edge devices to upload less noisy coded datasets.
However, despite extensive convergence analyses in FL, existing FL convergence bounds do not apply to our context as they fail to characterize the impact of model splitting and network failures on the convergence of SFL, where training proceeds through entirely  three different stages.}

% with entirely different training stages.

%Thus, a comprehensive theoretical framework for understanding the impact of insatiability on overall SFL training performance remains an open issue.

\textbf{Client sampling.} Client sampling is a critical design component in distributed machine learning across heterogeneous devices. In the existing FL literature, client sampling strategies primarily include uniform sampling \cite{mcmahan2017communication}, importance-aware sampling~\cite{chen2022optimalclientsamplingfederated,9904868,pmlr-v151-jee-cho22a}, clustering-based sampling~\cite{fraboni2021clustered,NEURIPS2024_7886b9ba}, resource-aware sampling \cite{10.1109/INFOCOM48880.2022.9796935}, and fairness-aware sampling \cite{9810502}. Uniform sampling refers to randomly selecting a fixed proportion of clients from the user pool during each training round. Due to its simplicity, uniform sampling does not account for system or data heterogeneity. Importance-aware sampling mitigates the high variance inherent in stochastic optimization methods like stochastic gradient descent (SGD) by preferentially sampling clients with more ``important'' data. Specifically, these approaches assign higher sampling probabilities to clients whose contributions are quantified using local gradients (e.g., \cite{chen2022optimalclientsamplingfederated,9904868}) or local losses (e.g., \cite{pmlr-v151-jee-cho22a}). 
% \textcolor{blue}{On the other hand, resource-aware optimization-based methods address system-level challenges by focusing on the efficient allocation of resources. These approaches target issues such as CPU frequency allocation (e.g., \cite{8737464}), communication bandwidth distribution (e.g., \cite{9292468,9563947}), straggler-aware client scheduling (e.g., \cite{9355774,9563062}), parameter control (e.g., \cite{8664630}), and task offloading (e.g., \cite{9437529}).} 
Clustering-based sampling first groups clients according to the similarity of their data features or model updates and then selects representative clients from each group~\cite{fraboni2021clustered,NEURIPS2024_7886b9ba}. Resource-aware sampling takes into account the available system resources, such as communication and computing resources, when selecting participating clients~\cite{10.1109/INFOCOM48880.2022.9796935,9810502}. 
% \textcolor{blue}{We note that recent studies have also begun to incorporate unreliable network conditions into client selection for conventional FL (e.g., optimizing selection probabilities under transmission failures or straggler-induced missing updates~\cite{Wang2025RobustClientSelection,Sun2024SCFL}).}
However, the above methods are tailored for traditional FL, which may be ineffective for SFL with unstable clients.

\textbf{Model splitting.} Model splitting plays a critical role in SFL, especially in scenarios with unstable client participation. The selection of the split point influences both communication/computation latency and overall training performance~\cite{ZW2024ultra-LoLa,zw2025AIoutage}. Several studies have investigated how to determine the optimal model splitting point for SFL \cite{lin2024adaptsfl,10980018,10304624,shiranthika2023splitfedresiliencepacketloss}. 
Shiranthika et al. \cite{shiranthika2023splitfedresiliencepacketloss}  study the impact of model splitting strategies on the packet loss resilience
of SFL. 
Lin et al. \cite{lin2024adaptsfl,10980018} investigate how the model splitting point affects training convergence, revealing that a shallower split point leads to smaller convergence upper bound. Xu et al. \cite{10304624} propose a combined optimization approach to identify the best split point and bandwidth allocation, aiming to reduce total training latency. Unfortunately, these studies overlook how model splitting affects training convergence with unstable client participation.

\textcolor{black}{To the best of our knowledge, this work provides the first convergence analysis of SFL that comprehensively characterizes the three failure modes (activation uploading, gradient downloading, and model aggregation) and leverages the derived bound for system optimization.}
% To the best of our knowledge, our research is the first to analyze the convergence behavior of SFL under unstable client participation and explore client sampling and model splitting therein.

% We propose a novel client sampling strategy that leverages rigorous convergence analysis to guide its design, ensuring more effective global model convergence even under unstable condition.

% \subsection{有没有工作分析过SFL/sl/fed中的client sampling/unstble/且有理论指导问题}

\section{%
Preliminaries and System Model}
\label{sec:systemModel}

\begin{figure*}[t!]
\centering
\includegraphics[width=15cm]{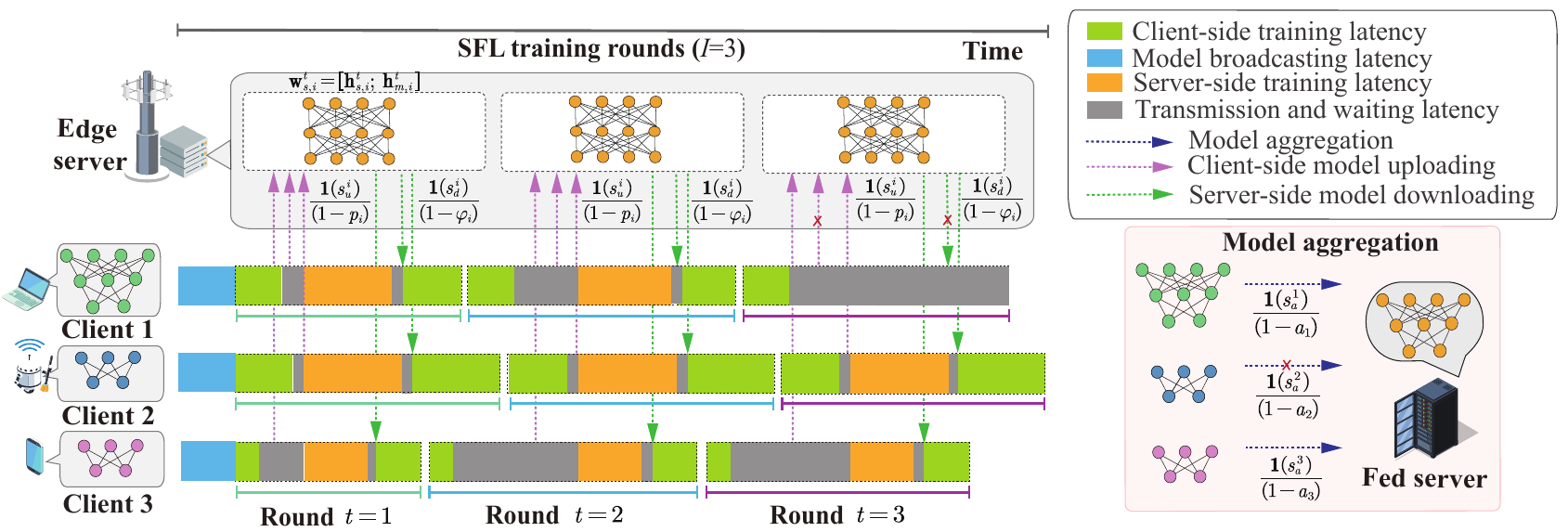}
	\flushleft
\caption{Split federated
learning with unstable clients.  The main challenge for tackling unstable client participation lies in that failure can occur at different SFL stages, including activation uploading, gradient downloading, and client-side model aggregation, which have varying impacts on training performance.  $
p_i
$
is the probability that  client $i$'s upload to the edge server fails. $\varphi_i$ represents the probability that client $i$ experiences a downlink failure from the edge server. $a_i$ is the probability that client $i$'s upload to the Fed server fails. }
\label{multi-system}
\vspace{-0.8em}
\end{figure*}
\subsection{The SFL Framework}%
We study an SFL framework with an edge server coordinating a client set $\mathcal{N} = \{1, \ldots, N\}$, where $N$ denotes the total number of clients. The overall process is described in detail as follows:
\begin{itemize}
\item \textbf{Client side}:  Each client $i$ possesses $D_i$ local training samples, where \(\mathbf{x}_{i,k}\) indicates the input data of the \(k-\)th sample and \(y_{i,k}\) represents its corresponding label, forming the dataset $\mathcal{D}_i=\{( \mathbf{x}_{i, k},y_{i, k})\}_{k=1}^{D_i}$. Across all 
$N$ clients, the total number of training samples is $D :=\sum\nolimits_{i = 1}^N D_i$. 
Moreover, ${{\bf{w}}_{c,i}}$ denotes the client-side sub-model of client $i$. 
\item \textbf{Edge server}: The edge server acts as a high-performance central computing node that performs training on server-side sub-models. For client $i$, the server-side sub-model is represented as  ${{\bf{w}}_{s,i}} = \left[{{\bf{h}}_{s}};{{\bf{h}}_{m,i}}\right]$, where ${\bf{h}}_{s}$ is the common component shared across all clients and synchronized in every round. Moreover, ${\bf{h}}_{m,i}$ is the non-common (personalized) portion, which is specific to clients with additional layers on the server and is aggregated every 
$I$ rounds. Additionally, the edge server gathers key network information, including client's computing capabilities and channel conditions, to facilitate optimized decision-making.
\item \textbf{Fed server}: 
The Fed server coordinates and synchronizes client-side sub-models by aggregating these sub-models at regular intervals from participating clients.
To address privacy risks, the Fed server and edge server are typically operated by different entities~\cite{10.1145/3460120.3485259}. 
\end{itemize}

% Further,  ${{f}\left( \mathbf{w}; \mathbf{x}_{i, j} \right)}$ indicates how the machine learning model parameter $\mathbf{w}$ performs on the input data sample $\mathbf{x}_{i, j}$. Thus, the local loss function of client $i$ can be defined as 
% \begin{equation}
% \label{lo_ob}
% {F_i}\left( \mathbf{w} \right) := \frac{1}{{{n_i}}}\sum\nolimits_{j =1}^{n_i} {{f}\left( \mathbf{w}; \mathbf{x}_{i, j} \right)}.
% \end{equation}
% Denote $p_i=\frac{n_i}{n_\textnormal{tot}}$ as the weight of the $i$-th device such that $\sum\nolimits_{i = 1}^N p_i=1$. Then, by denoting $F\left( \mathbf{w} \right)$ as the global loss function, the goal of FL is to solve the following optimization problem 
% \cite{kairouz2019advances}:
% \begin{equation}
% \label{gl_ob}
% \min_{\mathbf{w}}  F\left( \mathbf{w} \right) :=\sum\nolimits_{i = 1}^N{p_i}{F_i}\left( \mathbf{w} \right).
% \end{equation}

\subsubsection{Global Model and Objective}
The global model is denoted as ${\bf{w}} = \left[ {{{\bf{w}}_{s,i}};{{\bf{w}}_{c,i}}} \right] $, where ${{\bf{w}}_{s,i}}$ represents server-side sub-model and ${{\bf{w}}_{c,i}}$ represents client-side submodel of client $i$, respectively. The objective of SFL is to minimize the global training loss
\begin{align}\label{minimiaze_loss_function}
\mathop {\min }\limits_{\bf{w}} f\left( {\bf{w}} \right) \buildrel \Delta \over = \mathop {\min }\limits_{\bf{w}} {\frac{{{1}}}{N}} \sum\limits_{i = 1}^N {f_i}({\bf{w}}),
\end{align}
where ${f_i}\left( {\bf{w}} \right) \buildrel \Delta \over = {\mathbb{E}_{{\xi _i} \sim {\mathcal{D}_i}}}[{F_i}\left( {{\bf{w}};{\xi _i}} \right)]$ is client $i$'s local loss function, and $\xi _i$ represents stochastic mini-batch sampling from dataset $\mathcal{D}_i$. Moreover, the stochastic gradient $\nabla F_{i}(\mathbf{w}^{t};\xi_{i}^{t}) $ is an unbiased estimate of $\nabla f_{i}(\mathbf{w})$ with $\mathbb{E}_{\xi_{i}^{t}\sim \mathcal{D}_{i}}[\nabla_{\bf{w}} F_{i}(\mathbf{w}^{t-1}_{i};\xi_{i}^{t}) \vert \boldsymbol{\xi}^{[t-1]}] = \nabla f_{i}(\mathbf{w}^{t-1}_{i})$, where $\boldsymbol{\xi}^{[t-1]}$ encompasses all randomness up to round $t-1$, defined as $\boldsymbol{\xi}^{[t-1]} \overset{\Delta}{=} [\xi_{i}^{\tau}]_{i\in\{1,2,\ldots,N\}, \tau\in\{1,\ldots,t-1\}}$~\cite{pmlr-v119-karimireddy20a}. Besides, {${\nabla _{\bf{w}}}F({\bf{w}}; \xi)$ denotes the first-order derivative (gradient) of $F({\bf{w}}; \xi)$ with respect to the parameter vector ${\bf{w}}$.}

\subsubsection{{Client Sampling}}%
\textcolor{black}{We sample clients based on a probability distribution
\begin{align}
\boldsymbol{q} = \{q_i\}_{i\in\mathcal{N}}, 
\quad 0 \le q_i \le 1,
\end{align}
with the constraint $\sum_{i=1}^Nq_i\!=\!1$. By optimizing $\boldsymbol{q}$, we jointly account for system‐level heterogeneity (e.g., unstable communication status) and statistical heterogeneity (non‐i.i.d. data distributions across clients) in order to minimize the global training loss for convergence. 
% The specific client sampling model is as follows.
% We consider a standard SFL setting where the training data are distributed in an unbalanced and  non-i.i.d. fashion among clients. 
Following prior works 
\cite{li2019convergence,yang2021achieving}, we assume that, at each round \(t\), the server independently draws $K$ clients with replacement based on the probability \(\boldsymbol{q}\). The resulting client set is denoted as $\mathcal{K}^{(t)}(\boldsymbol{q})$,   where $\mathcal{K}^{(t)}\! \subseteq\!  \mathcal{N}$ and $K\! :=\! \left| \mathcal{K}^{(t)}\right|$.
% where $\mathcal{K}^{(t)}\! \subseteq\!  \mathcal{N}$ and $K\! :=\! \left| \mathcal{K}^{(t)}\right|$. 
% Notably, we assume that during each  $I$ training rounds between two consecutive client-side model aggregation, the client sampling set is unchanged and thus there exists $r=\lceil t/I \rceil$.
% In this scenario, a client may appear multiple times in $\mathcal{K}^{(t)}(\boldsymbol{q})$. The aggregation weight for each client $i$ corresponds to the frequency of its appearance in $\mathcal{K}^{(t)}(\boldsymbol{q})$.
If a given client appears multiple times in \(\mathcal{K}^{(t)}\), its update is added repeatedly  during aggregation according to its number of appearances.}

\subsubsection{Training Process}
Before training, the edge server initializes the machine learning model, splits it into client-side and server-side sub-models, distribute these sub-models to clients and edge server, and determines the optimal client sampling strategy. Afterwards, SFL training process is executed in $I$ rounds, repeating until convergence.
The workflow comprises two phases: \textit{split training} and \textit{client-side model aggregation}.  Split training occurs every round, while the model aggregation of clients is performed every $I$ rounds. Specifically, for  round $t \in \mathcal{R} = \left\{ {1,2,...,R} \right\}$, the SFL framework proceeds as follows.

% Here, denoting $r$ as the index of the number of communication rounds, %
% we  describe one round (e.g., the $r$-th) of the workflow of SFL framework as follows: 

\textit{a. Split Training Stage:} Executed in each training round, this stage updates sub-models of the participating clients and the edge server, and is composed of three sequential steps:

\textit{a1) Client-side Model Forward Pass and Activations' Transmission:} In this step, the selected clients execute client-side forward pass (FP) simultaneously. To be specific, client $i$ samples a mini-batch ${\mathcal{B}_i}$ of batch size
$b $ from the local dataset ${\mathcal{D}_i}$, 
% The input data and corresponding label of mini-batch in training round $t$ are denoted by ${{\bf{x}}^t_i}$ and ${{\bf{y}}^t_i}$,
% ${\bf{w}}^{t-1}_{c,i}$ represents the client-side sub-model of client $i$.
feeds it through the corresponding client-side sub-model to generate activations at the cut layer, and then transmits these activations along with the labels to the edge server, typically via wireless channels. The edge server then uses the received activations to update the server-side sub-model.

\textit{a2) Server-side Model Forward Pass and Backward Pass:} The edge server receives the activations and feeds them into the respective server-side sub-models. Specifically, for client $i$, the prediction is given by
\begin{align}\label{stage_1_3}
{\bf{\hat y}}^t_i = \eta  \left( {{\bf{a}}^t_i;{{\bf{w}}^{t-1}_{s,i}}} \right),
\end{align}
where ${{\bf{w}}^{t-1}_{s,i}} = \left[{{\bf{ h}}^{t-1}_{s}};{{\bf{ h}}^{t-1}_{m,i}}\right]$. Here,
${\bf{ h}}^{t-1}_{s}$ represent the server-side common sub-model, while ${\bf{h}}^{t-1}_{m,i}$ refers to server-side non-common sub-model. 
% The predicted value and labels are utilized to calculate loss function value and further derive the server-side sub-model's gradients.
% \begin{align}\label{stage_5_2}
% {\bf{g}}_{s,i}^t = \left[ {{\nabla _{{{\bf{h}}_s}}}{F_i}({\bf{h}}_s^{t - 1};\xi _i^t);{\nabla _{{{\bf{h}}_m}}}{F_i}({\bf{h}}_{m,i}^{t - 1};\xi _i^t)} \right],
% \end{align}
Specifically, the edge server updates $\mathbf{h}_{s}^{t}$ as follows
\begin{align}\label{stage_5_2}
\mathbf{h}_{s}^{t}=\frac{1}{K}\sum_{i\in \mathcal{K} ^{(t)}(\boldsymbol{q})}{\frac{m_i}{q_i}}\mathbf{h}_{s,i}^{t},
\end{align}
where $m_i$ represents client $i$'s dataset weight. Moreover,  $\mathbf{h}_{s,i}^{t}$ denotes the server-side common sub-model  of client $i$, which is detailed by
\begin{align}
\,\,\mathbf{h}_{s,i}^{t}\gets \mathbf{h}_{s,i}^{t-1}-\gamma \frac{\mathbf{1}\!\left( {s_{u}^{i}} \right)}{\left( 1-p_i \right)}\nabla _{\mathbf{h}_s}F_i(\mathbf{h}_{s,i}^{t-1};\xi _{i}^{t}),
\end{align}
where $\gamma$ denotes the learning rate and $\nabla _{\mathbf{h}_s}F_i(\mathbf{h}_{s,i}^{t-1};\xi _{i}^{t})$ represents the stochastic gradient.  
Here, ${s_{u}^{i}}$ is the Bernoulli random variable that indicates whether the activations upload of the client $i$ to the server is successful \cite{10121038}.
Specifically,
$
{s_{u}^{i}} \sim \mathrm{Bernoulli}(1 - p_i),
$
where 
$
p_i
$
is the probability that  client $i$'s upload to the edge server fails.
Moreover, the indicator function expressed by
\begin{align}
&\mathbf{1}\!\left( {s_{u}^{i}} \right) =\left\{ \begin{array}{l}
	1,\ \text{with}\ \text{probability}\ 1-p_i,\\
	0,\ \text{with}\ \text{probability}\ p_i,\\
\end{array} \right. 
\end{align}
ensures that only the updates from clients whose uploads have actually succeeded are included in the subsequent server-side aggregation, while dividing by \(1-p_i\) keeps the estimator unbiased in expectation despite random upload failures.
The server-side non-common sub-model $\mathbf{h}_{m,i}^{t}$ is dictated by the heterogeneous cut layers: 
clients that offload more layers to the server induce personalized parts of server-side sub-models. Moreover, the updating rule for each client \(i\) updates its own non‐common sub‐model is
\begin{align}\label{non_commen_update}
\mathbf{h}_{m,i}^{t}\gets \mathbf{h}_{m,i}^{t-1}-\gamma \frac{\mathbf{1}\!\left( {s_{u}^{i}} \right)}{\left( 1-p_i \right)}\nabla _{\mathbf{h}_m}F_i(\mathbf{h}_{m,i}^{t-1};\xi _{i}^{t}),
\end{align}
where  ${\nabla _{{{\bf{h}}_m}}}{F_i}({\bf{h}}_{m,i}^{t - 1};\xi _i^t)$ represents the stochastic gradient. 

% Moreover, each client’s per‐round latency $T_i^{U}$ during model forward pass and activations' transmission is composed of its computation latency $\tau _{i}^{c}(L_{c}^{ i })$ and its upload time $t_{i}^{\mathrm{up}}(L_{c}^{ i })$, which is expressed by:
% \begin{equation}
% T_i^D
% = \tau_i^s\bigl(L_c^i\bigr)
% + t_i^D\bigl(L_c^i\bigr)
% \,,
% \end{equation}
% where
% \begin{equation}
% t_i^D\bigl(L_c^i\bigr)
% = \frac{S_{ G}\bigl(L_c^i\bigr)}{r_i^D}\,.
% \end{equation}
% Here,
% $
% r_i^D \sim \mathcal{N}\bigl(\mu_i^D,\,(\sigma_i^D)^2\bigr)
% $
% denotes the downlink rate and \(S_{G}(L_c^i)\) is the total size of the gradients transmitted from the server-side model to client-side model. Moreover, \(\tau_i^s(L_c^i)\) is the server-side computation time for layers above \(L_c^i\), including FP and BP process.  
% \textcolor{blue}{With above definitions, 
% we define a downlink failure happens when \(T_i^D > T_{\max}^D\), so
% \begin{align}
% \varphi_i(L_{c}^{ i })
% &= \Pr\bigl[T_i^D > T_{\max}^D\bigr]
% = \Pr\!\Bigl[r_i^D < \frac{S_{G}(L_c^i)}{T_{\max}^D - \tau_i^s(L_c^i)}\Bigr] \notag\\
% &= \Phi\!\Bigl(\frac{\,S_{G}(L_c^i)/(T_{\max}^D - \tau_i^s(L_c^i)) - \mu_i^D\,}{\sigma_i^D}\Bigr).
% \end{align}
% }

\textit{a3) Activations' Gradients Transmissions and Client-side Model Backward Pass:} After the server‐side backward pass finishing, the edge server transmits gradients back to each client. Upon receiving these gradients, each client updates its client‐side sub‐model. For client \(i\), the update rule is
\begin{align}\label{client_side_update}
\mathbf{w}_{c,i}^{t}\gets \mathbf{w}_{c,i}^{t-1}-\gamma \frac{\mathbf{1}\!\left( {s_{u}^{i}},\,\,{s_{d}^{i}} \right)}{\left( 1-p_i \right) \left( 1-\varphi _i \right)}\nabla _{\mathbf{w}_c}F_i(\mathbf{w}_{c,i}^{t-1};\xi _{i}^{t}),
\end{align}
where $
{s_{u}^{i}} \sim \mathrm{Bernoulli}(1 - p_i)$ and 
${s_{d}^{i}} \sim \mathrm{Bernoulli}(1 - \varphi_i)
$.
Moreover, 
$\varphi_i$ represents the probability that client $i$ experiences a downlink failure from the edge server. 
% Moreover, 
% \(p_i\) represents the probability that client \(i\)’s submodel upload fails,
% and \(\varphi_i\) is the probability that the downloaded activation gradients fail to arrive.
The combined indicator
% \begin{align*}
% &\mathbf{1}\!\left( {s_{u}^{i}},\ {s_{d}^{i}} \right) =\left\{ \begin{array}{l}
% 	1,\ \text{with}\ \text{probablity}\ (1-p_i)(1-\varphi _i),\\
% 	0,\ \text{with}\ \text{probablity}\ 1- (1-p_i)(1-\varphi _i).\\
% \end{array} \right. 
% \end{align*}
satisfies
\begin{align}
\Pr\{\,\mathbf{1}({s_{u}^{i}},{s_{d}^{i}})=1\} \;=\;(1-p_i)\,(1-\varphi_i).
\end{align}
Similarly, the indicator function in Eq.~\eqref{client_side_update} divided by \((1-p_i)(1-\varphi_i)\) ensures that the estimator remains unbiased in expectation despite possible communication failures.

% Similarly, the notation 
% $\mathbf{1}\!\left( {s_{u}^{i}},\ {s_{d}^{i}} \right)$ denotes an indicator function that returns $1$ if both of the following conditions are met: (1) client $i$'s submodel has been successfully uploaded to the server, and (2) the gradients have been successfully downloaded to the corresponding client $i$. Otherwise, it returns 
% 0. Specifically, the function is defined as follows
% \begin{align*}
% &\mathbf{1}\!\left( {s_{u}^{i}},\ {s_{d}^{i}} \right) =\left\{ \begin{array}{l}
% 	1,\ \text{with}\ \text{probablity}\ (1-p_i)(1-\varphi _i),\\
% 	0,\ \text{with}\ \text{probablity}\ 1- (1-p_i)(1-\varphi _i).\\
% \end{array} \right. 
% \end{align*}

\textit{b. Client-side Model Aggregation Stage:}  gathers and aggregates the client-specific models at the Fed server.
Each client-specific model consists of $\mathbf{h}_{m,i}^{t}$ and $\mathbf{w}_{c,i}^{t} $, aggregates every $I$ training rounds. The stage proceeds in three steps:

  \textit{b1) Model Upload:} 
Client  \(i\) and the edge server send
    % Each client \(i\) sends 
    the updated client‐side sub‐model \(\mathbf{w}_{c,i}^t\) and \(\mathbf{h}_{m,i}^t\) to the Fed server over the wireless/wired link.  Given the highly reliable inter-server link between edge server and the Fed server with sufficient bandwidth, we focus  on the upload failure of \(\mathbf{w}_{c,i}^t\) from clients to the Fed server in our analysis, and $a_i$ is the probability that client $i$'s upload to the Fed server fails.
%     Thus, the transmission time of the $i$-th client‐side submodel to the federated server for further aggragating is denoted by
% \begin{align}
%   t_{i}^{A} (L_c^i)
%   &= \frac{S_{J}(L_c^i)}{r_i^A}, 
% \end{align}
% where $S_{J}\bigl(L_c^i)$ is the size of client-side model and the data rate $r_i^A$ satisfies
% $ r_i^A \sim \mathcal{N}\bigl(\mu_i^A,\,(\sigma_i^A)^2\bigr)$. \textcolor{blue}{With above definitions, 
% We define an uploading failure with
%     \begin{equation}
%   a_i(L_c^i) \;=\;
%   \Pr\bigl[t_{i}^{A} (L_c^i) > T_{\max}^A\bigr]
%   \;=\;
%   1 - \Phi\!\Bigl(\frac{T_{\max}^A - \mu_i^A}{\sigma_i^A}\Bigr).
% \end{equation}

  \textit{b2) Client‐Side Model Aggregation:}
    Upon receipt of ${{\bf{h}}^t_{m,i}}$ and ${{\bf{ w}}^t_{c,i}}$, the Fed server ``forges" each client’s full model by concatenating its server‐side and client‐side parts:
\begin{align}
\Delta {\bf{h}}_{c,i}^{t}=\left[ {\bf{h}}_{m,i}^{t}-{\bf{h}}_{m}^{t-I}\,\,; \frac{\mathbf{1}(s_{a}^{i})}{1-a_i}\left( {\bf{ w}}_{c,i}^{t}-{\bf{ w}}_{c}^{t-I} \right) \right], 
\label{aggre-client}
%     {{\bf{h}}^t_{c,i}} = \left[{{\bf{h}}^t_{m,i}};
% \frac{\mathbf{1}\!\left( {s_{a}^{i}} \right)}{\left( 1-a_i \right)}
% {{\bf{ w}}^t_{c,i}}\right],
    \end{align}
    where ${s_{a}^{i}} \sim \mathrm{Bernoulli}(1 - a_i)$. Here, \(\mathbf{1}({s_{a}^{i}})=1\) if the assembly succeeds (with probability \(1-a_i\)), and \(0\) otherwise.  
    The Fed server then aggregates these forged models as
    \begin{align}\label{h_c_define_refined}
    \mathbf{h}_c^t = \mathbf{h}_c^{t-I} + \frac{1}{K} \sum_{i \in \mathcal{K}^{(t)}(\bm q)} \frac{m_i}{q_i} \Delta \mathbf{h}_{c,i}^t,
      % \mathbf{h}_{c}^{t}
      % \;=\;
      % \frac{1}{K}
      % \sum_{i\in\mathcal{K}^{(t)}(\boldsymbol{q})}
      % \frac{m_i}{q_i}\,
      % \mathbf{h}_{c,i}^t,
    \end{align}
    where \(\mathcal{K}^{(t)}(\boldsymbol{q})\) refers to the set of \(K\) clients sampled with replacement according to \(\boldsymbol{q}\).

      \textit{b3) Model Download:} 
    Finally, the updated client-side sub-models are delivered to the corresponding clients, and the updated server-side non-common sub-models are deployed on the edge server, respectively.

% \textcolor{blue}{
% The above process repeats for many rounds until the global loss converges. 
% Recent works have demonstrated the effectiveness of split federated learning with theoretical convergence guarantees  in various settings 
% % \cite{
% % li2019convergence,haddadpour2019convergence,karimireddy2019scaffold,yang2021achieving, qu2020federated}
% . However, these works 
% assume that the server samples  clients either uniformly at random or proportional to data size, %
% which  may slow down the  wall-clock time for convergence %
% due to the straggling effect {e.g., clients’  non-i.i.d. data and diverse computational and communication capabilities}. %
%  Thus, a careful client sampling design should tackle both system and statistical heterogeneity for fast convergence.
%  }

This process is repeated until the global training loss converges. Although recent studies have established convergence guarantees for FL with client sampling \cite{li2019convergence,yang2021achieving}, they typically adopt uniform or data‐size‐proportional client sampling. 
Likewise, convergence analyses for SFL \cite{lin2024adaptsfl,han2025convergenceanalysissplitfederated} assume idealized conditions, such as stable network links and consistently available clients, conditions that rarely hold in realistic deployments.
 When deployed in unstable environments—characterized by fluctuating network links' conditions, intermittent client availability, and heterogeneous resources—these sampling strategies aggravate straggler delays, thereby prolonging total convergence time, and amplify oscillations in the aggregated model, resulting in degraded training accuracy.
  Thus, an optimal client sampling strategy that jointly accounts for unstable client participation and statistical heterogeneity is crucial for maintaining high accuracy while keeping the total training delay within acceptable bounds.

\begin{algorithm}[t]
\begin{spacing}{0.8} % 设置行距为 0.9 倍
	%\textsl{}\setstretch{1.8}
	\renewcommand{\algorithmicrequire}{\textbf{Input:}}
	\renewcommand{\algorithmicensure}{\textbf{Output:}}
    	\caption{Training Framework for SFL in Unstable Environments.}\label{UnstableSFL}
	\begin{algorithmic}[1]
 \REQUIRE    mini-batch size $b$, learning rate $\gamma$, epochs $E$, the client set ${\cal N}$, data set $\cal{D}$, the number of participating clients $K$, initial client-side models $\{\mathbf{w}_{c,i}^{0}\}_{i\in\mathcal{N}}$, initial server-side models $\{[\mathbf{h}_{s,i}^{0},\mathbf{h}_{m,i}^{0}]\}_{i\in\mathcal{N}}$, uplink data rates \(\{r_i^U\}_{i\in\mathcal N}\),  
  downlink data rates \(\{r_i^D\}_{i\in\mathcal N}\),  
  the Fed server's uplink data rates \(\{r_i^A\}_{i\in\mathcal N}\).
		\ENSURE ${{{{\bf{w}}^E}}^{{*}}}$ and $\{[{\mathbf{h}_{s,i}^{E}}^*,{\mathbf{h}_{m,i}^{E}}^*]\}_{i\in\mathcal{N}}$. 
                  \STATE Obtain $\bm q$ and $L_c^i$ based on~\textbf{Algorithm~\ref{alg:short}}.
          \WHILE{$\rho < E$}
          \STATE \(I_\tau\leftarrow I_{\text{global}}\),\quad \(I_{\mathrm{eff}}\leftarrow\min(I_\tau,\,E-\rho)\)
           \STATE $\mathcal{K}$ $\leftarrow$  Select $K$ clients from  ${\cal N}$ based on  $\bm q$.
  \FOR{\(t=\rho+1,\dots,\rho+I_{\mathrm{eff}}\)}
          % \STATE \textcolor{blue}{/** {Runs} {on} {clients} **/}
            \STATE \textbf{// Client-side execution  }
          \FOR {client ${i \in \,{\mathcal{K}}}$ simultaneously}
            \STATE  ${{\bf{a}}^t_i} \leftarrow \varphi \left( {{\bf{x}}^t_i};{{\bf{w}}^{t-1}_{c,i}} \right)$
              \IF{${s_{u}^{i}}=1$}
\STATE Send $\left( {{{\bf{a}}^t_i},{\bf{y}}_i^t} \right)$ to the edge server.
  \ENDIF
          \ENDFOR
          \STATE
            \STATE \textbf{// Executes on the edge server  }
          % \STATE \textcolor{blue}{/** {Runs} {on} {edge} {server} **/}
          \STATE${\bf{\hat y}}^t_i = \varphi\left( {{\bf{a}}^t_i;{{\bf{w}}^{t-1}_{s,i}}} \right)$
          \STATE Calculate loss function value $f\left( {{\bf{w}}^{t - 1}} \right)$
          \STATE $
\mathbf{h}_{s}^{t}\gets \mathbf{h}_{s}^{(t-1)}+\sum\limits_{i\in \mathcal{K} }{\frac{1}{K}}\frac{m_i}{q_i}\left( \mathbf{h}_{s,i}^{t}-\mathbf{h}_{s}^{(t-1)} \right) 
$
          \STATE $
\mathbf{h}_{m,i}^{t}\gets \mathbf{h}_{m,i}^{t-1}-\gamma \frac{\mathbf{1}\!\left( {s_{u}^{i}} \right)}{\left( 1-p_i \right)}\nabla _{\mathbf{h}_m}F_i(\mathbf{h}_{m,i}^{t-1};\xi _{i}^{t})
$
              \IF{${s_{d}^{i}}=1$}
\STATE Send  gradients  to corresponding clients.
  \ENDIF

           \STATE
            \STATE \textbf{// Client-side execution  }
          % \STATE \textcolor{blue}{/** {Runs} {on} {clients} **/}
         \FOR {client ${i \in \,{\cal N}}$ simultaneously}
           \STATE $
\mathbf{w}_{c,i}^{t}\gets \mathbf{w}_{c,i}^{t-1}-\gamma \frac{\mathbf{1}\!\left( {s_{u}^{i}},\,\,{s_{d}^{i}} \right)}{\left( 1-p_i \right) \left( 1-\varphi _i \right)}\nabla _{\mathbf{w}_c}F_i(\mathbf{w}_{c,i}^{t-1};\xi _{i}^{t})
$
  \IF{${s_{a}^{i}}=1$ and $\left( {t - \rho } \right)$ mod $I^\tau$ $=0$}
\STATE Send $\mathbf{w}_{c,i}^{t}$  to fed server.
  \ENDIF

        \ENDFOR

         \STATE
                    \STATE \textbf{// Executes on the fed server  }
        % \STATE \textcolor{blue}{/** {Runs} {on} the {fed} {server} **/}
        \IF{$\left( {t - \rho } \right)$ mod $I^\tau$ $=0$}
          \STATE Forge client-side specific sub-models according to Eq.~\eqref{aggre-client}

          \STATE $
\mathbf{h}_c^t = \mathbf{h}_c^{t-I} + \frac{1}{K} \sum_{i \in \mathcal{K}^{(t)}(\bm q)} \frac{m_i}{q_i} \Delta \mathbf{h}_{c,i}^t
$

          \STATE ${\bf{h}}_{c,i}^t \leftarrow {\bf{h}}_c^t$
        \ENDIF
        \ENDFOR
        \STATE $\rho  \leftarrow \rho+I_{\mathrm{eff}}$, $\tau \leftarrow \tau+1$
        
          \ENDWHILE          
	\end{algorithmic}  
    \end{spacing}
\end{algorithm}
\section{Convergence Analysis}\label{convergence_SFL}
In this section, we present the first convergence analysis for unstable SFL, examining how client sampling strategies and model splitting jointly influence training convergence, which motivates developing an efficient iterative method in Section~\ref{solu_appro}. Furthermore, we present the problem formulation for the client sampling problem in SFL, addressing the unstable client participation issue.
\subsection{Convergence Analysis for the Optimal  Client Sampling and Model Splitting }
We denote client $i$'s global model at round $t$ by $
\mathbf{w}_{i}^{t}=\left[ \mathbf{h}_{s,i}^{t}; \mathbf{h}_{c,i}^{t} \right]. 
$
% , where $\mathbf{h}_{s,i}^{t}$ is the server‐side common sub‐model and $\mathbf{h}_{c,i}^{t}$ is the client‐specific sub‐model
 Specifically, $\mathbf{h}_{s,i}^{t}$ are aggregated every training rounds, while $\mathbf{h}_{c,i}^{t}$ is aggregated every $I$ training round. The cut layer $L_c$ between $\mathbf{h}_{s,i}^{t}$ and $\mathbf{h}_{c,i}^{t}$ is specified by the maximal depth of the client-side sub-model across all clients, accommodating heterogeneity in individual split locations. The corresponding gradients for $\mathbf{h}_{s,i}^{t}$ and $\mathbf{h}_{c,i}^{t}$ are specified as follows:
\begin{equation}\label{g_ci}
\mathbf{g}_{c,i}^{t}=\left[ \nabla _{\mathbf{h}_m}F_i(\mathbf{h}_{m,i}^{t-1};\xi _{i}^{t}); \nabla _{\mathbf{w}_c}F_i(\mathbf{w}_{c,i}^{t-1};\xi _{i}^{t}) \right],
\end{equation}
and 
\begin{equation}
{\bf{g}}_{s,i}^t = {{\nabla _{{{\bf{h}}_s}}}{F_i}({\bf{h}}_{s}^{t - 1};\xi _i^t)},
\end{equation}
where $
\mathbf{h}_{s}^{t}=\frac{1}{K}\sum_{i\in \mathcal{K} ^{(t)}(\boldsymbol{q})}{\frac{m_i}{q_i}}\mathbf{h}_{s,i}^{t}
$. Moreover, the corresponding gradient is denoted by ${\bf{g}}_i^t = \left[ {{\bf{g}}_{s,i}^t; {\bf{g}}_{c,i}^t} \right]$.

\textcolor{black}{Following prior work in distributed stochastic optimization \cite{9261995,pmlr-v162-gao22c,yang2025optBS},
% ~\cite{zhang2012communication,lian2017can,mania2017perturbed,lin2018don}
which focuses on analyzing the convergence of an aggregated form for individual solutions, we analyze the convergence of $
\mathbf{w}^t=\frac{1}{K}\sum_{i\in \mathcal{K} ^{(t)}(\boldsymbol{q})}{\frac{m_i}{q_i}}\mathbf{w}_{i}^{t}$. 
To further evaluate the convergence upper bound of \textbf{Algorithm~\ref{UnstableSFL}}, we adopt two common assumptions regarding the training loss functions,  which are consistent with
distributed machine learning literature \cite{karimireddy2020scaffold,zhang2013communication,mania2017perturbed}.}

\begin{assumption}[\textit{Smoothness}]\label{asp:1}
\textit{Each local loss function ${f_i}\left( {\bf{w}} \right)$ is differentiable and $\beta $-smooth, i.e., for all $\mathbf{w}$ and ${{\bf{w'}}}$, }
    \begin{equation}
\left\| {\nabla_{\bf{w}} {f_i}\left( {\bf{w}} \right) - \nabla_{\bf{w}} {f_i}\left( {\bf{w'}} \right)} \right\| \le \beta \left\| {{\bf{w}} - {\bf{w'}}} \right\|,\;\forall i.
    \end{equation}
\end{assumption}

\begin{assumption}[\textit{Bounded variances and second moments}]\label{asp:2}
\textit{The variance and second moments of stochastic gradients for each layer have upper bound}
    \begin{equation}
\mathbb{E}_{\xi_{i}\sim \mathcal{D}_{i}} \Vert \nabla_{\bf{w}} F_{i}(\mathbf{w}; \xi_{i}) - \nabla_{\bf{w}} f_{i}(\mathbf{w})\Vert^{2} \leq  \sum\limits_{j = 1}^l  {\sigma _j^2},\ \forall \mathbf{w}, \; \forall i,
    \end{equation}
        \begin{equation}
\mathbb{E}_{\xi_{i}\sim \mathcal{D}_{i}} \Vert \nabla_{\bf{w}} F_{i}(\mathbf{w}; \xi_{i}) \Vert^{2} \leq \sum\limits_{j = 1}^l  {G _j^2},\ \forall \mathbf{w}, \;\forall i ,
    \end{equation}
\textit{where $l$ is number of layers for model  $\mathbf{w}$, and ${\sigma _j^2}$ and ${G_j^2}$ bound the gradient variance and second moment  for the $j$-th layer of model $\bf w$, respectively.}
\end{assumption}

% \textbf{Assumption $3$ ($B$-Dissimilar Gradients)}: The local  gradient at the CVs are at most $B$-dissimilar from the global gradient $\nabla f (\pmb{\omega})$, i.e.,  $\left\Vert \nabla F_v(\pmb{\omega}) \right\Vert \leq B \left\Vert \nabla f(\pmb{\omega}) \right\Vert$, for all $v$.

% \textbf{Assumption $4$ ($\eta$-Inexact Local Solvers)}: Local update of the CV $v$ results in $\gamma$-inexact solution $\pmb{\omega}_{v,k+1}$ of (\ref{clientLossFunction}) in all global round $k$. 
% In other words, the local update of CV $v$ yields $\left\Vert \nabla f_v( \pmb{\omega}_{v,k+1}, \pmb{\omega}_k) \right\Vert \leq \eta \left\Vert \nabla f_v(\pmb{\omega}_k, \pmb{\omega}_k) \right\Vert$, where $\eta \in [0,1]$. 
% Note that $\eta=0$ means solving (\ref{clientLossFunction}) optimally, while an increased value indicates how the updated model differs from the exact solution.

\begin{lemma} \label{lm:diff-avg-per-node}
Under {\bf Assumption \ref{asp:1}} and using {\bf Algorithm \ref{UnstableSFL}}, the following bound holds for the discrepancy between the aggregated client model $\mathbf{h}_c^t$ and an individual client’s full (forged) model $\mathbf{h}_{c,i}^t$:
\begin{align}
&\mathbb{E} [\Vert {\mathbf{h}}_c^{t} - \mathbf{h}^{t}_{c,i}\Vert^{2}] \leq 
2\gamma ^2I^2\bigg(N\underset{i^\prime}{\max}\big\{ \frac{{m_{i^\prime}}^2}{q_{i^\prime}}\frac{1}{\left( 1-a_{i^\prime} \right)}\frac{1}{\left( 1-p_{i^\prime} \right) }\notag \\
&\quad \quad \frac{1}{ \left( 1-\varphi_{i^\prime} \right)} \big\} +\frac{1}{\left( 1-a_i \right)}\frac{1}{\left( 1-p_i \right) \left( 1-\varphi _i \right)}\bigg)\sum_{j=1}^{L_c}{G_{j}^{2}}
, \forall t,
\end{align}
\textit{where $I$ denotes the frequency of client-side model aggregation.}
\end{lemma}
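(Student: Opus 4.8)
The plan is to exploit the periodic synchronization of the client-side models. Let $t_0$ be the most recent round at which client-side aggregation was performed, so that $0\le t-t_0\le I$ and, immediately after that aggregation, every client downloads the common model, giving $\mathbf{h}_{c,i}^{t_0}=\mathbf{h}_c^{t_0}=:\bar{\mathbf h}^{t_0}$ for all $i$. Writing $\mathbf{h}_c^t-\mathbf{h}_{c,i}^t=(\mathbf{h}_c^t-\bar{\mathbf h}^{t_0})-(\mathbf{h}_{c,i}^t-\bar{\mathbf h}^{t_0})$ and using $\|u-v\|^2\le 2\|u\|^2+2\|v\|^2$, it suffices to bound the ``aggregated drift'' $\mathbb{E}\|\mathbf{h}_c^t-\bar{\mathbf h}^{t_0}\|^2$ and the ``per-client drift'' $\mathbb{E}\|\mathbf{h}_{c,i}^t-\bar{\mathbf h}^{t_0}\|^2$, each accumulated over at most $I$ local rounds.

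For the per-client drift, I would unroll the update rules~\eqref{non_commen_update} and~\eqref{client_side_update} together with the forging weight $\mathbf{1}(s_a^i)/(1-a_i)$ to write $\mathbf{h}_{c,i}^t-\bar{\mathbf h}^{t_0}=-\gamma\sum_{\tau=t_0+1}^{t}\tilde{\mathbf{g}}_{c,i}^{\tau}$, where $\tilde{\mathbf{g}}_{c,i}^{\tau}$ stacks the inverse-probability-scaled stochastic gradients of the non-common block (scale $\mathbf{1}(s_u^i)/(1-p_i)$) and of the client block (scale $\mathbf{1}(s_a^i)\mathbf{1}(s_u^i,s_d^i)/[(1-a_i)(1-p_i)(1-\varphi_i)]$). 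Cauchy--Schwarz over the $t-t_0\le I$ summands contributes a factor $I$; taking expectation over the mutually independent Bernoulli indicators $s_u^i,s_d^i,s_a^i$ and the mini-batch noise, and using $\mathbf{1}(\cdot)^2=\mathbf{1}(\cdot)$, turns $1/(1-p_i)^2$, $1/[(1-p_i)(1-\varphi_i)]^2$ and $1/(1-a_i)^2$ into $1/(1-p_i)$, $1/[(1-p_i)(1-\varphi_i)]$ and $1/(1-a_i)$, respectively. Bounding each block's $\mathbb{E}\|\nabla F_i\|^2$ by the per-layer second moments of Assumption~\ref{asp:2}, summing over layers $1,\dots,L_c$, and absorbing the smaller scalar $1/(1-p_i)$ into the larger combined factor yields $\mathbb{E}\|\mathbf{h}_{c,i}^t-\bar{\mathbf h}^{t_0}\|^2\le \gamma^2 I^2\,[(1-a_i)(1-p_i)(1-\varphi_i)]^{-1}\sum_{j=1}^{L_c}G_j^2$.

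For the aggregated drift, I would write $\mathbf{h}_c^t-\bar{\mathbf h}^{t_0}$ as the sampling-weighted average $\frac1K\sum_{i'\in\mathcal{K}^{(t)}(\boldsymbol q)}\frac{m_{i'}}{q_{i'}}(\mathbf{h}_{c,i'}^t-\bar{\mathbf h}^{t_0})$, apply Jensen's inequality to pull the norm inside the $K$-term average, and then take expectation over the with-replacement draws from $\boldsymbol q$: since every drawn index has law $\boldsymbol q$, $\mathbb{E}\big[\frac{m_{i'}^2}{q_{i'}^2}\|\cdot\|^2\big]=\sum_{i'=1}^N\frac{m_{i'}^2}{q_{i'}}\mathbb{E}\|\mathbf{h}_{c,i'}^t-\bar{\mathbf h}^{t_0}\|^2$. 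Substituting the per-client bound and using $\sum_{i'}(\cdot)\le N\max_{i'}(\cdot)$ gives the $N\max_{i'}\{\cdot\}$ term, and combining the two pieces through the $2\|u\|^2+2\|v\|^2$ split reproduces the claimed inequality.

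The step I expect to be the main obstacle is the bookkeeping around the random client sampling: getting the importance weight to collapse to $m_{i'}^2/q_{i'}$ rather than $m_{i'}^2/q_{i'}^2$ relies on averaging over the sampling law via the tower rule (not a worst-case bound), and one has to average each source of randomness---client sampling, failure indicators, and SGD noise---at the correct stage and in a consistent order. A secondary subtlety is that the forging weight $\mathbf{1}(s_a^i)/(1-a_i)$ acts only on the client block and materializes only at aggregation rounds, so one must verify that the stated bound---which carries the extra $1/(1-a_i)$ factor---also dominates the easier non-aggregation-round case, so that it indeed holds for all $t$.
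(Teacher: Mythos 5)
Your proposal is correct and follows essentially the same route as the paper's proof: anchor both models at the last aggregation round $t_0$, unroll the inverse-probability-weighted updates over at most $I$ rounds, split via $\|u-v\|^2\le 2\|u\|^2+2\|v\|^2$ into the aggregated and per-client accumulation terms, collapse the squared indicators to single $1/(1-p_i)$, $1/(1-\varphi_i)$, $1/(1-a_i)$ factors in expectation, convert $m_{i'}^2/q_{i'}^2$ to $m_{i'}^2/q_{i'}$ by averaging over the with-replacement sampling law, bound gradients by Assumption~2, and finish with $\sum_{i'}\le N\max_{i'}$ and $t-t_0\le I$. The only differences are presentational (you split the drifts before applying the factor-2 inequality, while the paper subtracts the two update recursions first), so no gap to report.
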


\begin{proof}
See Appendix A.
\end{proof}

\begin{theorem}\label{theorem1}
With {\bf Assumption \ref{asp:1}}, {\bf Assumption \ref{asp:2}}, and  {\bf Lemma \ref{lm:diff-avg-per-node}}, the following convergence bound holds for all $R \geq 1$:
{ \small \begin{equation}\label{convergence_bound}
\begin{aligned}
&\frac{1}{R}\sum\limits_{t = 1}^R \mathbb{E}  [{\Vert\nabla _{\bf{w}}}f({{\bf{w}}^{t - 1}})\Vert{^2}] 
\leq
  \frac{2\vartheta }{\gamma R} 
+\beta\gamma \sum_{i=1}^N\frac{{m_i}^2}{q_i}\cdot
 \\
&
\frac{1}{\left( 1-p_i \right) }\bigg\{\frac{1}{\left( 1-\varphi _i \right)}\frac{1}{\left( 1-a_i \right)}\sum_{j=1}^{L_c}({\sigma _{j}^{2}}+{G_j^2})+\sum_{j=L_c+1}^L({\sigma _{j}^{2}}+
{G_j^2})\bigg\}
 \\
&+\sum_{i=1}^N{}\beta ^2\frac{{m_i}^2}{q_i}2\gamma ^2I^2 \big( N\underset{i^\prime\in\![1,N]}{\max}\big\{ \frac{{m_{i^\prime}}^2}{q_{i^\prime}}\frac{1}{\left( 1-p_{i^\prime} \right) } \frac{1}{ \left( 1-\varphi_{i^\prime} \right)}\frac{1}{\left( 1-a_{i^\prime} \right)}\big\} \\
&+\frac{1}{\left( 1-a_i \right)}\frac{1}{\left( 1-p_i \right) \left( 1-\varphi _i \right)} \big) \sum_{j=1}^{L_c}{G_{j}^{2}}\triangleq \mathcal{U}\left(\boldsymbol q_{L_c},\{L_c^i\}_{i=1}^N\right),
\end{aligned}
% \frac{1}{T} \sum_{t=1}^{T} \mathbb{E} [\Vert \nabla_{\bf{w}} f({\mathbf{w}}^{t-1})\Vert^{2}] \leq \frac{2}{\gamma T} \left(f({\mathbf{w}}^{0}) -f^{\ast}\right) +\frac{\beta\gamma \sum\limits_{j = 1}^{L_c}  {G _j^2} }{N}  \nonumber + 4\beta^2\gamma^{2} I^{2} L_c G^{2}
\end{equation}}%
where $\vartheta = f(\mathbf{w}^0) - f^{*}$, with $L$ denoting the total number of layers in the global model and $f^{*}$ the optimal objective value of \textbf{Problem~P1}.
\end{theorem}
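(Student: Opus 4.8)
The plan is to run a one-round descent argument on the virtual averaged iterate $\mathbf{w}^{t}=\tfrac1K\sum_{i\in\mathcal K^{(t)}(\boldsymbol q)}\tfrac{m_i}{q_i}\mathbf{w}_i^{t}$ and then telescope over $t=1,\dots,R$. Writing $\mathbf{d}^{t}:=\gamma^{-1}(\mathbf{w}^{t-1}-\mathbf{w}^{t})$ for the effective per-round update -- assembled from the per-client corrected stochastic gradients $\mathbf{g}_i^{t}=[\mathbf{g}_{s,i}^{t};\mathbf{g}_{c,i}^{t}]$ together with the debiasing factors $\tfrac1{1-p_i},\tfrac1{1-\varphi_i},\tfrac1{1-a_i}$ of Section~\ref{sec:systemModel} -- Assumption~\ref{asp:1} gives $f(\mathbf{w}^{t})\le f(\mathbf{w}^{t-1})-\gamma\langle\nabla_{\mathbf w}f(\mathbf{w}^{t-1}),\mathbf{d}^{t}\rangle+\tfrac{\beta\gamma^{2}}{2}\Vert\mathbf{d}^{t}\Vert^{2}$. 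The decisive structural point is that the common server block $\mathbf{h}_{s}$ (layers $L_c+1,\dots,L$) is aggregated every round, whereas the forged client block $\mathbf{h}_{c,i}=[\mathbf{h}_{m,i};\mathbf{w}_{c,i}]$ (layers $1,\dots,L_c$) is aggregated only every $I$ rounds, so $\mathbf{w}_i^{t-1}$ differs from $\mathbf{w}^{t-1}$ only on those first $L_c$ layers, by an amount governed by Lemma~\ref{lm:diff-avg-per-node}.

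Next I would take the conditional expectation given $\boldsymbol\xi^{[t-1]}$ over the three independent sources of randomness at round $t$: the with-replacement sampling of $\mathcal K^{(t)}(\boldsymbol q)$, the mini-batch draws $\xi_i^{t}$, and the Bernoulli failure indicators $s_u^{i},s_d^{i},s_a^{i}$. The rescalings by $\tfrac1{1-p_i},\tfrac1{1-\varphi_i},\tfrac1{1-a_i}$ exactly debias each participating client's contribution, and averaging over the $K$ draws cancels the $\tfrac{m_i}{Kq_i}$ weights, so $\mathbb{E}[\mathbf{d}^{t}\mid\boldsymbol\xi^{[t-1]}]=\sum_{i=1}^{N}m_i\nabla_{\mathbf w}f_i(\mathbf{w}_i^{t-1})$. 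For the cross term I would use $\langle a,b\rangle=\tfrac12(\Vert a\Vert^{2}+\Vert b\Vert^{2}-\Vert a-b\Vert^{2})$ to extract $-\tfrac\gamma2\Vert\nabla_{\mathbf w}f(\mathbf{w}^{t-1})\Vert^{2}$, then apply $\beta$-smoothness once more to pass from each $\nabla f_i(\mathbf{w}_i^{t-1})$ to $\nabla f_i(\mathbf{w}^{t-1})$ (whose $m_i$-weighted average is $\nabla_{\mathbf w}f(\mathbf{w}^{t-1})$) at the cost of $\beta\Vert\mathbf{w}_i^{t-1}-\mathbf{w}^{t-1}\Vert$, and bound the latter on layers $1,\dots,L_c$ by Lemma~\ref{lm:diff-avg-per-node}; this is precisely what generates the $\beta^{2}\cdot2\gamma^{2}I^{2}\bigl(N\max_{i'}\{\cdots\}+\cdots\bigr)\sum_{j=1}^{L_c}G_j^{2}$ summand.

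For the quadratic term I would expand $\mathbb{E}\Vert\mathbf{d}^{t}\Vert^{2}$ using the with-replacement sampling identity (it equals the single-draw second moment of $Y:=\tfrac{m_i}{q_i}\times(\text{corrected gradient of }i)$ up to non-negative $\Vert\mathbb{E}Y\Vert^{2}$ contributions), use $\mathbb{E}[\mathbf 1(\cdot)^{2}/(1-\cdot)^{2}]=1/(1-\cdot)$ to surface the inflation factors, and apply Assumption~\ref{asp:2} layer by layer: the forged-client layers $1,\dots,L_c$ pick up $\tfrac1{(1-p_i)(1-\varphi_i)(1-a_i)}$ (over-estimating the $\mathbf{h}_{m,i}$ layers, whose true inflation is only $\tfrac1{1-p_i}$, by the same factor), while the common server layers $L_c+1,\dots,L$ pick up only $\tfrac1{1-p_i}$, giving the $\beta\gamma\sum_i\tfrac{m_i^{2}}{q_i}\tfrac1{1-p_i}\{\cdots\}$ term after the final rescaling; the $(\sigma_j^{2}+G_j^{2})$ pairs come from splitting each single-draw second moment into variance plus squared mean. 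The $\tfrac{m_i^{2}}{q_i}$ weighting everywhere comes from the weighted Cauchy--Schwarz inequality $\Vert\sum_i m_i v_i\Vert^{2}\le\sum_i\tfrac{m_i^{2}}{q_i}\Vert v_i\Vert^{2}$ (weights $q_i$), which I would also apply to $\Vert\mathbb{E}\mathbf{d}^{t}\Vert^{2}$ together with $\Vert\nabla f_i\Vert^{2}\le\sum_{j=1}^{L}G_j^{2}$; retaining this $\Vert\mathbb{E}\mathbf{d}^{t}\Vert^{2}$ contribution (rather than discarding it as nonpositive when $\beta\gamma<1$) is what leaves the negative $-\sum_i\tfrac{m_i^{2}}{q_i}\sum_{j=1}^{L}G_j^{2}$ term in $\mathcal{U}(\boldsymbol q,L_c^{i})$. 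Summing the one-round inequality over $t=1,\dots,R$, telescoping the left side via $f(\mathbf{w}^{0})-\mathbb{E}f(\mathbf{w}^{R})\le\vartheta$, dividing by $\gamma R/2$, and moving the non-$\Vert\nabla f\Vert^{2}$ terms to the right yields exactly the stated bound, with the $\tfrac{2\vartheta}{\gamma R}$ prefactor and the coefficient $\beta\gamma$ coming out of that division.

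I expect the main obstacle to be the bookkeeping in the middle two steps: keeping the composition of the three independent failure events consistent with the heterogeneous cut layers $L_c^{i}$ (so that each layer range carries the correct inflation factor and every debiased update is genuinely unbiased), and -- above all -- controlling the client-side drift term, which accumulates over the $I$ rounds between client-side aggregations and is what forces the $\gamma^{2}I^{2}$ dependence and the $\max_{i'}$ factor through Lemma~\ref{lm:diff-avg-per-node}. A secondary delicacy is the treatment of $\mathbb{E}\Vert\mathbf{d}^{t}\Vert^{2}$ under with-replacement sampling, where one must be careful about which of the mean-update, sampling-variance, mini-batch-variance, and failure-variance pieces are kept versus bounded so that the signs in $\mathcal{U}(\boldsymbol q,L_c^{i})$ come out as stated.
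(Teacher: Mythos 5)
Your proposal follows essentially the same route as the paper's own proof in Appendix~B: the one-round $\beta$-smoothness descent, the polarization identity $\langle \mathbf{z}_1,\mathbf{z}_2\rangle=\tfrac12(\Vert\mathbf{z}_1\Vert^2+\Vert\mathbf{z}_2\Vert^2-\Vert\mathbf{z}_1-\mathbf{z}_2\Vert^2)$ on the cross term, unbiasedness via conditioning on $\boldsymbol{\xi}^{[t-1]}$, the variance-plus-squared-mean split with the $1/(1-p_i)$, $1/(1-\varphi_i)$, $1/(1-a_i)$ inflation factors applied per layer range, the drift control through Lemma~\ref{lm:diff-avg-per-node} producing the $2\beta^2\gamma^2I^2$ term (with the server-side drift vanishing because $\mathbf{h}_s$ is aggregated every round), and the final telescoping with $\vartheta$. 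The only differences are cosmetic — you take the sampling expectation before applying the polarization identity and you explicitly flag the over-inflation of the $\mathbf{h}_{m,i}$ layers and the retained negative $-\sum_i\frac{m_i^2}{q_i}\sum_{j=1}^{L}G_j^2$ term, both of which are handled the same way in the paper's step~(b).
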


\begin{proof}
See Appendix B.
\end{proof}

% For Theorem 1, we can derive the following corollary.
% Substituting Eqn.~\eqref{convergence_bound} into Eqn.~\eqref{accuracy_cons_corollary} yields {\bf{Corollary 1}}, revealing a lower bound on the number of training rounds for achieving target convergence accuracy.
\begin{corollary}\label{corollary1}
The number of {training rounds $R$} required to achieve target convergence accuracy $\varepsilon$ is bounded as follows
\begin{equation}
\frac{1}{R} \sum_{t=1}^{R} \mathbb{E} [\Vert \nabla_{\bf{w}} f({\mathbf{w}}^{t-1})\Vert^{2}] \le \varepsilon,
\end{equation}
which can be further expressed by
\begin{equation}\label{accuracy_cons_corollary}
\frac{2\vartheta}{\gamma \left( \varepsilon +\varGamma \right)}\le R
,
\end{equation}
where
{ \small 
\begin{align}
&\varGamma = -\beta\gamma \sum_{i=1}^N\frac{{m_i}^2}{q_i}\frac{1}{\left( 1-p_i \right) }\bigg\{\frac{1}{\left( 1-\varphi _i \right)}\cdot
 \notag\\
&
\frac{1}{\left( 1-a_i \right)}\sum_{j=1}^{L_c}({\sigma _{j}^{2}}+{G_j^2})+\sum_{j=L_c+1}^L({\sigma _{j}^{2}}+
{G_j^2})\bigg\}-\sum_{i=1}^N{}\beta ^2\frac{{m_i}^2}{q_i}
 \notag\\
& 2\gamma ^2I^2 \cdot\big( N\underset{i^\prime\in\![1,N]}{\max}\big\{ \frac{{m_{i^\prime}}^2}{q_{i^\prime}}\frac{1}{\left( 1-p_{i^\prime} \right) } \frac{1}{ \left( 1-\varphi_{i^\prime} \right)}\frac{1}{\left( 1-a_{i^\prime} \right)}\big\} \notag\\
&+\frac{1}{\left( 1-a_i \right)}\frac{1}{\left( 1-p_i \right) \left( 1-\varphi _i \right)} \big) \sum_{j=1}^{L_c}{G_{j}^{2}}
.
\end{align}
}
% is given by
% { \begin{equation}\label{lowest_com_num}
% R  \ge \frac{2 \vartheta}{{\gamma \bigg( {\varepsilon  - \frac{{\beta \gamma \sum\limits_{j = 1}^L {\sigma _j^2}}}{N} - 4{\beta ^2}{\gamma ^2}{I^2}\sum\limits_{j = 1}^{{L_c}} {G_j^2}} \bigg)}}.
% \end{equation}}
\end{corollary}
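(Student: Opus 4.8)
The plan is to derive Corollary~\ref{corollary1} directly from the convergence bound in Theorem~\ref{theorem1}. The starting point is the inequality in Eq.~\eqref{convergence_bound}, which states that the time-averaged squared gradient norm is upper bounded by $\mathcal{U}(\boldsymbol{q},L_c^i)$. I observe that $\mathcal{U}(\boldsymbol{q},L_c^i)$ can be decomposed into the first term $\frac{2\vartheta}{\gamma R}$, which depends on $R$, and a collection of remaining terms that do not depend on $R$. Collecting those $R$-independent terms and negating them defines the quantity $\varGamma$, so that the bound reads $\frac{1}{R}\sum_{t=1}^R \mathbb{E}[\Vert\nabla_{\bf w} f({\bf w}^{t-1})\Vert^2] \le \frac{2\vartheta}{\gamma R} - \varGamma$.

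Next, I would impose the target accuracy requirement: to guarantee $\frac{1}{R}\sum_{t=1}^R \mathbb{E}[\Vert\nabla_{\bf w} f({\bf w}^{t-1})\Vert^2] \le \varepsilon$, it suffices (by transitivity of the inequality) to ensure $\frac{2\vartheta}{\gamma R} - \varGamma \le \varepsilon$, i.e., $\frac{2\vartheta}{\gamma R} \le \varepsilon + \varGamma$. Rearranging for $R$ — assuming $\varepsilon + \varGamma > 0$ so the division preserves the inequality direction — yields $R \ge \frac{2\vartheta}{\gamma(\varepsilon + \varGamma)}$, which is precisely Eq.~\eqref{accuracy_cons_corollary}. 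The explicit form of $\varGamma$ is then just the sign-flipped sum of the non-$R$ terms appearing in Eq.~\eqref{convergence_bound}: the $-\sum_{i=1}^N \frac{m_i^2}{q_i}\sum_{j=1}^L G_j^2$ term becomes $+\sum_{j=1}^L G_j^2 \sum_{i=1}^N \frac{m_i^2}{q_i}$, while the two positive $\beta\gamma$ and $\beta^2$ terms in $\mathcal{U}$ become the two subtracted terms in $\varGamma$.

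The derivation is essentially a one-line algebraic rearrangement once the $R$-dependent term is isolated, so there is no deep obstacle here. The main subtlety worth flagging — and the only point requiring care — is the sign and positivity of $\varGamma$: the rearrangement from $\frac{2\vartheta}{\gamma R}\le \varepsilon+\varGamma$ to $R\ge \frac{2\vartheta}{\gamma(\varepsilon+\varGamma)}$ is only valid when $\varepsilon+\varGamma>0$, which implicitly requires the learning rate $\gamma$ to be small enough (the $\beta\gamma$ and $\beta^2\gamma^2 I^2$ correction terms in $\varGamma$ are subtracted, so a large $\gamma$ could render $\varGamma$, and even $\varepsilon+\varGamma$, negative). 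I would state this as the standing condition under which the corollary holds, consistent with the standard small-step-size regime assumed throughout the convergence analysis. With that caveat in place, the proof reduces to substituting the definition of $\varGamma$ and reading off the bound on $R$.
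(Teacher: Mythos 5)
Your proposal is correct and follows exactly the route the paper intends: the corollary is a direct rearrangement of the bound in Theorem~\ref{theorem1}, isolating the $R$-dependent term $\frac{2\vartheta}{\gamma R}$ and defining $\varGamma$ as the negated sum of the $R$-independent terms, then requiring the upper bound itself to be at most $\varepsilon$. Your explicit caveat that the division step needs $\varepsilon+\varGamma>0$ (i.e., a sufficiently small learning rate) is a valid refinement that the paper leaves implicit.
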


\textbf{Insight 1:}  Eq.~\eqref{convergence_bound} and Eq.~\eqref{accuracy_cons_corollary} show that for a given {training round $R$}, increasing the frequency of client-side aggregation and adopting a shallower cut layer improve converged accuracy (i.e., reduce $\varepsilon$). 
Furthermore, as $p_i$, $\varphi_i$ and $a_i$ increase, achieving target convergence accuracy $\varepsilon$ also requires more training rounds, indicating that unstable environments negatively impact the global training loss.
% These observations align with the experimental results illustrated in Fig.~\ref{fig:motivation_1}. 
By optimally selecting the split point $L_c$ and adopting a client sampling policy that prioritizes clients with more stable uplink and downlink connections to edge server/fed server, it can minimize the global training loss within the same budget of rounds.  Thus, jointly optimizing model splitting and client sampling is essential for efficient SFL under unstable environments.

\textbf{Insight 2:} The convergence bound reveals that \(p_i\) has a relatively significant impact on the convergence compared with \(\varphi_i\) or \(a_i\), as it amplifies the contributions of both error and variance terms. Even slight increases in \(p_i\) can lead to a significant increase in the bound. The result matches our intuition and the phenomenon in Fig.~\ref{sfig:motivation_1_different_I}, as uploading failure interrupts both client-side and server-side model updates.

\textbf{Insight 3:} When $\varphi_i$ and $a_i$ are relatively high, choosing a smaller $L_c$ is beneficial. A shallower splitting point results in a larger proportion of server-side model. In this way, even if failures occur during the gradient downloading process from the server to the client or during the client-side model aggregation at the Fed server, a smaller $L_c$ can help mitigate the effect of discarding model updates.

\subsection{Problem Formulation} 
Our aim is to jointly optimize client selection probabilities $\bm q$ and model split points \(\{L_c^i\}_{i=1}^N\) to accelerate convergence while satisfying delay and resource constraints. In gradient‐based methods~\cite{lin2024adaptsfl,10980018}, driving the squared norm of the gradient toward zero indicates that the algorithm is approaching a stationary point. Hence, we aim to minimize the average expected squared norm of the gradients $\frac{1}{R}\sum\limits_{t=1}^R
\mathbb{E} [{\Vert\nabla _{\bf{w}}}f({{\bf{w}}^{t}}(\boldsymbol{q},L_{c}^{ i }))\Vert{^2}]$ over 
$R$ rounds, which motivates the following problem:
\begin{align}
\label{ob1}
\!\!\!\!\!\!\!\textbf{P1}:\quad &\underset{\{ \boldsymbol{q},\ L_{c}^{ i }
\}} \min  \quad 
\dfrac{1}{R}\sum\limits_{t=1}^R
\mathbb{E} [{\Vert\nabla _{\bf{w}}}f({{\bf{w}}^{t}}(\boldsymbol{q},L_{c}^{ i }))\Vert{^2}] \notag\\
\quad\quad \text { s.t. } 
&~\mathrm{C1:}~
\sum\limits_{i=1}^N{q_i}=1,\  0<q_i\le 1,
\notag\\
&~\mathrm{C2:}~\mathbb{E}  [T^{(r)}(\bm{q},L_{c}^{ i }) ] \leq T,\notag\\
&~\mathrm{C3:}~
L_{c,\min}\leq L_c^i\leq L,\ i\in[1,N].
\end{align}
Constraint  $\mathrm{C1}$ ensures that $q_i$
  represents a valid discrete probability distribution over $N$  clients and guarantees the unbiased global model. %
  Moreover, Constraint $\mathrm{C2}$ ensures that the per-round latency is within a pre-determined threshold and
  Constraint C3 ensures that the split point should be in a feasible range, where $L_{c,\min}$ is the minimum cut layer index that protects user privacy by ensuring sensitive information on the client side.
Besides, $R$ represents the total number of training rounds.
 The expectation  $\mathbb{E}[{\nabla _{\bf{w}}}f({{\bf{w}}^{t - 1}}(\boldsymbol{q}))]$ in Eq.~\eqref{ob1} arises from the randomness in client sampling probability  $\boldsymbol{q}$ and the local SGD. %

 \section{Solution Approach}\label{solu_appro}

In this section, we devise an efficient iterative method and derive the optimal solution to Problem \textbf{P1}. 

\subsection{Analysis of the Expected Round Time}
We first establish the following theorem for expected round time.
\begin{theorem}
The expected per-round training latency with the sampled clients is
\begin{align}
&\mathbb{E} [T^{(r)}(\boldsymbol{q},L_c^i)]= \notag \\=&K
\mathbb{E} \left[\frac{1}{K} \sum_{i\in \mathcal{K} ^{(r)}}\underset{A_i( L_{c}^{i} )}{\underbrace{(t_{i}^{u}( L_{c}^{i}) +t_{i}^{d}( L_{c}^{i} ) +\tau _{i}^{c}( L_{c}^{i}) +\tau _{i}^{s}( L_{c}^{i} ) )}} \right] \notag \notag\\
= &K\sum_{i=1}^N{q_iA_i( L_{c}^{i} )},
\end{align}
where $t_{i}^{u}$ denotes the per-round latency of transmitting activations from client $i$ to edge server, and $t_{i}^{d}$ denotes per-round latency of transmitting activations' gradients from edge server to client $i$, $\tau _{i}^{c}$ denotes the client-side per-round computing latency, and $\tau _{i}^{s}$ denotes the server-side per-round computing latency for client $i$.
% \begin{align}
% &A_i(L_{c}^{ i })\triangleq t_{i}^{U}(L_{c}^{ i })+t_{i}^{D}(L_{c}^{ i })+\tau _{i}^{c}(L_{c}^{ i })+\tau_{i}^{s}(L_{c}^{ i }).
% \end{align}
\end{theorem}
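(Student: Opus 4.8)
The plan is to compute the expected per-round latency directly from the definition of the sampling process. Recall that $\mathcal{K}^{(r)}(\boldsymbol q)$ is obtained by drawing $K$ clients with replacement from $\mathcal{N}$ according to the probability vector $\boldsymbol q$. Write the per-round latency when the set of sampled clients is $\mathcal{K}^{(r)}$ as the sum of per-client contributions $A_i(L_c^i) = t_i^u(L_c^i) + t_i^d(L_c^i) + \tau_i^c(L_c^i) + \tau_i^s(L_c^i)$, where I would first argue that the round time is governed by this additive form (under the assumption that the $K$ sampled client/server pipelines are processed in a way that their latencies accumulate per draw, as written in the theorem statement). So $T^{(r)}(\boldsymbol q, L_c^i) = \sum_{i\in\mathcal{K}^{(r)}} A_i(L_c^i)$, equivalently $\tfrac{1}{K}\sum_{i\in\mathcal{K}^{(r)}} A_i(L_c^i)$ scaled by $K$.

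Next I would take the expectation over the randomness of sampling. Index the $K$ draws by $k = 1,\dots,K$ and let $X_k \in \mathcal{N}$ be the client selected on draw $k$, so that $\Pr\{X_k = i\} = q_i$ independently across $k$. Then $\sum_{i\in\mathcal{K}^{(r)}} A_i(L_c^i)$ (counting multiplicity, which matches the aggregation-weight convention stated earlier in the paper) equals $\sum_{k=1}^K A_{X_k}(L_c^{X_k})$. By linearity of expectation and the identical distribution of the draws, $\mathbb{E}\big[\sum_{k=1}^K A_{X_k}(L_c^{X_k})\big] = K\,\mathbb{E}[A_{X_1}(L_c^{X_1})] = K\sum_{i=1}^N q_i A_i(L_c^i)$. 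Writing this as $K\,\mathbb{E}\big[\tfrac{1}{K}\sum_{i\in\mathcal{K}^{(r)}} A_i(L_c^i)\big]$ recovers exactly the chain of equalities in the statement.

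The only genuinely delicate point — and the step I expect to require the most care — is justifying the additive decomposition $T^{(r)} = \sum_{i\in\mathcal{K}^{(r)}} A_i$ itself, i.e. clarifying the scheduling model: whether the per-round latency is the sum over sampled clients (serial server processing of each activation batch plus the associated client/communication costs) or, say, a maximum over clients with a shared sequential server bottleneck. The theorem as stated commits to the summation form, so I would state the underlying timing assumption explicitly (serial handling at the edge server, with $A_i$ capturing client $i$'s full contribution to the critical path) and then the expectation computation above is immediate. A secondary detail is confirming that $L_c^i$ enters only through the deterministic quantities $t_i^u, t_i^d, \tau_i^c, \tau_i^s$, so that conditioning on $\boldsymbol q$ and the split vector the only remaining randomness is the client identities $X_k$; once that is noted, the result follows with no further estimation.
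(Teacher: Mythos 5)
Your proposal is correct. Note that the paper provides no in-text proof of this theorem at all---it simply defers to an external reference---so your self-contained derivation (write the round latency as the sum of $A_i(L_c^i)$ over the $K$ with-replacement draws, then use $\Pr\{X_k=i\}=q_i$ and linearity of expectation to get $K\sum_{i=1}^N q_i A_i(L_c^i)$) is the standard argument and matches what the cited source establishes. You are also right that the only substantive content beyond bookkeeping is the modeling assumption that the per-round latency is additive over the sampled clients (rather than, say, a maximum over clients computing in parallel); the theorem statement itself presupposes this additive form in its first equality, so making that scheduling assumption explicit, as you do, is exactly the right place to put the care.
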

\begin{proof}
    See \cite{10443546}.
\end{proof}

This theorem is crucial for our subsequent design because the expected latency will be incorporated into the optimization framework. In particular, the overall expected training latency appears as Constraint C2, $\mathbb{E}\big[T^{(r)}(\boldsymbol{q},L_c^i)\big] \le T$, in our original formulation of Problem \textbf{P1}. By substituting the above expression, the latency constraint becomes $K\sum_{i=1}^N q_i \, A_i(L_c^i) \le T$.

\subsection{Solution Overview}
\textit{b1) Find Optimal $\boldsymbol{q}_{L_c}$ and $L_c^{i}$ given $L_c$:} To solve Problem \textbf{P1}, we first exploit the discrete nature of the maximum cut layer index \(L_c\in\{ L_{c,\min}, L_{c,\min}+1, \ldots, L-1, L \}\).  By simply enumerating all \(L-L_{c,\min}+1\) layers, we guarantee a global optimum over \(L_c\) with complexity of \(O(L)\).  
Let $q_{i}(L_c)$ denote the value of $q_i$ when the maximum cut layer index is $L_c$.
For each trial value \(L_c\), we  aim to minimize the upper bound in Eq.~\eqref{convergence_bound}:
\begin{equation}\begin{array}{cl}
\label{client-sampling}
\!\!\!\!\!\!\!\textbf{P2}:\quad &\underset{\{ \boldsymbol{q}_{L_c}, L_{c}^{ i }
\}} \min   
\mathcal{U}_{L_{c}}(\boldsymbol{q}_{L_c}) \\
\quad\quad \text { s.t. } 
&~\mathrm{C1}^\prime:~
\sum\limits_{i=1}^N{q_{i}(L_c)}=1,\  0<q_{i}(L_c)\le 1,
 \\
&~\mathrm{C2}^\prime:~ 
K\sum\limits_{i=1}^N{q_{i}(L_c)A_{i}(L_c^i)}
\leq T,\\
&~\mathrm{C3}^\prime:~\max\limits_i\{L_c^i\}=L_c,\
L_{c,\min}\leq L_c^i\leq L_c,\\ 
&~\quad \quad \quad \quad \quad   \quad \quad \quad \quad   \quad \quad \quad   \quad \quad \quad   i \in[1,N],
\end{array}\end{equation}
where 
\(
\mathcal{U}_{L_c}(\boldsymbol q_{L_c})
\triangleq
\mathcal U\!\left(\boldsymbol q_{L_c},\{L_c^i\}_{i=1}^N\right),
\) and
\(\max_i L_c^i=L_c\).

To tackle Problem\textbf{ P2}, we first introduce the auxiliary variable $M_{L_c}$ satisfying $M_{L_c}\geq
\underset{i^\prime}{\max}\bigg\{ \frac{{m_{i^\prime}}^2}{q_{i,L_c^i}\big( 1-p_{i^\prime} \big)\big( 1-\varphi_{i^\prime} \big)\big( 1-a_{i^\prime} \big)}\bigg\}
$ ($\forall i^\prime \in \mathcal{N}$) and transform Problem\textbf{ P2} into
\begin{equation}\begin{array}{cl}
\label{client-sampling_0}
\textbf{P3}:\quad &\underset{\{ \boldsymbol{q}_{L_c}, L_{c}^{ i },M_{L_c}
\}} \min  \mathcal{U}_{L_{c}}(\boldsymbol{q}_{L_c},M_{L_c}) 
 \\
 \text { s.t. } 
&\mathrm{C1}^\prime\text{-}\mathrm{C3}^\prime,\\
 &\mathrm{C4:}~
\dfrac{{m_{i}}^2}{q_{i,L_c}\big( 1-a_{i} \big)\big( 1-p_{i} \big) \big( 1-\varphi_{i} \big)}\leq  M_{L_c},\ \forall i \in \mathcal{N}
,
\end{array}\end{equation}
where the objective can be rewritten as
\begin{align}
    \mathcal{U}_{L_{c}}(\boldsymbol{q}_{L_c},M_{L_c})=\sum_{i=1}^N \frac{m_i^2}{q_{i}(L_c)}\,\widetilde{C}_{i,L_c}(M_{L_c})+
\frac{2\vartheta}{\gamma R}
,
\end{align}
with the coefficient \(\widetilde{C}_{i,L_c}(M_{L_c})\) 
being
\begin{align}
&\widetilde{C}_{i,L_c}(M_{L_c})= \frac{\beta \gamma}{\big( 1-p_i \big)}\bigg\{ \frac{\sum_{j=1}^{L_c}{(}\sigma _{j}^{2}+G_{j}^{2})}{\big( 1-\varphi _i \big)\big( 1-a_i \big)}+\notag\\&\sum_{j=L_c+1}^L{(}\sigma _{j}^{2}+G_{j}^{2}) \bigg\} 
+\beta ^22\gamma ^2I^2\bigg( NM_{L_c}+\frac{1}{\big( 1-a_i \big)}\cdot\notag\\
&\frac{1}{\big( 1-p_i \big) \big( 1-\varphi _i \big)} \bigg) \sum_{j=1}^{L_c}{G_{j}^{2}}.
\end{align}

Since \(L_c\) is fixed, we observe that the optimized varible $L_c^i$  appears only in Constraint \(\text{C2}\). 
We assume that, for each client $i$, the cut layer $L_c^i$ is selected to minimize its associated cost $A_i$. Let $A_i^*$ denote this minimum, defined as
$A_i^* \triangleq \min_{L_c^i\in[L_{c,\min},L_c]} A_i(L_c^i)$. Thus, Problem $\textbf{P3}$ can be reformulated as:
\begin{equation}\begin{array}{cl}
\textbf{P4}:\quad &\underset{\{ \boldsymbol{q}_{L_c}, M_{L_c}
\}} \min  \mathcal{U}_{L_{c}}\big(\boldsymbol{q}_{L_c},M_{L_c}|\{A_i^*\}_{i\in\mathcal{N}}\big) 
 \\
 \text { s.t. } 
&\mathrm{C1}^\prime:~
\sum\limits_{i=1}^N{q_{i}(L_c)}=1,\  0<q_{i}(L_c)\le 1,
 \\
&\mathrm{C2}^{\prime \prime}:~K\sum\limits_{i\in{\mathcal{P}_c}}{q_{i}(L_c)\cdot A_{i}^*}
\leq T,\\
 &\mathrm{C4:}~
\dfrac{{m_{i}}^2}{q_{i}(L_c)\cdot\big( 1-a_{i} \big)\big( 1-p_{i} \big) \big( 1-\varphi_{i} \big)}\leq  M_{L_c},\\
 & \quad \quad \quad \quad \quad \quad \quad \quad \quad \quad \quad \quad \quad \quad \quad \quad \ \ \forall i \in \mathcal{N}
.
\end{array}\end{equation}

\textit{b2) Divide Client Set $\mathcal{N}$ into Positive Subset ${\mathcal{P}_c}$ and Negative Subset $\mathcal{N}_c$:} Based on the sign of \(\widetilde C_{i,L_c}(M_{L_c})\), we define
$
{\mathcal{P}_c} = \{ i \mid \widetilde C_{i,L_c}(M_{L_c}) > 0 \} 
$
and
$
\mathcal{N}_c = \{ i \mid \widetilde C_{i,L_c}(M_{L_c}) < 0 \} 
$.
Problem \textbf{P4} can be decomposed according to the subsets ${{\mathcal{P}_c}}$ and $\mathcal{N}_c$, which is further expressed by
\begin{equation}\begin{array}{cl}
\label{client-sampling_1}
\!\textbf{P4}^\prime: &\underset{\{ \boldsymbol{q}_{L_c}^{\mathcal{P}_c},M_{L_c}
\}} \min  \mathcal{U}^{\mathcal{P}_c}_{L_{c}}\big(\boldsymbol{q}_{L_c}^{\mathcal{P}_c},M_{L_c}|\{A_i^*\}_{i\in\mathcal{N}}\big) 
\\
 \text { s.t. } 
&~\mathrm{C1}^{\prime \prime}:~
\sum\limits_{i\in{\mathcal{P}_c}}{q_{i}(L_c)}=1-\sum\limits_{j\in\mathcal{N}_c} q_{j,L_c},\\ &\quad \quad \quad \quad  \quad \quad \quad \quad \quad \quad \quad \quad 0<q_{i}(L_c)\le 1,\\
&~\mathrm{C2}^{\prime \prime}:~K\sum\limits_{i\in{\mathcal{P}_c}}{q_{i}(L_c)\cdot A_{i}^*}
\leq T,\\
 &~\mathrm{C4}^{\prime \prime}:~
\dfrac{{m_{i}}^2}{q_{i}(L_c)\cdot \left( 1-a_{i} \right)\left( 1-p_{i} \right) \left( 1-\varphi_{i} \right)}\leq  M_{L_c},\\& \quad \quad \quad \quad \quad \quad \quad \quad \quad \quad \quad \quad \quad \quad \quad \quad   \forall i \in {\mathcal{P}_c} \cup \mathcal{N}_c
,
\end{array}\end{equation}
and
\begin{equation}\begin{array}{cl}
\label{client-sampling_2}
\!\textbf{P4}^{\prime\prime}: &\underset{\{ \boldsymbol{q}_{L_c}^{\mathcal{N}_c},M_{L_c}
\}} \min  \quad 
\mathcal{U}^{\mathcal{N}_c}_{L_{c}}\big(\boldsymbol{q}_{L_c}^{\mathcal{N}_c},M_{L_c}|\{A_i^*\}_{i\in\mathcal{N}}\big) \\
 \text { s.t. } 
&~\mathrm{C1}^{\prime \prime \prime}:~
\sum\limits_{i\in\mathcal{N}_c}{q_{i}(L_c)}=1-\sum\limits_{j\in{\mathcal{P}_c}} q_{j,L_c},\\ &\quad \quad \quad \quad  \quad \quad \quad \quad \quad \quad \quad \quad 0<q_{i}(L_c)\le 1,\\
&~\mathrm{C2}^{\prime  \prime\prime}:~K\sum\limits_{i\in \mathcal{N}_c}{q_{i}(L_c)\cdot A_{i}^*}
\leq T,\\
 &~\mathrm{C4}^{\prime \prime \prime}:~
\dfrac{{m_{i}}^2}{q_{i}(L_c)\cdot \left( 1-a_{i} \right)\left( 1-p_{i} \right) \left( 1-\varphi_{i} \right)}\leq  M_{L_c},\\& \quad \quad \quad \quad \quad \quad \quad \quad \quad \quad \quad \quad \quad \quad \quad \quad   \forall i \in {\mathcal{P}_c} \cup \mathcal{N}_c
,
\end{array}\end{equation}
where $
\boldsymbol{q}_{L_c}^{{\mathcal{P}_c}}
\;\triangleq\;
\bigl(q_{i}(L_c)\bigr)_{\,i\in{\mathcal{P}_c}}
$ and $
\boldsymbol{q}_{L_c}^{\mathcal{N}_c}
\;\triangleq\;
\bigl(q_{i}(L_c)\bigr)_{\,i\in\mathcal{N}_c}
$. 
% Moreover, $A_i^*$ is defined as
% $A_i^* \triangleq \min_{L_c^i\in[L_{c,\min},L_c]} A_i(L_c^i)$.
Detailed solutions to Problem $\textbf{P3}^\prime$ and Problem $\textbf{P3}^{\prime\prime}$ are provided in Section~\ref{solu_appro}.C and Section~\ref{solu_appro}.D, respectively.
Moreover,
we have the following theorem.
\begin{theorem}
Solving Problem $\mathbf{P4}$ is equivalent to  solving Problem $\mathbf{P4}^{\prime}$ and Problem $\mathbf{P4}^{\prime \prime}$, respectively.
\end{theorem}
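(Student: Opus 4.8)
The plan is to establish the claimed equivalence through \emph{additive separability} of the objective combined with a \emph{block-product} structure of the feasible region, so that the joint minimization of $\mathbf{P4}$ over $(\boldsymbol{q}_{L_c},M_{L_c})$ decouples into the two block problems $\mathbf{P4}^{\prime}$ and $\mathbf{P4}^{\prime\prime}$ once the shared quantities (the auxiliary bound $M_{L_c}$ and the probability mass allocated to each subset) are held fixed. Concretely, I would first rewrite the objective: since $\mathcal{U}_{L_c}(\boldsymbol{q}_{L_c},M_{L_c}) = \sum_{i=1}^N \frac{m_i^2}{q_i(L_c)}\,\widetilde{C}_{i,L_c}(M_{L_c}) + \frac{2\vartheta}{\gamma R}$ and any client with $\widetilde{C}_{i,L_c}(M_{L_c})=0$ contributes nothing, the index set $\mathcal{N}$ splits (for fixed $M_{L_c}$) into $\mathcal{P}_c$, $\mathcal{N}_c$, and a zero set that may be absorbed into either block without affecting the value. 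Hence $\mathcal{U}_{L_c} = \mathcal{U}^{\mathcal{P}_c}_{L_c}(\boldsymbol{q}^{\mathcal{P}_c}_{L_c},M_{L_c}) + \mathcal{U}^{\mathcal{N}_c}_{L_c}(\boldsymbol{q}^{\mathcal{N}_c}_{L_c},M_{L_c})$ up to the shared constant, i.e. the cost is the sum of one term depending only on $\boldsymbol{q}^{\mathcal{P}_c}_{L_c}$ (and $M_{L_c}$) and one depending only on $\boldsymbol{q}^{\mathcal{N}_c}_{L_c}$ (and $M_{L_c}$).

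Next I would decompose the constraint set of $\mathbf{P4}$. Constraint $\mathrm{C4}$ is imposed per client, so it is exactly the union of the two blocks' copies $\mathrm{C4}^{\prime\prime}$ and $\mathrm{C4}^{\prime\prime\prime}$; both subproblems must retain it in full precisely because $M_{L_c}$ is common to both. The simplex constraint $\mathrm{C1}^{\prime}$, namely $\sum_{i\in\mathcal{N}}q_i(L_c)=1$, is equivalent to the linked pair $\mathrm{C1}^{\prime\prime}$ and $\mathrm{C1}^{\prime\prime\prime}$: introducing the mass split $\mu := \sum_{i\in\mathcal{P}_c}q_i(L_c)$, the first becomes $\sum_{\mathcal{P}_c}q_i(L_c)=\mu$ and the second $\sum_{\mathcal{N}_c}q_i(L_c)=1-\mu$, which is how the right-hand sides in $\mathrm{C1}^{\prime\prime}$ / $\mathrm{C1}^{\prime\prime\prime}$ are written. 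For the latency constraint $\mathrm{C2}^{\prime\prime}$, I would argue that for every $i\in\mathcal{N}_c$ the coefficient $\widetilde{C}_{i,L_c}(M_{L_c})<0$, so the term $\frac{m_i^2}{q_i(L_c)}\widetilde{C}_{i,L_c}(M_{L_c})$ is strictly decreasing as $q_i(L_c)$ shrinks toward its $\mathrm{C4}$ lower bound; hence at optimality each such client sits at that bound, its latency footprint is minimized, and the residual latency budget is governed by $\mathcal{P}_c$, which is exactly the form carried by $\mathrm{C2}^{\prime\prime}$.

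Finally I would invoke the standard decomposition principle: if a function is additively separable across two blocks of variables and the feasible set is the Cartesian product of a set in the first block with a set in the second (here, after conditioning on $\mu$ and $M_{L_c}$), then a joint minimizer is attained iff each block is minimized individually, and the optimal values add. Applying this, any optimal $(\boldsymbol{q}^{*}_{L_c},M^{*}_{L_c})$ of $\mathbf{P4}$ restricted to the $\mathcal{P}_c$ (resp. $\mathcal{N}_c$) coordinates, with the complementary block and $M^{*}_{L_c}$ fixed, is optimal for $\mathbf{P4}^{\prime}$ (resp. $\mathbf{P4}^{\prime\prime}$); conversely, solutions of $\mathbf{P4}^{\prime}$ and $\mathbf{P4}^{\prime\prime}$ that agree on $M_{L_c}$ and on the mass split reassemble into an optimal solution of $\mathbf{P4}$, and optimizing over the remaining scalar pair $(\mu,M_{L_c})$ recovers the global optimum. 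I expect the main obstacle to be making the coupling rigorous: the two subproblems are \emph{not} independent because they share $M_{L_c}$ and the mass split $\mu$, so the equivalence must be phrased in the correct conditional sense, and one must verify the boundary/monotonicity claim that drives every $\mathcal{N}_c$ client to its $\mathrm{C4}$ bound — this is what legitimizes the split (and seemingly permissive) form of the latency constraint $\mathrm{C2}$ in $\mathbf{P4}^{\prime}$.
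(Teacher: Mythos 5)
Your proposal is correct and follows essentially the same route as the paper's own proof: split the additively separable objective by the sign of $\widetilde{C}_{i,L_c}(M_{L_c})$, use strict convexity to handle the $\mathcal{P}_c$ block and monotonicity (driving each $\mathcal{N}_c$ client to its $\mathrm{C4}$ lower bound) for the other, note that the normalization/latency constraints are linear and per-client, and let the shared $M_{L_c}$ glue the two subproblems back into a global solution. Your explicit mass-split parameter $\mu$ and the conditional phrasing of the decomposition principle make the coupling more rigorous than the paper's brief argument, but the underlying decomposition and key observations are the same.
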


\begin{proof}
For the positive subset \({\mathcal{P}_c}\), each term in \(\mathcal{U}_{L_c}^{{\mathcal{P}_c}}\) is strictly convex with respect to \(q_i(L_c)\). Thus, standard convex optimization techniques guarantee a unique minimizer for these components.
For the subset \(\mathcal{N}_c\), each term in \(\mathcal{U}_{L_c}^{\mathcal{N}_c}\) is monotonically increasing in \(q_i(L_c)\). Thus, the optimal solution is achieved by setting \(q_i(L_c)\) to its lower bound.
Additionally, the global normalization constraint
\[
\sum_{i\in\mathcal{N}}q_i(L_c)=\sum_{i\in{\mathcal{P}_c}}q_i(L_c)+\sum_{j\in\mathcal{N}_c}q_j(L_c)=1
\]
as well as any delay constraints are linear. This indicates that the respective solutions to \(\mathbf{P4}'\) and \(\mathbf{P4}''\) satisfy the overall normalization and delay constraints.
Crucially,  the auxiliary variable \(M_{L_c}\) couples the subproblems together, ensuring that the feasible solutions from both \({\mathcal{P}_c}\) and \(\mathcal{N}_c\) spaces can be integrated into a valid global solution.
Thus, by solving
\(
\mathbf{P4}' \) and \(\mathbf{P4}''
\), the resulting solution is globally optimal for Problem \(
\mathbf{P4}
\).
\end{proof}

\textit{b3) Ensure one client-side model is split at \(L_c\):}  To satisfy Constraint $\mathrm{C3}^\prime$ in  Problem \textbf{P3}, we need to force exactly that one client adopts \(L_c\) with minimal cost. Specifically, we define the following problem
% for the selected client $ i  $, 
% the cost can be defined as 
% \begin{align}
% &f_{ i}=\mathcal{U}_{L_{c}}\big(\boldsymbol{q}_{L_c}^*,M_{L_c}^*|\{\{A_{i^\prime}^*\}_{{i^\prime}\in\mathcal{N},{i^\prime}\neq { i}},A_{ i}(L_c)\}\big),
% \end{align}
% where 
% $\mathcal{U}_{L_{c}}\big(\boldsymbol{q}_{L_c}^*,M_{L_c}^*|\{A_{{i^\prime}}^*\}_{{i^\prime}\in\mathcal{N}}\big)
%  $ is the optimal solution to Problem $\textbf{P4}$ and
% $\mathcal{U}_{L_c}\big(\boldsymbol{q}_{L_c}^*,M_{L_c}^*|\big\{\{A_{i^\prime}^*\}_{{i^\prime}\in\mathcal{N},{i^\prime}\neq { i}},A_{ i}(L_c)\big\}\big)$ is the solution from the following problem
 \begin{align}
\textbf{P5}: &\underset{\{ \boldsymbol{q}_{L_c}, M_{L_c}
\}} \min  \mathcal{U}_{L_c}\big(\boldsymbol{q}_{L_c},M_{L_c}|\{\{A_{i^\prime}^*\}_{{i^\prime}\in\mathcal{N},{i^\prime}\neq { i}},A_{ i}(L_c)\}\big)
 \notag\\
 \text { s.t. } 
&\mathrm{C1}^\prime,\ \mathrm{C4},\notag\\
&\mathrm{C2}^{\dagger}:~K\big({\sum\limits_{{i^\prime}\in\mathcal{N},{i^\prime}\neq{ i}}{q_{{i^\prime}}(L_c)A_{{i^\prime}}^*}}+q_{{ i},L_c}A_{{ i}}(L_c)\big)
  \leq T.
\end{align}

% Similarly, for each client ${\hat i} $, if $\widetilde{C}_{{\hat i}}(M_{L_c})<0$, we have
% \begin{equation}\begin{array}{cl}
% \label{client-sampling_4}
% \!\textbf{P4}^{\prime\prime}: &\underset{\{ \boldsymbol{q}_{L_c}^{\mathcal{N}_c},M_{L_c}
% \}} \min  \quad 
% \mathcal{U}^{\mathcal{N}_c}(\boldsymbol{q}_{L_c}^{\mathcal{N}_c},M_{L_c}) \\
%  \text { s.t. } 
% &\mathrm{C1}^{\prime \prime \prime}:
% \sum\limits_{i\in\mathcal{N}_c}{q_{i,L_c}}=1-\sum\limits_{j\in{\mathcal{P}_c}} q_{j,L_c},\  0<q_{i,L_c}\le 1,\\
% &\mathrm{C2}^{\prime  \prime\prime}:\textcolor{blue}{(K-1)\sum\limits_{i\in\mathcal{N}_c,i\neq{\hat i}}{q_{i,L_c}A_{i}^*}
% \leq T-q_{{\hat i},L_c}A_{{\hat i}}(L_c),}\\
%  &\mathrm{C4}^{\prime \prime \prime}:~
% \dfrac{{m_{i}}^2}{q_{i,L_c}\left( 1-a_{i} \right)\left( 1-p_{i} \right) \left( 1-\varphi_{i} \right)}\leq  M_{L_c},\\& \quad \quad \quad \quad \quad \quad \quad \quad \quad \quad \quad \quad \quad \quad \quad \quad \quad  \forall i \in \mathcal{N}_c
% .
% \end{array}\end{equation}
% Thus, the cost can be defined as $f_{\hat i}=\mathcal{U}_{{\hat i}}^{\mathcal{N}_c*}(\boldsymbol{q}_{L_c}^{\mathcal{N}_c},M_{L_c})-\mathcal{U}^{\mathcal{N}_c*}(\boldsymbol{q}_{L_c}^{\mathcal{N}_c},M_{L_c})$. 

% \textcolor{blue}{M是否子问题互相影响 如果影响这里要写成总问题的}
\begin{theorem}
Solving Problem $\mathbf{P3}$ is equivalent to first solving Problem $\mathbf{P4}^{\prime}$ and Problem $\mathbf{P4}^{\prime \prime}$, and then setting $ L_{c}^{i_0} =  L_c\,$, where 
\begin{align}
  &i_0 \;=\;\arg\min_{{ i}}\ \mathcal{U}_{L_c}\big(\boldsymbol{q}_{L_c}^*,M_{L_c}^*|\big\{\{A_{i^\prime}^*\}_{{i^\prime}\in\mathcal{N},{i^\prime}\neq { i}},A_{ i}(L_c)\big\}\big), \notag \\
  &\quad \quad \quad \quad \quad \quad  \quad \quad \quad \quad \quad \quad \quad \quad \quad \quad \quad \quad \quad \quad  \quad \forall { i} \in \mathcal{N}
,
\label{i_0}
\end{align}
and where $\mathcal{U}_{L_c}\big(\boldsymbol{q}_{L_c}^*,M_{L_c}^*|\big\{\{A_{i^\prime}^*\}_{{i^\prime}\in\mathcal{N},{i^\prime}\neq { i}},A_{ i}(L_c)\big\}\big)$ represents the optimal solution to Problem $\mathbf{P5}$.
% and set $ L_{c}^{i_0} =  L_c\, $.
This guarantees the constraint $\mathrm{C3}^\prime$, i.e., \(\max\limits_i L_c^i = L_c\), while minimizing the overall upper bound of global training loss.
\end{theorem}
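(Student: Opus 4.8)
The plan is to exploit the fact that the individual cut layers $\{L_c^i\}$ enter Problem $\mathbf{P3}$ only through the latency constraint $\mathrm{C2}'$ via the coefficients $A_i(L_c^i)$, whereas the objective $\mathcal{U}_{L_c}$ and the per-client bounds $\mathrm{C4}$ depend on the maximal cut layer $L_c$ but not on the individual choices. First I would record the elementary monotonicity observation: for any client whose cut layer is left free in $[L_{c,\min},L_c]$, the least restrictive choice is the minimizer $A_i^{*}\triangleq\min_{L_c^i\in[L_{c,\min},L_c]}A_i(L_c^i)$, since shrinking $A_i(L_c^i)$ only enlarges the feasible region of $\mathrm{C2}'$ and never affects the objective. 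Hence, dropping $\mathrm{C3}'$ collapses $\mathbf{P3}$ exactly onto $\mathbf{P4}$, which the preceding decomposition theorem reduces to $\mathbf{P4}'$ and $\mathbf{P4}''$.

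Next I would reinstate $\mathrm{C3}'$, i.e. $\max_i L_c^i=L_c$. This constraint forces at least one client to adopt the maximal cut layer, so I would partition the feasible region of $\mathbf{P3}$ by the identity of one such client, say client $i$. On the class indexed by $i$, client $i$ incurs latency weight exactly $A_i(L_c)$, while each $i'\neq i$ remains free on $[L_{c,\min},L_c]$ and optimally takes $A_{i'}^{*}$; this is precisely Problem $\mathbf{P5}$ with designated client $i$, call it $\mathbf{P5}_i$. Therefore $\mathrm{OPT}(\mathbf{P3})=\min_{i\in\mathcal{N}}\mathrm{OPT}(\mathbf{P5}_i)$, the minimizing index is the $i_0$ of Eq.~\eqref{i_0}, and the assignment $L_c^{i_0}=L_c$ together with $L_c^{i'}\in\arg\min_{\ell} A_{i'}(\ell)$ for $i'\neq i_0$ yields a feasible point of $\mathbf{P3}$ attaining this value, hence an optimal one.

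To evaluate each $\mathrm{OPT}(\mathbf{P5}_i)$ I would note that $\mathbf{P5}_i$ has exactly the structure of $\mathbf{P4}$ --- the simplex normalization $\mathrm{C1}'$, one linear latency constraint $\mathrm{C2}^{\dagger}$, the hyperbolic bounds $\mathrm{C4}$, and the separable objective $\mathcal{U}_{L_c}(\cdot,M_{L_c})=\sum_i m_i^2\widetilde{C}_{i,L_c}(M_{L_c})/q_i(L_c)+2\vartheta/(\gamma R)$ --- the sole change being that the pinned client's latency weight is $A_i(L_c)$ rather than $A_i^{*}$. Consequently the earlier decomposition applies verbatim: split $\mathcal{N}$ into $\mathcal{P}_c=\{i:\widetilde{C}_{i,L_c}(M_{L_c})>0\}$ and $\mathcal{N}_c=\{i:\widetilde{C}_{i,L_c}(M_{L_c})<0\}$, solve the strictly convex subproblem $\mathbf{P4}'$ and the monotone subproblem $\mathbf{P4}''$ coupled through $M_{L_c}$, and concatenate; since all coupling constraints are linear and each subproblem has a unique minimizer, the concatenation is globally optimal for $\mathbf{P5}_i$. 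Enumerating $i\in\mathcal{N}$ costs $O(N)$ such solves, each polynomial-time, which together with the outer $O(L)$ sweep over $L_c$ gives the claimed polynomial-time procedure.

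The step I expect to be the main obstacle is the ``without loss of generality'' reduction from ``some subset of clients attains $L_c$'' to ``exactly one designated client is pinned at $L_c$''. This needs the monotonicity argument above: because $A_{i'}(L_c)\ge A_{i'}^{*}$ by definition of the minimizer, replacing the coefficient of any extra pinned client by the smaller $A_{i'}^{*}$ only relaxes $\mathrm{C2}^{\dagger}$ and leaves the $\{L_c^i\}$-independent objective untouched, so single-client pinning is never worse; feasibility of $\mathrm{C3}'$ is preserved because the retained client still realizes the maximum. A secondary subtlety I would address is feasibility transfer --- if $\mathbf{P3}$ is feasible then at least one $\mathbf{P5}_i$ is feasible, namely for the index achieving the maximum in a $\mathbf{P3}$-optimal point --- and the fact that the sign-based partition $\mathcal{P}_c/\mathcal{N}_c$ depends on the variable $M_{L_c}$, so in practice the partition and $M_{L_c}$ must be resolved jointly or by iteration, exactly as in the proof of the preceding decomposition theorem.
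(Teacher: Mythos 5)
Your proposal is correct and takes essentially the same route as the paper: relax $\mathrm{C3}^\prime$ to reduce $\mathbf{P3}$ to $\mathbf{P4}^\prime$/$\mathbf{P4}^{\prime\prime}$, then recover the constraint by enumerating the pinned client through Problem $\mathbf{P5}$ and selecting $i_0$ with the smallest optimal value. Your explicit monotonicity argument showing $\mathrm{OPT}(\mathbf{P3})=\min_i \mathrm{OPT}(\mathbf{P5}_i)$ (and that pinning a single client at $L_c$ is without loss of generality since $A_{i'}(L_c)\ge A_{i'}^*$ and the objective is independent of individual cut layers) is a more careful rendering of the step the paper only sketches.
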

\begin{proof}
 By transforming $\mathrm{C3}^\prime$, the decomposed subproblems $\textbf{P4}^\prime$ and $\textbf{P4}^{\prime \prime}$ yield a globally optimal solution $\{A_i^*\}_{i\in\mathcal{N}}$ under the relaxed constraints. If some clients satisfy $L_c^i = L_c$, $\mathrm{C3}^\prime$, inherently holds; otherwise, constraint recovery is required.
Since we force client $i$ to adopt $L_c$, which increases its latency $A_{{ i}}(L_c) \geq A_{{ i}}^*  $, potentially violating the latency constraint $\text{C2}^\prime$, we re-optimizing  Problem  \textbf{P5} by imposing Constraint $\mathrm{C2}^{\dagger}$.
Specifically, we move each client \(i\)'s splitting point from its current value \(L_c^i\) to \(L_c\) (with all other clients' splitting points remaining unchanged), and then compute the minimum value of the new optimization Problem  \textbf{P5}.
Since there are \(N\) clients, this computation is repeated \(N\) times.
 By comparing \(N\) corresponding minimum values across different clients with  Eq.~\eqref{i_0}, we can identify the client \(i_0\) that results in the smallest increase. 
 Thus, 
the scheme guarantees the constraint \(
\max_{i}\{L_c^i\} = L_c\ ( L_{c,\min} \leq L_c^i \leq L_c)
\) and preserves the global optimality of Problem \(\mathbf{P3}\).

 %  By transforming $\mathrm{C3}^\prime$, the decomposed subproblems $\textbf{P4}^\prime$ and $\textbf{P4}^{\prime \prime}$ yield a globally optimal solution $\{A_i^*\}_{i\in\mathcal{N}}$ under the relaxed constraints. If some clients satisfy $L_c^i = L_c$, $\mathrm{C3}^\prime$, inherently holds; otherwise, constraint recovery is required.
 %  The proof is then to show that the constraint recovery yields an optimal solution to Problem $\mathbf{P3}$.
 % Since we force client $i$ to adopt $L_c$, which increases its latency $A_{{ i}}(L_c) \geq A_{{ i}}^*  $, potentially violating the latency constraint $\textbf{C2}^\prime$, we re-optimizing $\boldsymbol{q}_{L_c}$ by imposing Constraint $\mathrm{C2}^{\dagger}$. Furthermore, selecting $i_0 $ according to Eq.~\eqref{i_0} helps minimize the increment, thus preserving  global optimality of Problem $\textbf{P3}$.  
    % \textcolor{blue}{need reason 能分成两个问题的原因 能先不考虑$L_c$所有都选$L_c^i$的原因(decouple more optimized space) M的原因}
\end{proof}
\textcolor{black}{Fig. \ref{alg_flow_optimal_solution_fig} gives the high-level workflow, while Algorithm \ref{alg:short} provides the step-by-step implementation details. Specifically, the flowchart presents a three-level nested optimization procedure: (i) an outer loop that enumerates candidate maximum cut layers, (ii) a middle loop that performs a bisection search for the auxiliary variable 
$M$, and (iii) an inner loop that optimizes the client sampling probabilities 
$\bm q$ (via updating the Lagrange multipliers) for a fixed 
$M$. In the inner loop, clients are first classified into positive and negative subsets, and the corresponding optimal sampling probabilities are computed, respectively. 
The algorithm further computes the normalization/latency errors, updates 
$\lambda$ and 
$\nu$, and repeats until the error tolerances are satisfied. 
After the bisection on 
$M$ converges, it checks the feasibility condition \(\max\limits_{i} \{L_{c,i}\} < L_c\). If the condition holds, \textbf{Theorem 4} is applied to enforce feasibility. Finally, the algorithm updates the best solution if improved, otherwise it proceeds to the next candidate in the outer loop and terminates when all candidates have been examined.}
\begin{figure}[t]
\centering\includegraphics[width=8.5cm]{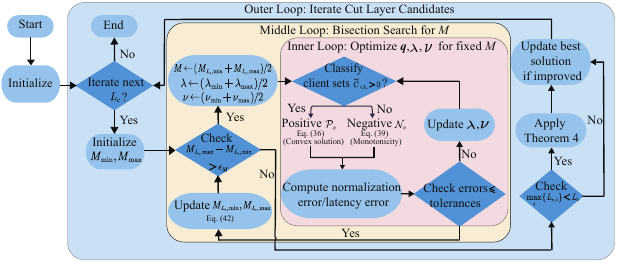}
        
	\centering\caption{ \textcolor{black}{Flowchart of the optimal client sampling and model splitting.} }
		\label{alg_flow_optimal_solution_fig}
        \vspace{-0.8em}
\end{figure}

\subsection{Solution to Problem \texorpdfstring{$\mathbf{P4}^{\prime}$}{P4'} of Positive Subset \texorpdfstring{${\mathcal{P}_c}$}{P\_c}}
In the following, we provide the solution approach corresponding to positive subset $\mathcal{P}_c$.

\textit{c1) Inner Algorithm - Convex Subproblem Solver:}
For each \(i \in {\mathcal{P}_c}\), we define
\(
f_i^{\mathcal{P}_c}(q_{i}(L_c)) \;=\; \frac{m_i^2\,\widetilde{C}_{i,L_c}(M_{L_c})}{q_{i}(L_c)}
\).
Since \(q_{i}(L_c)>0\), \(m_i^2>0\), and \(\widetilde{C}_{i,L_c}(M_{L_c})>0\), it follows that
\( f_i^{{\mathcal{P}_c}\prime\prime}(q_{i}(L_c)) > 0 \) for all \( q_{i}(L_c) > 0 \).
This strictly positive second-order derivative implies that each \( f_i^{\mathcal{P}_c}(q_{i}(L_c)) \) is strictly convex over its domain.
% For the subset ${\mathcal{P}_c}$, we find that the objective function $\mathcal{U}^{\mathcal{P}_c}_{L_{c}}\big(\boldsymbol{q}_{L_c}^{\mathcal{P}_c},M_{L_c}|\{A_i^*\}_{i\in\mathcal{N}}\big)$ is convex.
Thus, we introduce Lagrange multipliers:
\begin{itemize}
    \item \(\lambda\) for the normalization constraint \(\sum\nolimits_{i\in{\mathcal{P}_c}}{q_{i}(L_c)}=1-\sum\nolimits_{j\in\mathcal{N}_c} q_{j,L_c}\) ($\text{C1}^{\prime\prime}$),
    \item \(\nu\) for the delay constraint \(K\sum_{i=1}^N{q_{i}(L_c)A_i^*}
\leq T\) ($\text{C2}^{\prime\prime}$).
\end{itemize}

Assume that Constraint $\text{C4}^{\prime\prime}$ is achieved by our choice of \(M_{L_c}\), the Lagrangian function can be expressed as
\begin{align}
\mathcal{L}(\bm{q}_{L_c}^{\mathcal{P}_c},\lambda,\nu) = &\sum_{i\in{\mathcal{P}_c}} \frac{m_i^2}{q_{i}(L_c)}\,\widetilde{C}_{i,L_c}(M_{L_c}) + \lambda\bigg(\sum_{i\in{\mathcal{P}_c}} q_{i}(L_c)+\notag\\&\zeta - 1\bigg) + \nu\left(K\sum_{i=1}^N{q_{i}(L_c)A_i^*}
-T\right) + \frac{2\vartheta}{\gamma R},
\end{align}
where $\zeta=\sum_{j\in\mathcal{N}_c} q_{j,L_c}$.

Taking the first-oder derivative with respect to \(q_{i}(L_c)\) (for \(q_{i}(L_c)>0,\ i\in{\mathcal{P}_c}\)) and setting it to zero, we derive
\begin{align}
-\frac{m_i^2\,\widetilde{C}_{i,L_c}(M_{L_c})}{q_{i}(L_c)^2} + \lambda + \nu\,(KA_i^*) = 0.
\label{q_i_inner}
\end{align}
Moreover, Constraint $\text{C4}^{\prime\prime}$ requires that
\begin{align}
q_{i}^*(L_c) \geq \frac{m_i^2}{M_{L_c}(1-a_i)(1-p_i)(1-\varphi_i)},\quad \ \forall i \in \mathcal{N}.
\label{q_i_c4}
\end{align}

Thus, combining \eqref{q_i_inner} and \eqref{q_i_c4}, the semi-closed form solution for clients in \({\mathcal{P}_c}\) is given by
\begin{align}
q_{i}^*(L_c,M_{L_c}) = &\max\Biggl\{ \frac{m_i^2}{M_{L_c}(1-a_i)(1-p_i)(1-\varphi_i)}, \notag
\\
&\quad \quad  \quad     \quad\sqrt{\frac{m_i^2\,\widetilde{C}_{i,L_c}(M_{L_c})}{\lambda+\nu\,(KA_i^*)}} \Biggr\},\ \forall i\in{\mathcal{P}_c}.
\label{q_closed}
\end{align}

\textit{c2) Outer Algorithm - Nested Bisection Search:}
In the outer loop, we initialize lower and upper bounds \(M_{L_c,\min}\) and \(M_{L_c,\max}\), respectively, based on theoretical limits and set \(M_{L_c} = (M_{L_c,\min}+M_{L_c,\max})/{2}\). For the fixed \(M_{L_c}\), the inner loop adjusts the Lagrange multipliers \(\lambda\) and \(\nu\) via bisection \cite{eiger1984bisection} to ensure that the computed \(\bm q_{L_c}^{\mathcal{P}_c}\) satisfies both the normalization constraint \(\sum_{i=1}^N q_{i}(L_c) = 1\) and the latency constraint \(K\sum_{i=1}^N{q_{i}(L_c)A_i^*}
<T\). Specifically, for each client \(i\) with fixed $M_{L_c}$, \(q_{i}(L_c,M_{L_c})\) is updated using the closed-form expression in \eqref{q_closed}.
After obtaining \(\bm{q}_{L_c}^{\mathcal{P}_c}(M_{L_c})\), we provide the candidate $M_{L_c,\text{can}}^{\mathcal{P}_c}$ for updating $M_{L_c}$  as follows:
\begin{equation}
M_{L_c,\text{can}}^{\mathcal{P}_c} = \max_{i\in{\mathcal{P}_c}} \left\{ \frac{m_i^2}{q_{i}(L_c,M_{L_c})(1-a_i)(1-p_i)(1-\varphi_i)} \right\}.
\label{eq:M_update_positive}
\end{equation}
Furthermore, \(M_{L_c,\text{can}} = \max\{M_{L_c,\text{can}}^{\mathcal{P}_c},M_{L_c,\text{can}}^{\mathcal{N}_c}\}\) and $M_{L_c,\text{can}}^{\mathcal{N}_c}$ is obtained in Section~\ref{solu_appro}.D.
If \(M_{L_c,\text{can}} \le M_{L_c}\), this indicates that the current \(M_{L_c}\) is overly conservative (i.e., larger than necessary to satisfy the constraints), so we reduce the upper bound by setting \(M_{\max} = M_{L_c}\); otherwise, if \(M_{L_c,\text{can}} > M_{L_c}\), the current \(M_{L_c}\) is too small, and we update the lower bound by setting \(M_{\min} = M_{L_c}\). The nested bisection process continues until \(M_{\max} - M_{\min} < \epsilon_M\).
It is obvious that when achieving convergence, the optimal \(M_{L_c}^*\) satisfies
\begin{equation}
M_{L_c}^* = \max_{i\in\mathcal{N}} \left\{ \frac{m_i^2}{q_{i}^*(L_c)\cdot(1-a_i)(1-p_i)(1-\varphi_i)} \right\}.
\label{M_P}
\end{equation}

\begin{algorithm}[t]
\begin{spacing}{0.8}
\caption{Optimal  Model Splitting  and Client Sampling Algorithm.}
\label{alg:short}
	\renewcommand{\algorithmicrequire}{\textbf{Input:}}
	\renewcommand{\algorithmicensure}{\textbf{Output:}}
\begin{algorithmic}[1]
\REQUIRE For each client \(i=1,\dots,N\): parameters \(m_i,\,a_i,\,p_i,\,\varphi_i,\,\widetilde{C}_i,\); system constants \(K,\,R,\,T\); tolerances \(\epsilon_M,\,\epsilon_\lambda,\,\epsilon_\nu\); initial bounds \(M_{\min}\) and \(M_{\max}\) for \(M\); bisection intervals \([\lambda_{\min},\lambda_{\max}]\)  and \\ \([\nu_{\min},\nu_{\max}]\); candidate set of maximum split points \\ \(\mathcal{L}_c\); and \(L_{c,i}^{\min}\) for each client.
\ENSURE Optimal maximum split point \(L_c^*\), auxiliary \\ variable \(M^*\), decision vector \(\bm{q}^*\), and cut \\ layers \(\{L_{c,i}^*\}\).
\STATE \(F_{\text{best}} \gets \infty\).
\FOR{each candidate \(L_c \in \mathcal{L}_c\)}
  \STATE \textbf{// Step 1: Optimize \(M_{L_c}\) and compute \(\bm{q}_{L_c}\) via nested bisection}
  \STATE Set \(M_{L_c,\min} \gets M_{\min}\) and \(M_{L_c,\max} \gets M_{\max}\).
  \WHILE{\(M_{L_c,\max} - M_{L_c,\min} > \epsilon_M\)}
    \STATE Initialize \(M \gets {(M_{L_c,\min} + M_{L_c,\max})}/{2}\), \(\lambda \gets {(\lambda_{\min}+\lambda_{\max})}/{2}\), \(\nu \gets ({\nu_{\min}+\nu_{\max}})/2\).
    \REPEAT
      \FOR{each client \(i=1,\dots,N\)}
          \STATE Set \(q_{i}(L_c)\) using Eq.~\eqref{q_closed} when \(\widetilde{C}_{i,L_c}>0\).
          \STATE Set \(q_{i}(L_c)\) using Eq.~\eqref{q_another} when \(\widetilde{C}_{i,L_c}<0\).
      \ENDFOR
      \STATE Compute normalization error: \(e_1 \gets \sum\limits_{i=1}^N q_i - 1\).
      \STATE Compute latency error: \(e_2 \gets K\sum\limits_{i=1}^N q_i\, A_i^* - T\).
      \STATE Adjust \(\lambda\) and \(\nu\) according to \(e_1\) and \(e_2\).
    \UNTIL{\(|e_1|\le \epsilon_\lambda\) and \(|e_2|\le \epsilon_\nu\)}
    \STATE Update \(M_{L_c,\min}\) or \(M_{L_c,\max}\) using the candidate update rule Eq. \eqref{m_update};
  \ENDWHILE
  % \STATE Let \(M_{L_c} \gets M\) and \(\bm{q}_{L_c} \gets (q_1,\dots,q_N)\).
  
  \STATE \textbf{// Step 2: Determine each client's cut layer}
  \IF{\(\max\limits_{i} \{L_{c,i}\} < L_c\)}
    \STATE Identify \(i_0 \) according to \textbf{Theorem 4} and force \(L_{c}^{i_0} \gets L_c\).
  \ENDIF
  
  \STATE  Update best solution if current candidate improves \\ the objective: set \(L_c^* \gets L_c\), \(M^* \gets M_{L_c}\), \\ \(\bm{q}^* \gets \bm{q}_{L_c}\), and \(\{L_{c,i}^*\} \gets \{L_{c,i}\}\).
\ENDFOR
\RETURN \(L_c^*,\, M^*,\, \bm{q}^*,\, \{L_{c,i}^*\}\).
\end{algorithmic}
\end{spacing}
\end{algorithm}

\subsection{Solution to Problem \texorpdfstring{$\mathbf{P4}^{\prime\prime}$}{P4''} of Negative Subset \texorpdfstring{$\mathcal{N}_c$}{N\_c}}
Next, we provide the solution for the case of negative subset $ \mathcal{N}_c$.
For each \(i \in \mathcal{N}_c\), we define
\(
f_i^{\mathcal{N}_c}(q_{i}(L_c)) \;=\; \frac{m_i^2\,\widetilde{C}_{i,L_c}(M_{L_c})}{q_{i}(L_c)}
\).
% For devices in \(\mathcal{N}_c\), the objective function is
% $
% \mathcal{U}^{\mathcal{N}_c}_{L_{c}}\big(\boldsymbol{q}_{L_c}^{\mathcal{N}_c},M_{L_c}\big)
% $.
When fixed $M_{L_c}$, the second-order derivative of some client $i\ (i \in {\mathcal{N}_c})$ is
$
 f_i^{\mathcal{N}_c\prime\prime}(q_{i}(L_c))
 = \frac{2\,m_i^2\,\widetilde{C}_{i,L_c}}{q_i^3} < 0$,
    which implies that $f_i^{\mathcal{N}_c}(q_{i}(L_c))$ is concave.
    The first-order derivative is $ f_i^{\mathcal{N}_c\prime}(q_{i}(L_c))= -\frac{m_i^2\,\widetilde{C}_i}{q_i^2}  > 0
    $, and hence $ f_i^{\mathcal{N}_c}(q_{i}(L_c))$ is monotonically increasing with respect to $q_{i}(L_c)$.
% Due to the monotonic increase, minimizing $f_i(q_i)$ leads us to choose the smallest feasible $q_i$, which is imposed by Constraint \text{C3}
% \begin{equation}
% q_i \ge \frac{m_i^2}{M}\frac{1}{(1-a_i)(1-p_i)(1-\varphi_i)}.
% \end{equation}
Thus, the minimization of objective function $\mathcal{U}^{\mathcal{N}_c}_{L_{c}}\big(\boldsymbol{q}_{L_c}^{\mathcal{N}_c},M_{L_c}|\{A_i^*\}_{i\in\mathcal{N}}\big)$ is achieved by choosing the smallest feasible \(q_{i}(L_c)\), which is lower bounded by Constraint $\mathrm{C4}^{\prime \prime \prime}$ and is expressed by
\begin{equation}
q_{i}^*(L_c,M_{L_c}) = \frac{m_i^2}{M_{L_c}}\frac{1}{(1-a_i)(1-p_i)(1-\varphi_i)},\quad \forall\, i \in \mathcal{N}_c.
\label{q_another}
\end{equation}
% With the normalization constraint,
% i.e.\ \(\sum_{i=1}^{N}q_i = 1\),
% we have
% \(
% \sum_{i\in{\mathcal{P}_c}\cup\mathcal{N}_c} \frac{m_i^2}{M}\frac{1}{(1-a_i)(1-p_i)(1-\varphi_i)} = 1.
% \)
Thus, for clients in \(\mathcal{N}_c\), \(M_{L_c,\text{can}}^{\mathcal{N}_c}\) has the closed-form solution, which is given by
\begin{equation}
M_{L_c,\text{can}}^{\mathcal{N}_c} = \sum_{i\in\mathcal{N}_c} \frac{m_i^2}{(1-a_i)(1-p_i)(1-\varphi_i)}.
\label{eq:M_closed_negative}
\end{equation}

% To satisfy the normalization constraint
% $
% \sum_{i=1}^N q_i = 1
% $,
% we must have
% \[
% \sum_{i=1}^N \frac{m_i^2}{M}\frac{1}{(1-a_i)(1-p_i)(1-\varphi_i)} = 1.
% \]

% This yields the optimal value of $M$ as
% \begin{equation}
% M^* = \sum_{i=1}^N \frac{m_i^2}{(1-a_i)(1-p_i)(1-\varphi_i)}.
% \label{M_closed_case2}
% \end{equation}

% Thus,
% the optimal client sampling strategy $\{q_i^*, M^*\}$ is given by:
% \begin{align}
% q_i^* = \begin{cases}
% \displaystyle \max\!\Biggl\{ \frac{m_i^2}{M}\frac{1}{(1-a_i)(1-p_i)(1-\varphi_i)},\\ \quad \quad \quad \quad \ \sqrt{\frac{m_i^2\,\widetilde{C}_i}{\lambda + \nu\,(KA_i)}} \Biggr\},\  \forall i\in{\mathcal{P}_c}, \\
% \displaystyle \frac{m_i^2}{M}\frac{1}{(1-a_i)(1-p_i)(1-\varphi_i)}, \  \forall i\in\mathcal{N}_c.
% \end{cases}
% \end{align}
% and
% \begin{align}
% M^* = \max \bigg\{&\max_{i\in{\mathcal{P}_c}} \left\{ \frac{m_i^2}{q_i^*}\frac{1}{(1-a_i)(1-p_i)(1-\varphi_i)} \right\}, \notag\\
% & \quad \sum_{i\in\mathcal{N}_c} \frac{m_i^2}{(1-a_i)(1-p_i)(1-\varphi_i)}
% \bigg\}.
% \label{m_update}
% \end{align}

\begin{theorem}
Algorithm 2, with the  solution \( \bm{q}_{L_c}^*\) in each iteration   given by 
\begin{align}
q_{i}^*(L_c) = \begin{cases}
\displaystyle \max\!\Biggl\{ \frac{m_i^2}{M_{L_c}}\frac{1}{(1-a_i)(1-p_i)(1-\varphi_i)},\\ \quad \quad \quad \quad \ \sqrt{\frac{m_i^2\,\widetilde{C}_{i,L_c}(M_{L_c})}{\lambda + \nu\,(KA_i^*)}} \Biggr\},\  \forall i\in{\mathcal{P}_c}, \\
\displaystyle \frac{m_i^2}{M_{L_c}}\frac{1}{(1-a_i)(1-p_i)(1-\varphi_i)}, \  \forall i\in\mathcal{N}_c,
\end{cases}
\end{align}
and the updating rule for $M_{L_c}$ given by
\begin{align}
M_{L_c,\mathrm{can}} = \max \bigg\{&\max_{i\in{\mathcal{P}_c}} \big\{ \frac{m_i^2}{q_{i}^*(L_c)}\frac{1}{(1-a_i)(1-p_i)(1-\varphi_i)} \big\}, \notag\\
& \quad \sum_{i\in\mathcal{N}_c} \frac{m_i^2}{(1-a_i)(1-p_i)(1-\varphi_i)}
\bigg\},
\label{m_update}
\end{align}
obtains the global optimal solution to Problem $\mathbf{P4}$.
\end{theorem}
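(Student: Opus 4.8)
The plan is to show that the nested bisection in \textbf{Algorithm~\ref{alg:short}} reproduces the full KKT system of Problem $\mathbf{P4}$, and then to lift this to global optimality by invoking the preceding equivalence theorem, which states that solving $\mathbf{P4}$ is equivalent to solving $\mathbf{P4}'$ (the positive block over ${\mathcal{P}_c}$) and $\mathbf{P4}''$ (the negative block over $\mathcal{N}_c$). I would argue from the inside out in three layers: (i) for a \emph{fixed} coupling value $M_{L_c}$ and fixed multipliers $(\lambda,\nu)$, the per-client update \eqref{q_closed}--\eqref{q_another} is the exact minimizer of the inner, partially separable problem; (ii) the inner bisection over $(\lambda,\nu)$ drives $\bm q_{L_c}$ to the unique point at which $\mathrm{C1}''$ holds with equality and $\mathrm{C2}''$ holds with complementary slackness; and (iii) the outer bisection over $M_{L_c}$, propelled by the update rule \eqref{m_update}, converges to the unique self-consistent value $M_{L_c}^{*}$ satisfying \eqref{M_P}, at which the KKT conditions of $\mathbf{P4}$ all hold simultaneously.

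For the inner layer I would fix $M_{L_c}$. On ${\mathcal{P}_c}$ each summand $m_i^2\widetilde C_{i,L_c}(M_{L_c})/q_i(L_c)$ is strictly convex and strictly decreasing in $q_i(L_c)>0$ (its second derivative $2m_i^2\widetilde C_{i,L_c}/q_i^3$ is positive because $\widetilde C_{i,L_c}>0$ there), and the remaining constraints $\mathrm{C1}''$, $\mathrm{C2}''$, $\mathrm{C4}''$ are affine; assuming $T$ admits a strictly feasible point so that Slater's condition holds, KKT is necessary and sufficient. Stationarity \eqref{q_i_inner} yields $q_i=\sqrt{m_i^2\widetilde C_{i,L_c}/(\lambda+\nu K A_i^{*})}$, and intersecting it with the $\mathrm{C4}''$ floor \eqref{q_i_c4} gives exactly \eqref{q_closed}. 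On $\mathcal{N}_c$ the summand is concave but monotonically increasing in $q_i(L_c)$ (first derivative $-m_i^2\widetilde C_{i,L_c}/q_i^2>0$), so the minimizer is the smallest feasible value, i.e.\ the $\mathrm{C4}'''$ floor \eqref{q_another}; substituting this into the normalization produces the constant consistency level \eqref{eq:M_closed_negative}. A subtlety I would handle explicitly is that $\widetilde C_{i,L_c}(M_{L_c})$ is affine and increasing in $M_{L_c}$, so the sign of $\widetilde C_{i,L_c}$ can change at most once as $M_{L_c}$ grows; hence the partition ${\mathcal{P}_c}\cup\mathcal{N}_c$ is monotone in $M_{L_c}$ and is resolved unambiguously at $M_{L_c}^{*}$.

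For the multiplier layer, substituting the closed forms, both the normalization sum $\sum_i q_i(L_c;\lambda,\nu)$ and the latency $K\sum_i q_i(L_c;\lambda,\nu)A_i^{*}$ are continuous and weakly monotone decreasing in each of $\lambda$ and $\nu$, through the $(\lambda+\nu K A_i^{*})^{-1/2}$ terms and the $\max$ with the $\mathrm{C4}$ floor, so a monotone bisection over $(\lambda,\nu)$ has a unique root. I would split into the two complementary-slackness cases: either the floor-clipped unconstrained solution already meets $\mathrm{C2}''$, whence $\nu=0$ and $\lambda$ alone enforces $\mathrm{C1}''$, or $\mathrm{C2}''$ is active and $(\lambda,\nu)$ are pinned by the two equalities; in both cases dual feasibility and complementary slackness hold, so the returned $\bm q_{L_c}(M_{L_c})$ is the exact optimizer of $\mathbf{P4}'$ and $\mathbf{P4}''$ for that $M_{L_c}$.

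\textbf{The hard part is the outer layer.} Writing $g(M_{L_c})$ for the right-hand side of \eqref{m_update}, the tightest $M$ implied by the just-computed $\bm q_{L_c}(M_{L_c})$, I would prove that $g$ is continuous and that the residual $g(M_{L_c})-M_{L_c}$ is positive for small $M_{L_c}$, negative for large $M_{L_c}$, and changes sign exactly once, at $M_{L_c}^{*}$; this single-crossing property is precisely what makes the bracket $[M_{L_c,\min},M_{L_c,\max}]$ invariant under the update in \textbf{Algorithm~\ref{alg:short}} (if $g(M_{L_c})\le M_{L_c}$ then $M_{L_c}^{*}\le M_{L_c}$, otherwise $M_{L_c}^{*}>M_{L_c}$) and hence forces the outer bisection to converge to $M_{L_c}^{*}$. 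The difficulty is that raising $M_{L_c}$ triggers three simultaneous effects: it relaxes every $\mathrm{C4}$ floor \eqref{q_i_c4}, it inflates every objective coefficient $\widetilde C_{i,L_c}(M_{L_c})$, and it can migrate a client from $\mathcal{N}_c$ into ${\mathcal{P}_c}$; proving the single crossing of $g(M_{L_c})-M_{L_c}$ therefore requires tracking how the $\max$ in \eqref{eq:M_update_positive} and the constant \eqref{eq:M_closed_negative} respond jointly. The key observations are that on floor-active clients $m_i^2/(q_i^{*}(L_c)(1-a_i)(1-p_i)(1-\varphi_i))$ collapses to $M_{L_c}$ while on stationarity-active clients it is strictly below $M_{L_c}$ and continuous, which squeezes the fixed point between the two candidate expressions in \eqref{M_P}. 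Once $M_{L_c}=M_{L_c}^{*}$, equation \eqref{M_P} holds with the correct $\mathrm{C4}$ activity pattern, and reassembling the ${\mathcal{P}_c}$ and $\mathcal{N}_c$ partial solutions through the linear constraints $\mathrm{C1}$ and $\mathrm{C2}$ yields a feasible point of $\mathbf{P4}$ satisfying its entire KKT system; by the equivalence $\mathbf{P4}\Leftrightarrow\{\mathbf{P4}',\mathbf{P4}''\}$, the strict convexity of the ${\mathcal{P}_c}$ block, and the monotone-boundary optimality of the $\mathcal{N}_c$ block, this KKT point is the global minimizer of $\mathbf{P4}$, which is the claim.
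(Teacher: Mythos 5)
Your proposal is correct and takes essentially the same route as the paper's proof: split clients by the sign of $\widetilde{C}_{i,L_c}$, solve the strictly convex ${\mathcal{P}_c}$ block via KKT stationarity clipped at the $\mathrm{C4}$ floor (Eq.~\eqref{q_closed}), set the monotonically increasing $\mathcal{N}_c$ block to its floor (Eq.~\eqref{q_another}), and couple the two subproblems through the nested bisection on $M_{L_c}$ terminating at the self-consistent value in Eq.~\eqref{M_P}. Your outer-layer analysis (continuity and single crossing of the residual $g(M_{L_c})-M_{L_c}$, bracket invariance under client migration between $\mathcal{N}_c$ and ${\mathcal{P}_c}$) is in fact more ambitious than the paper, which simply asserts convergence of the bisection once $|M_{L_c,\mathrm{can}}-M_{L_c}|<\epsilon_M$; otherwise the argument matches the paper's step for step.
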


\begin{proof}
For clients in the subset \({\mathcal{P}_c}\), the objective function term is strictly convex in \(q_{i}(L_c)>0\). In addition, the normalization constraint 
\(
\sum_{i\in{\mathcal{P}_c}} q_{i}(L_c) = 1-\sum_{j\in\mathcal{N}_c} q_{j,L_c}
\)
and the delay constraint 
\(
K\sum_{i\in{\mathcal{P}_c}} q_{i}(L_c)A_i^* \le T
\)
are linear. Hence, for any fixed \(M_{L_c}\), the inner problem (i.e., Problem \(\mathbf{P4}'\)) forms a convex optimization problem, ensuring that, by the KKT conditions, a unique optimal solution \(\bm{q}_{L_c}^{{\mathcal{P}_c}*}(M_{L_c})\) is obtained from Eq.~\eqref{q_closed}.
For clients in the negative subset \(\mathcal{N}_c\), the objective function is monotonically increasing in \(q_{i}(L_c)\). Thus, the optimal solution \(\bm{q}_{L_c}^{\mathcal{N}_c*}(M_{L_c})\) is obtained from Eq.~\eqref{q_another}.
Notably, even if one considers increasing \(q_{i}(L_c)\) for clients in \(\mathcal{N}_c\) to offer additional slack for devices in \({\mathcal{P}_c}\) (in order to satisfy the normalization constraint \(\sum_{i=1}^{N} q_{i}(L_c)=1\)), such an adjustment would lead to an overall increase in the objective function in two ways: on the one hand, the objective contributions of the devices in \(\mathcal{N}_c\) would directly increase;
on the other hand, \(q_{i}(L_c)\) (\(i \in {\mathcal{P}_c}\)) has to decrease. Since the objective function for the positive subset is strictly convex and decreasing with respect to  \(q_{i}(L_c)\), this would further increase the overall objective value.
Thus, we conclude that selecting the lower bound for \(q_{i}(L_c)\) in the negative subset is the optimal strategy. As a result,  \(\bm{q}_{L_c}^*\) satisfy all the constraints and minimize the original objective function. Moreover, \(M_{L_c}\) is iteratively updated using a nested bisection method (see Eq.~\eqref{m_update}) until the absolute difference between the candidate update value \(M_{L_c,\mathrm{can}}\) and the current \(M_{L_c}\) is less than the predefined tolerance \(\epsilon_M\); i.e., until \(|M_{L_c,\mathrm{can}} - M_{L_c}| < \epsilon_M\), indicating convergence. 
In summary, the combination of decomposing the problem (with convex optimization for \({\mathcal{P}_c}\) and the lower bound selection for \(\mathcal{N}_c\)) and the nested bisection update for \(M_{L_c}\) ensures that the overall solution \((M^*_{L_c},\bm{q}_{L_c}^*)\) satisfies all constraints while minimizing the objective function, thereby achieving global optimality.

\end{proof}

\begin{theorem}
% The overall computational complexity is polynomial:
% $O\Bigl(N\,\log\frac{1}{\epsilon_M}\,\log\frac{1}{\epsilon_\lambda}\,\log\frac{1}{\epsilon_\nu}\Bigr)$,
% which ensures an efficient solution method.
The overall computational complexity of \textbf{Algorithm \ref{alg:short}} is
\[
O\Bigl( N L^2 + L \cdot N\,\log\frac{1}{\epsilon_M}\,\log\frac{1}{\epsilon_\lambda}\,\log\frac{1}{\epsilon_\nu}\Bigr).
\]
\end{theorem}
% \begin{proof}

\begin{proof}
% The overall complexity is analyzed as follows. 
% irst, enumerating the global split point \(L_c \in [L_{c,\min}, L]\) requires \(O(L)\) iterations. For each fixed \(L_c\), an outer bisection search is performed on \(M_{L_c}\) that converges in \(O\left(\log\frac{1}{\epsilon_M}\right)\) iterations, and within each outer iteration, the Lagrange multipliers \(\lambda\) and \(\nu\) are adjusted via inner bisection processes which converge in \(O\left(\log\frac{1}{\epsilon_\lambda}\right)\) and \(O\left(\log\frac{1}{\epsilon_\nu}\right)\) iterations, respectively. In each iteration of these inner loops, updating the decision variables \(\{q_{i,L_c}\}\) for all \(N\) clients takes \(O(N)\) time. Multiplying these factors together results in an overall complexity of 
% \[
% O\Bigl( L \cdot \log\frac{1}{\epsilon_M} \cdot \log\frac{1}{\epsilon_\lambda} \cdot \log\frac{1}{\epsilon_\nu} \cdot N \Bigr),
% \]
% which demonstrates that the proposed algorithm achieves an efficient, polynomial-time solution.
The overall complexity is analyzed as follows.
Enumerating the maximum split point \(L_c \in \{ L_{c,\min}, L_{c,\min}+1, \ldots, L-1, L \}\) requires \(O(L)\) iterations. For each fixed \(L_c\), each of \(N\) clients must enumerate all candidate local split layers in the range \(\{ L_{c,\min}, L_{c,\min}+1, \ldots, L-1, L \}\) to determine its optimal \(L_c^i\),  result in \(O(NL)\). Furthermore, for each fixed \(L_c\), an outer bisection search is performed on \(M_{L_c}\) that converges in \(O\left(\log\frac{1}{\epsilon_M}\right)\) iterations~\cite{Boyd_Vandenberghe_2004}. Moreover, within each outer iteration the Lagrange multipliers \(\lambda\) and \(\nu\) are adjusted via inner bisection searches converging in \(O\left(\log\frac{1}{\epsilon_\lambda}\right)\) and \(O\left(\log\frac{1}{\epsilon_\nu}\right)\) iterations, respectively, with each inner iteration incurring \(O(N)\) time to update the decision variables \(\{q_{i}(L_c)\}\)~\cite{boyd2007notes}. In conclusion, the overall complexity is \(O\bigl(L\cdot [NL + N\,\log\frac{1}{\epsilon_M}\,\log\frac{1}{\epsilon_\lambda}\,\log\frac{1}{\epsilon_\nu}]\bigr) = O\Bigl( N L^2 + L \cdot N\,\log\frac{1}{\epsilon_M}\,\log\frac{1}{\epsilon_\lambda}\,\log\frac{1}{\epsilon_\nu}\Bigr)\), demonstrating that the proposed algorithm achieves an efficient polynomial-time solution.
\end{proof}

\textbf{Insight 4:}  By analyzing the optimal client sampling strategy \(q_{i}^*(L_c)\), we observe that, when \(\widetilde{C}_{i,L_c}>0\), the optimum is determined by taking the maximum between \(\frac{m_i^2}{M}\frac{1}{(1-a_i)(1-p_i)(1-\varphi_i)}\) and an interior solution \(\frac{m_i^2}{M_{L_c}}\frac{1}{(1-a_i)(1-p_i)(1-\varphi_i)}\), while for \(\widetilde{C}_{i,L_c}<0\) the optimal strategy simply adheres to \(\frac{m_i^2}{M_{L_c}}\frac{1}{(1-a_i)(1-p_i)(1-\varphi_i)}\). Moreover, we observe that \( q_i^* \) increases as \( p_i \), \( a_i \), and \( \varphi_i \) increase, which implies that a higher client drop rate results in a higher sampling probability to offset their potential contribution variance. %Intuitively, if client \( i \) possesses a large amount of data, even if it is unstable, client \( i \) needs sampling more frequently to achieve target convergence accuracy.

\section{Performance Evaluation}\label{simulations}
\subsection{Simulation Setup}

In simulations, we use the datasets EMNIST \cite{cohen2017emnistextensionmnisthandwritten} and CIFAR-10~\cite{lecun1998mnist} to evaluate the performance of the proposed optimal model splitting and client sampling \textbf{(OMS+OCS)} for the model ResNet-50~\cite{he2015deepresiduallearningimage} under both IID and non-IID data distributions. 
% For the IID setting, we provide globally balanced yet variable-sized data splits with a dataset partitioning strategy in which the entire dataset is first shuffled and then allocated to clients according to proportions drawn from a Dirichlet distribution \cite{GRINSHPAN2017102} with a moderate concentration (set to 3).  
Notably, for the non-IID setting, each client is assigned a primary class from which approximately 70\% of its samples are drawn, while the remaining samples are sourced from other classes.
 In particular, we consider a network of \(N=100\) clients, and in each epoch \(K=5\) clients are selected for SFL training. Each client uses a mini-batch size of \(b=32\) with a learning rate \(\gamma=0.0001\). Clients upload their client-side models every \(1\) epoch unless otherwise specified, and the minimum client splitting layer is set to \(L_{c}^{i,\min}=4\).
 In addition, the nested bisection algorithm used to optimize model splitting and client sampling is bounded by the following parameters: $M,\ \lambda,\ \nu=[1 \times 10^{-8},1 \times 10^{7}]$.
The lower bound, \(1 \times 10^{-8}\), is set to avoid numerical underflow and prevent potential division-by-zero issues.
The upper bound, \(1 \times 10^{7}\), ensures that the algorithm can explore sufficiently large space to search all feasible solutions.
 The algorithms have been run on a workstation equipped with an AMD Ryzen Threadripper PRO 5975WX and NVIDIA GeForce RTX 4090. For the reader's convenience, the detailed simulation parameters are summarized in Table \ref{parameter_set}.

\begin{table}[t]\label{table_3}
  \centering
  \caption{Simulation Parameters.}
  \renewcommand{\arraystretch}{0.9}{
  \setlength{\tabcolsep}{0.5mm}{
\begin{tabular}{|c|c|c|c|}
\hline
\textbf{Parameter}          & \textbf{Value} & \textbf{Parameter} & \textbf{Value}  \\ \hline
$N$             & $100  $              & $K$                 & $5 $                         \\ \hline
$b $            & 32              & $\gamma$          & $0.0001$                   \\ \hline
$I$               & $1$           & $L_{c}^{i,\min}$                  & 4                       \\ \hline
$M_{\min}$, $\lambda_{\min}$, $\nu_{\min}$        & $1 \times 10^{-8}$             & $M_{\max}$, $\lambda_{\max}$, $\nu_{\max}$              & $1 \times 10^{7}$                      \\ \hline

\end{tabular}}}
\label{parameter_set}
\vspace{-0.8em}
\end{table}

  \begin{figure*}[t!]
\subfigure[\centering Loss with training round (EMNIST, IID settings).] {
\centering\includegraphics[width=3.9cm]{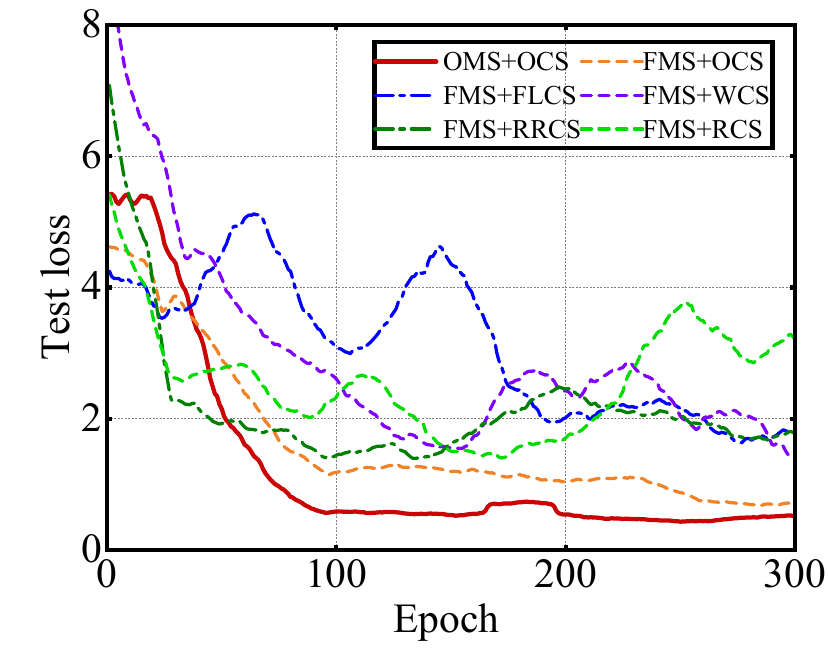}
	}
\subfigure[\centering Loss with training round (EMNIST, non-IID settings).] {
\centering\includegraphics[width=3.9cm]{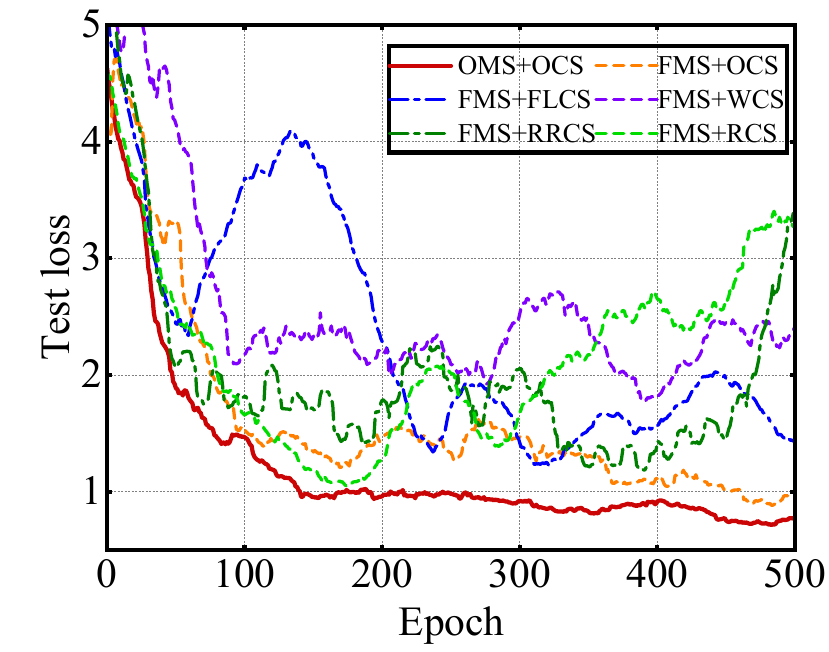}
	}
    \subfigure[\centering Loss with training round (CIFAR-10, IID settings).] {
\centering\includegraphics[width=3.9cm]{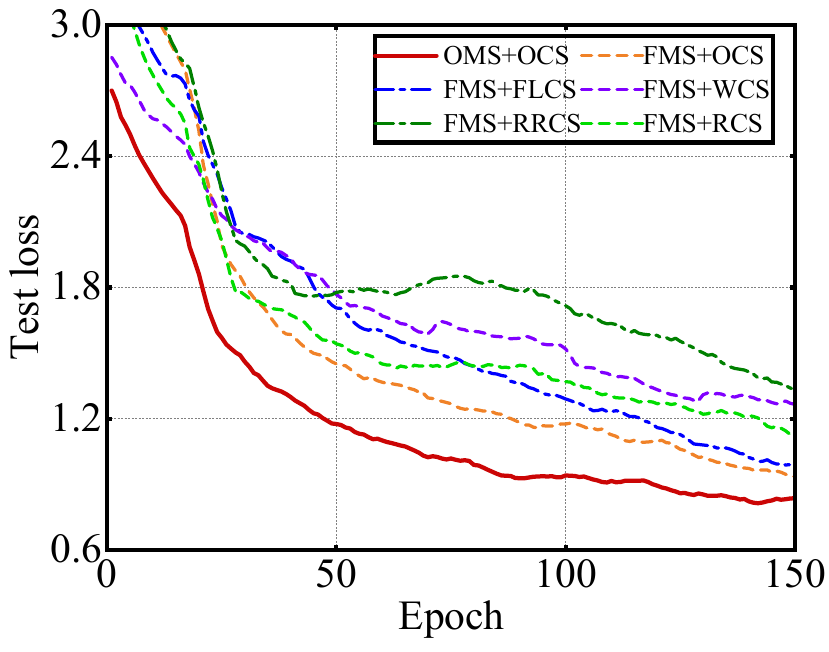}
	}
    \subfigure[\centering Loss with training round (CIFAR-10, non-IID settings).] {
\centering\includegraphics[width=3.9cm]{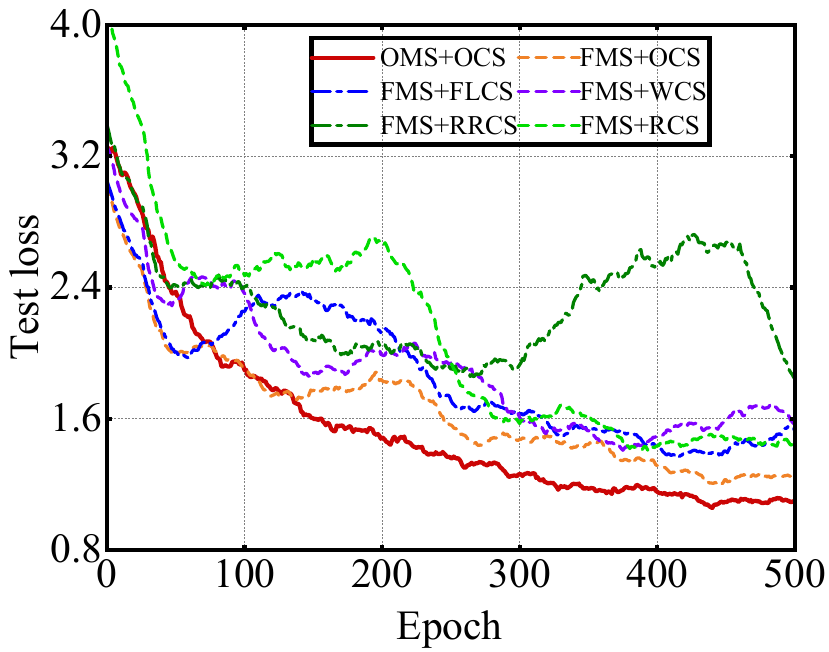}
	}
         \subfigure[\centering Accuracy with training round (EMNIST, IID settings).] {
\centering\includegraphics[width=3.9cm]{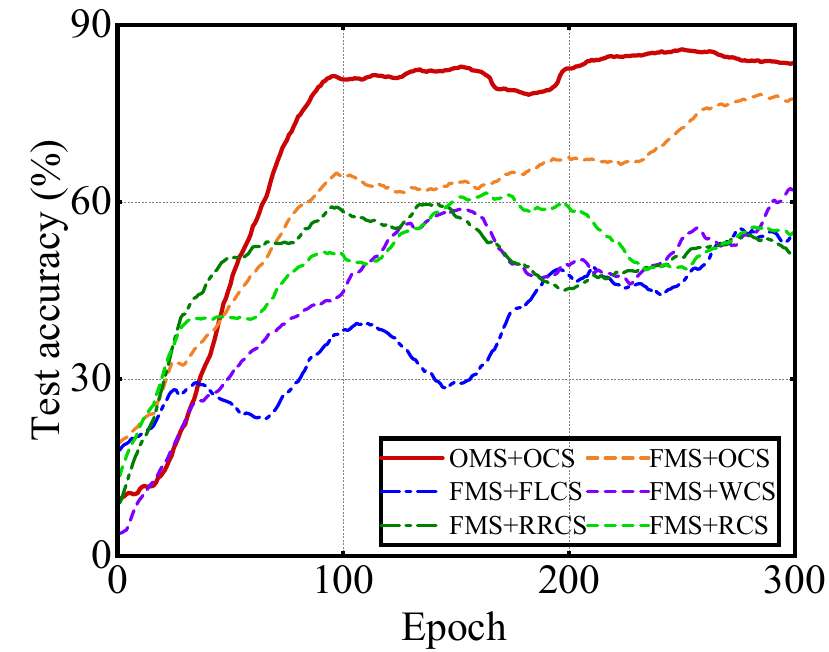}
	}
%      \subfigure[\centering Accuracy with training round (MNIST, non-IID settings).] {
% \centering\includegraphics[width=3.85cm]{figure/mnist-10-noniid-round.pdf}
% 	}
                 \subfigure[\centering Accuracy with training round (EMNIST, non-IID settings).] {
\centering\includegraphics[width=3.9cm]{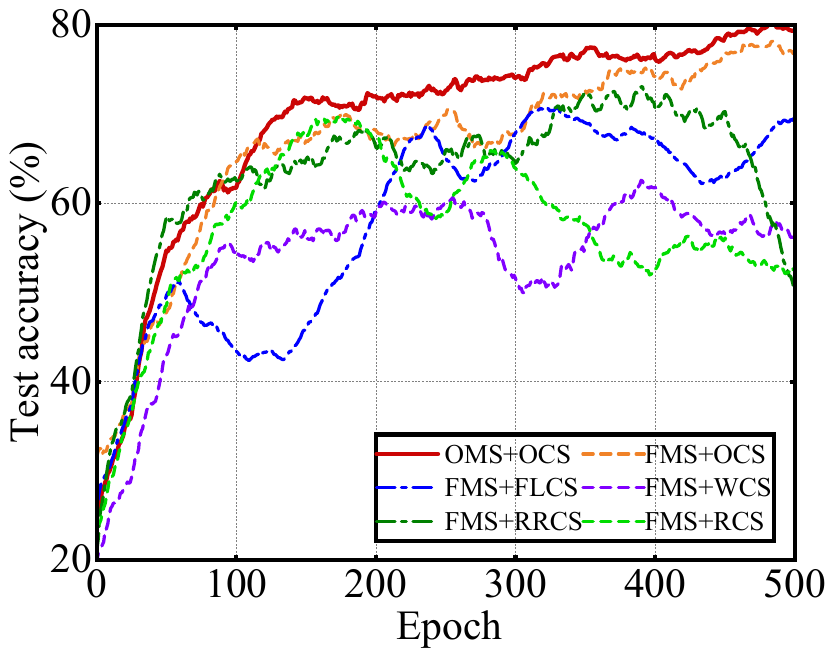}
	}
         \subfigure[\centering Accuracy with training round (CIFAR-10, IID settings).] {
\centering\includegraphics[width=3.9cm]{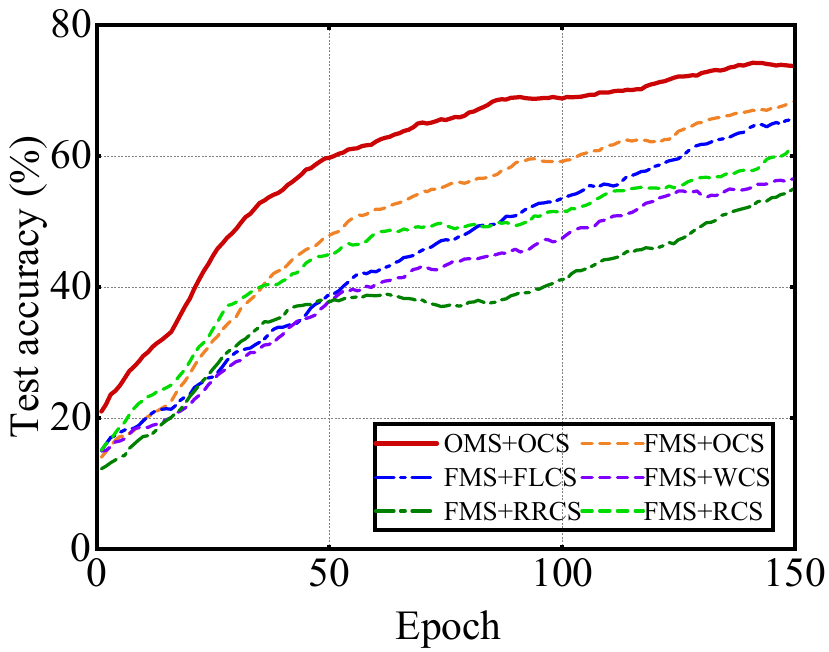}
	}
         \subfigure[\centering Accuracy with training round (CIFAR-10, non-IID settings).] {
\centering\includegraphics[width=3.9cm]{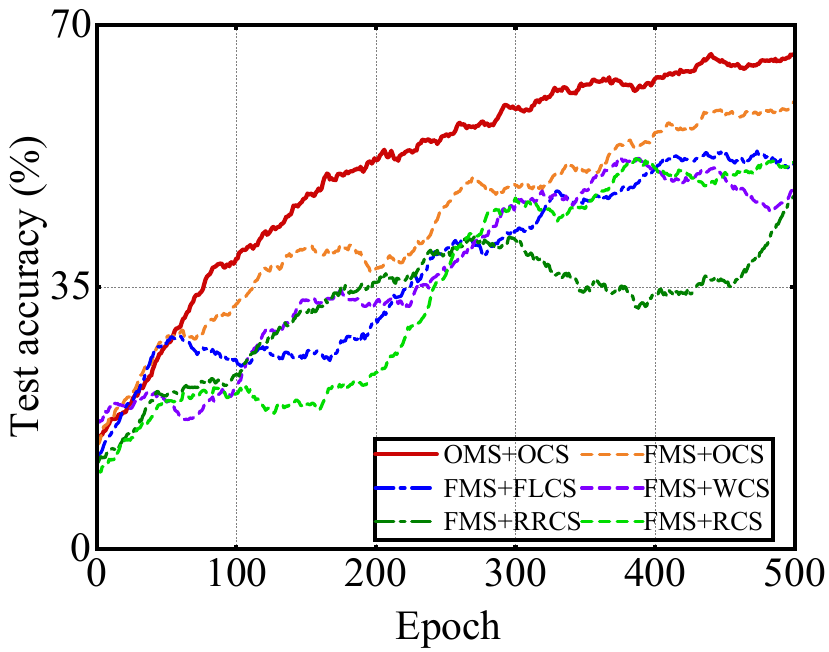}
	}
	\centering\caption{ Evaluating SFL training performance on CIFAR-10 and EMNIST using ResNet-50 with unstable clients under different model splitting and client sampling schemes ($N=100$, $K=5$,  $p_i,\varphi_i,a_i\in[0.2,0.6]$).} 
		\label{Training_curves}
        \vspace{-0.8em}
\end{figure*}

 To demonstrate the advantages of our framework,  we perform a comprehensive comparison between our proposed  \textbf{(OMS+OCS)} method
 and five benchmarks: 
 1) ``Fixed  Model Splitting and Optimal Client Sampling'' (\textbf{FMS-OCS}) 
 selects a fixed cut layer for each client at random while adopting the proposed optimal client sampling method.
 % employs a fixed model splitting strategy while adopting the proposed optimal client sampling method.
 2) ``Fixed Model Splitting and FL-Based Client Sampling'' (\textbf{FMS+FLCS}) selects a fixed cut layer for each client at random, while incorporating an FL-based client sampling strategy that accounts for sample importance~\cite{chen2022optimalclientsamplingfederated}, where the importance is quantified by the norm of model updates. 
3) ``Fixed Model Splitting and Weighted Client Sampling'' (\textbf{FMS+RCS}) utilizes the fixed model splitting strategy and selects participating clients based on the size of client’s datasets.
4) ``Fixed  Model Splitting and Round-Robin Client Sampling'' (\textbf{FMS+RRCS}) employs a randomly selected fixed cut layer for model partitioning and a round-robin strategy for client sampling~\cite{zhang2025optimalheterogeneousclientsampling}.  
5) ``Fixed  Model Splitting and Random Client Sampling'' (\textbf{FMS-RCS}) leverages fixed model splitting and random client sampling.

 % \subsection{Estimation of Parameters  $G_i^2$, $\sigma_{i}^{2}$ and $\beta$}
 \subsection{Estimation of Parameters \texorpdfstring{$G_i^2$}{G\_i²}, \texorpdfstring{$\sigma_{i}^{2}$}{sigma\_i²} and \texorpdfstring{$\beta$}{beta}}
In principle, the true global smoothness constant
 $ \beta \;=\; \sup_{w,w'} \frac{\|\nabla f(w) - \nabla f(w')\|}{\|w - w'\|}$
is essentially impossible to compute for deep neural networks~\cite{8664630}: one would have to search over all pairs of parameter vectors in a space of millions of dimensions, and even local Hessian‐based methods (e.g.\ estimating the largest eigenvalue) are prohibitively expensive and valid only in an infinitesimal neighborhood of a single point.
Instead, we adopt a lightweight empirical per‐layer approximation.
First, on each client $i$, we perform a finite‐difference test on a minibatch of size \(n_i\) and  add a small perturbation \(\varepsilon\), computing
\begin{equation}
    \beta_i^{\mathrm{batch}}
  = \frac{\|\nabla f(w + \varepsilon) - \nabla f(w)\|}{\|\varepsilon\|},
\end{equation}
and further obtaining
\begin{equation}
  \beta_{\mathrm{local}}
  = \frac{\sum\nolimits_{i=1}^N \beta_i^{\mathrm{batch}}\,n_i}{\sum\nolimits_{i=1}^N n_i}\,.
\end{equation}
Second, we compare pairs of client models \(w_i,\ w_z (i,z\in[1,N],i\neq z)\) on the same data to obtain
 $
  \beta_{iz}
  = \frac{\|\nabla f(w_i) - \nabla f(w_z)\|}{\|w_i - w_z\|}
  \quad\Longrightarrow\quad
  \beta_{\mathrm{cross}} = \max_{i,z}\beta_{iz}\,.
 $
Third, for each layer \(j\), we record over one epoch the variance of its squared gradient norm,
\(\sigma_j^2=\mathrm{Var}(\|\nabla_{w_j}f\|^2)\), and its maximum squared norm \(G_j^2=\max(\|\nabla_{w_j}f\|^2)\), then form 
\(\sum_j(\sigma_j^2 + G_j^2)\) as an empirical upper bound on total gradient variation.  
Finally, we set
\begin{equation}
  \beta \;=\;\max\bigl(\beta_{\mathrm{local}},\beta_{\mathrm{cross}}\bigr)
\end{equation}
and incorporate \(\sigma_j^2,\ G_j^2\) and \(\beta\) into our model splitting and client sampling procedure.
\begin{figure}[t!]
         \subfigure[\centering The impact of $p_i$ on test accuracy ($\varphi_i,a_i=0$).] {
\centering\includegraphics[width=3.85cm]{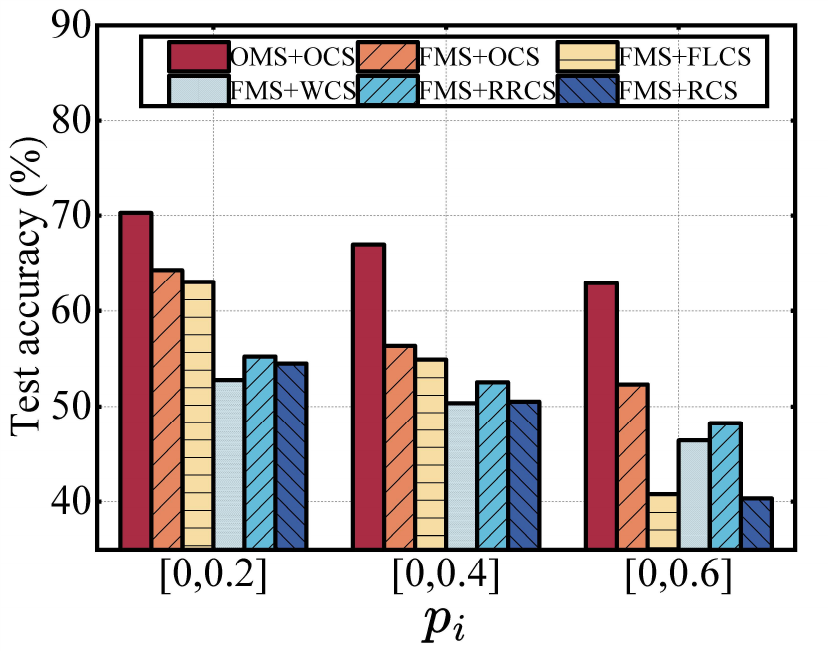}
	}
\subfigure[\centering The impact of $\varphi_i$ on test accuracy ($p_i,a_i=0$).] {
\centering\includegraphics[width=3.85cm]{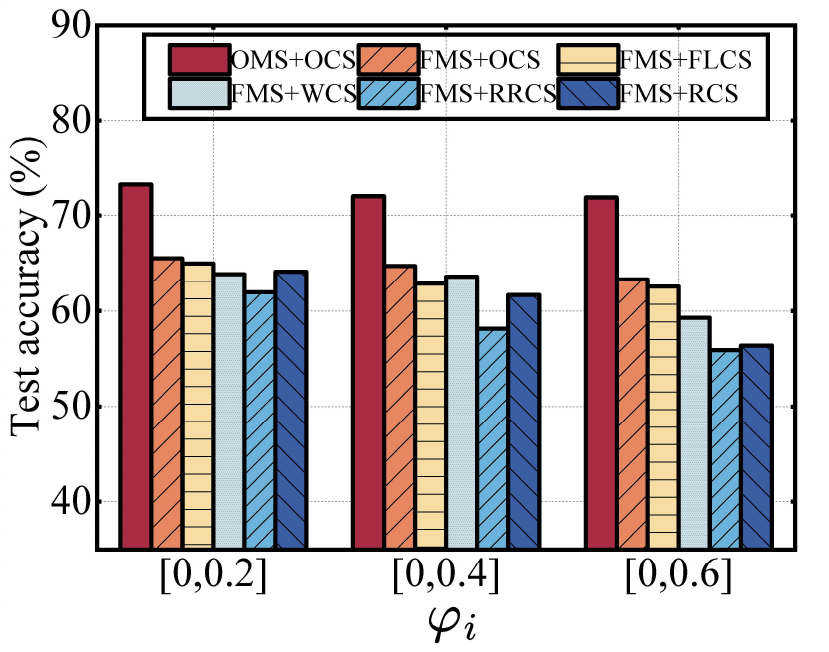}
	}
    \subfigure[\centering The impact of $a_i$ on test accuracy ($p_i,\varphi_i=0$).] {
\centering\includegraphics[width=3.85cm]{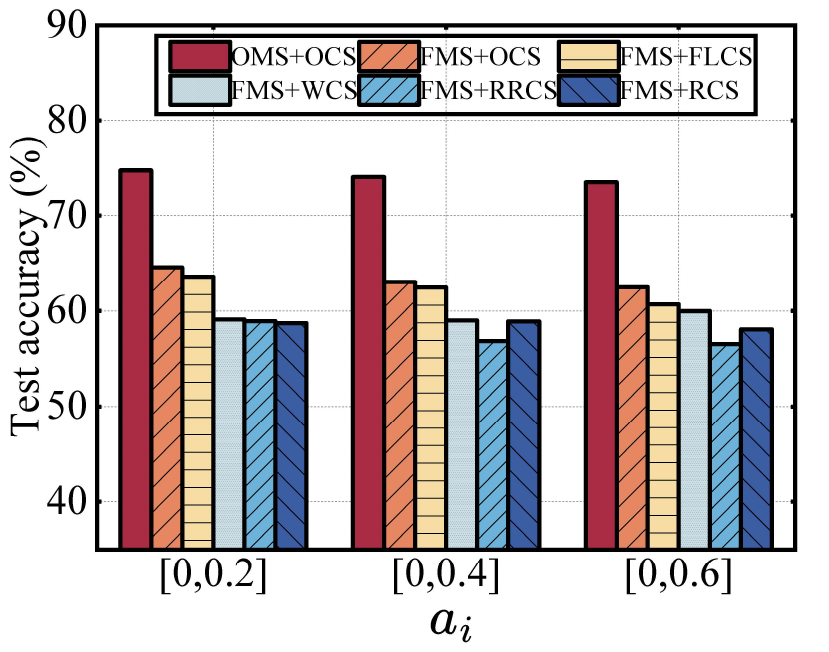}
	}
         \subfigure[\centering The impact of estimation errors in $\varpi_i$ ($\varpi_i\in\{p_i,\varphi_i,a_i\}$)
         on test accuracy. ] {
\centering\includegraphics[width=3.85cm]{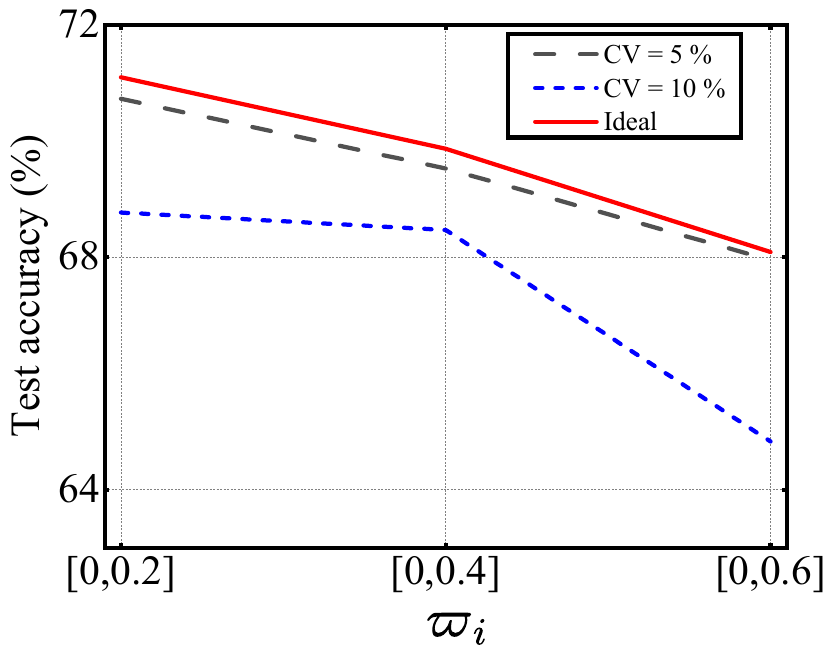}
        }
	\centering\caption{ The impact of uploading failure $p_i$, downloading failure $\varphi_i$, aggregation failure $a_i$ and estimation errors on these parameters on test accuracy.} 
		\label{pi_phi_i_ai}
        \vspace{-0.5em}
\end{figure}

 \subsection{Performance Comparison}
Fig. \ref{Training_curves} presents test accuracy and test loss versus training epochs, comparing the proposed methods with baselines on EMNIST and CIFAR-10 under both IID and non-IID data distributions. Notably, the \textbf{OMS+OCS} strategy outperforms benchmarks by exhibiting the highest test accuracy and the lowest test loss across all scenarios. 
Specifically, on the EMNIST dataset under IID settings, the proposed \textbf{OMS+OCS} strategy achieves a test accuracy of 83.48\% at 300 epochs, which is significantly higher than the benchmark performances of 77.48\%, 53.83\%, 62.20\%, 51.75\%, and 55.19\%, respectively. Under non-IID settings on EMNIST, at 500 epochs \textbf{OMS+OCS} reaches 79.29\%, outperforming the benchmarks that achieve 76.81\%, 69.68\%, 56.10\%, 50.97\%, and 52.38\% in turn.
For the CIFAR-10 dataset, under IID conditions at 150 epochs, \textbf{OMS+OCS} obtains a test accuracy of 73.74\%, compared to the benchmarks’ 68.27\%, 65.84\%, 56.58\%, 55.09\%, and 61.42\%. Similarly, under non-IID conditions at 500 epochs, our proposed approach achieves 65.98\% accuracy, clearly surpassing the benchmarks at 59.90\%, 51.25\%, 48.16\%, 47.23\%, and 51.52\%.
This superior performance is primarily due to two key factors. First, \textbf{OMS+OCS} effectively identifies and mitigates the influence of unstable client contributions during the SFL training process. In SFL, differing client behaviors or data distributions can lead to significant variance in local gradients. Our approach leverages an optimal client sampling and model splitting scheme to achieve a more stable and reliable global model update. Second, \textbf{OMS+OCS} is specifically designed to reduce the overall gradient variation by dynamically adjusting the contribution of each client. The empirical upper bound based on the variance and maximum squared norm of gradients ensures that the aggregation process remains robust even under challenging non-IID conditions.
Overall, our experimental results demonstrate that \textbf{OMS+OCS} is a highly efficient and resilient approach for addressing the challenges posed by unstable clients in SFL.

% In Fig. \ref{pi_phi_i_ai}, experimental results show that as the upload failure probability \(p_i\), the download failure probability \(\varphi_i\), or the aggregation failure probability \(a_i\) increase, the test accuracy for each scheme decreases, indicating that unstable clients harm overall model training performance.
% Overall, these experimental results validate that, in unstable environments, optimizing the strategy can effectively mitigate the negative effects of errors and fluctuations on model performance.
% Notably, $p_i$ have a more severe negative impact on the model training performance,  which is inconsistent with \textbf{Insight 2}. 

% In Fig. \ref{pi_phi_i_ai}, experimental results show that for 100 clients—with the upload failure probability \(p_i\), the download failure probability \(\varphi_i\), and the aggregation failure probability \(a_i\) distributed over three ranges: \([0, 0.2]\) (lightly unstable clients), \([0, 0.4]\) (moderately unstable clients), and \([0, 0.6]\) (severely unstable clients)—the test accuracy for each scheme decreases. This indicates that increased client instability adversely affects overall model training performance. 

In Fig. \ref{pi_phi_i_ai} (a)-(c), to evaluate model performance under varying client drop rate scenarios, we simulate 100 clients by varying  the upload failure probability \(p_i\), the download failure probability \(\varphi_i\), and the aggregation failure probability \(a_i\), separately, across three distinct scenarios: light drop rate scenario where the probability is sampled from \([0, 0.2]\), moderate drop rate scenario where the probability is sampled from \([0, 0.4]\), and severe drop rate scenario where the probability is sampled from \([0, 0.6]\). 
The results show that as the unstable condition becomes severe, the test accuracy for each scheme decreases, emphasizing the negative impact of higher client drop rates on overall model training performance.
Moreover, the experimental results validate that the proposed  \textbf{OMS+OCS} strategy effectively mitigates negative effects of unstable clients  on model training performance. In particular, we notice that \(p_i\) has a severer negative impact on test accuracy, which is consistent with \textbf{Insight 2}.

In edge networks, network resources often fluctuate during training, resulting in discrepancies between the measured conditions (used for optimization) and the actual network conditions. To assess the effect of such measurement errors on client drop rates, we inject Gaussian noise with varying coefficients of variation (CV) into the modeled drop probabilities~\cite{10053757,yoo2024modeling}. As shown in Fig.~\ref{pi_phi_i_ai} (d), the proposed method is resilient across noise levels, exhibiting only slight performance loss at CV = 5\% and moderate degradation at CV = 10\%.  These results demonstrate that \textbf{OMS+OCS} remains effective even when predictions of these probabilities are unreliable.

% \begin{figure}[t!]
%          \subfigure[\centering CIFAR-10 with non-IID settings.] {
% \centering\includegraphics[width=3.85cm]{figure/cifar-10-noniid_converge.pdf}
% 	}
%          \subfigure[\centering MNIST with non-IID settings.] {
%  \centering
% 		\centering\includegraphics[width=3.85cm]{figure/mnist-converge.pdf}
%         }
% 	\centering\caption{The converged accuracy and time for CIFAR-10 and MNIST datasets under IID and non-IID settings using ResNet50.} 
% 		\label{Training_curves}
% \end{figure}

% \begin{figure}[t!]
%          \subfigure[\centering Computing capabilities of edge devices.] {
% \centering\includegraphics[width=3.85cm]{figure/computing_devices.pdf}
% 	}
%          \subfigure[\centering Computing capabilities of edge server.] {
%  \centering
% 		\centering\includegraphics[width=3.85cm]{figure/computing_server.pdf}
%         }
% 	\centering\caption{The converged time versus  computing resources on CIFAR-10 under non-IID setting.} 
% 		\label{Training_curves}
% 		\vspace{-0.5em}
% \end{figure}

% \begin{figure}[t!]
%          \subfigure[\centering Uplink rates of edge devices.] {
% \centering\includegraphics[width=3.85cm]{figure/uplink.pdf}
% 	}
%          \subfigure[\centering Downlink rates of edge devices.] {
%  \centering
% 		\centering\includegraphics[width=3.85cm]{figure/downlink.pdf}
%         }
% 	\centering\caption{The converged time versus  communication resources on CIFAR-10 under non-IID setting.} 
% 		\label{Training_curves}
% 		\vspace{-0.5em}
% \end{figure}

\begin{figure}[t!]
         \subfigure[\centering The impact of $K$ on test accuracy ($N=100$).] {
\centering\includegraphics[width=3.85cm]{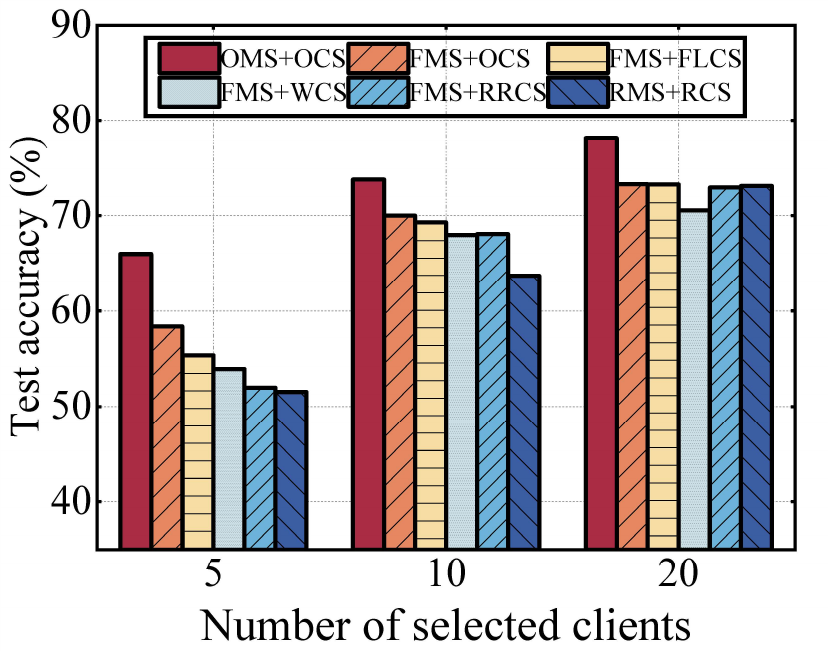}
	}
         \subfigure[\centering The impact of aggregating interval (in epochs) on test accuracy.] {
 \centering
		\centering\includegraphics[width=3.85cm]{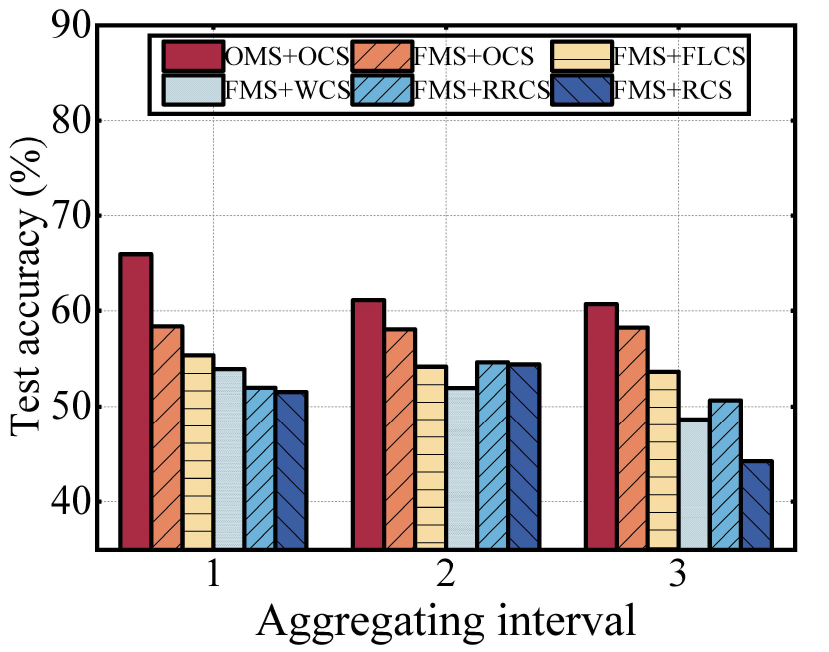}
        }
	\centering\caption{The impact of number of selected clients $K$ and aggregation interval on test accuracy.} 
		\label{number_interval}
         \vspace{-0.5em}
\end{figure}

Fig.~\ref{number_interval} (a) shows how the number of selected clients $K$ affects test accuracy. As the number of selected clients increases, the test accuracy of each scheme generally improves. This indicates that involving more clients provides the model with richer data, thereby enhancing its generalization ability.
Fig.~\ref{number_interval} (b), on the other hand, examines the impact of different aggregation intervals on test accuracy. The results demonstrate that as the aggregation interval increases, the test accuracy of each scheme tends to decline. This suggests that more frequent aggregation (i.e., shorter intervals) facilitates the timely integration of updates from each client, thereby improving the effectiveness of model training. Conversely, longer intervals may lead to delays in updating information, which can negatively affect the final model performance.

\begin{table*}[ht]
\centering
\textcolor{black}{
\caption{Converged test accuracy comparison under different model and dataset settings.}
\label{tab:test_accuracy1}
\footnotesize
\setlength{\tabcolsep}{3.5pt} % default ~6pt
\renewcommand{\arraystretch}{0.95} % default 1.0
\begin{tabular}{c|cccccc}
\toprule
\multirow{2}{*}{Settings}
& \multicolumn{6}{c}{Converged Test Accuracy (\%)} \\
\cmidrule(lr){2-7}
& \textbf{OMS+OCS} & \textbf{FMS+OCS} & \textbf{FMS+FLCS} & \textbf{FMS+WCS} & \textbf{FMS+RRCS} & \textbf{FMS+RCS} \\
\midrule
ResNet-50 + EMNIST   & \textbf{79.29} & 76.81 & 69.68 & 56.10 & 50.97 & 52.38 \\
ResNet-50 + CIFAR-10 & \textbf{65.98} & 59.90 & 51.25 & 48.16 & 47.23 & 51.52 \\
ResNet-50 + CINIC-10 & \textbf{60.25} & 57.31 & 50.17 & 51.51 & 54.49 & 50.57 \\
\midrule
ResNet-101 + EMNIST   & \textbf{89.70} & 81.69 & 68.73 & 60.04 & 63.24 & 57.84 \\
ResNet-101 + CIFAR-10 & \textbf{64.53} & 62.53 & 61.14 & 57.37 & 58.76 & 55.05 \\
ResNet-101 + CINIC-10 & \textbf{59.31} & 56.05 & 49.63 & 51.21 & 51.54 & 52.09 \\
\bottomrule
\end{tabular}
}
\vspace{-0.8em}
\end{table*}

\textcolor{black}{
We  extend the evaluation beyond EMNIST and CIFAR-10 by including CINIC-10\cite{darlow2018cinic10imagenetcifar10}, which augments CIFAR-10 with additional downsampled ImageNet images. With multi-source construction, CINIC-10 exhibits higher intra-class diversity and a more pronounced domain shift, making optimization under non-IID client distributions more demanding. We further evaluate a \emph{deeper model architecture} (ResNet-101~\cite{he2015deepresiduallearningimage}) to increase model depth and computational complexity. As shown in Table~\ref{tab:test_accuracy1}, the proposed framework remains effective  across all evaluated datasets and model settings, achieving consistent converged test accuracy.
}

 \subsection{Ablation Study}
\begin{figure}[t!]
         \subfigure[\centering CIFAR-10 under IID setting.] {
\centering\includegraphics[width=3.85cm]{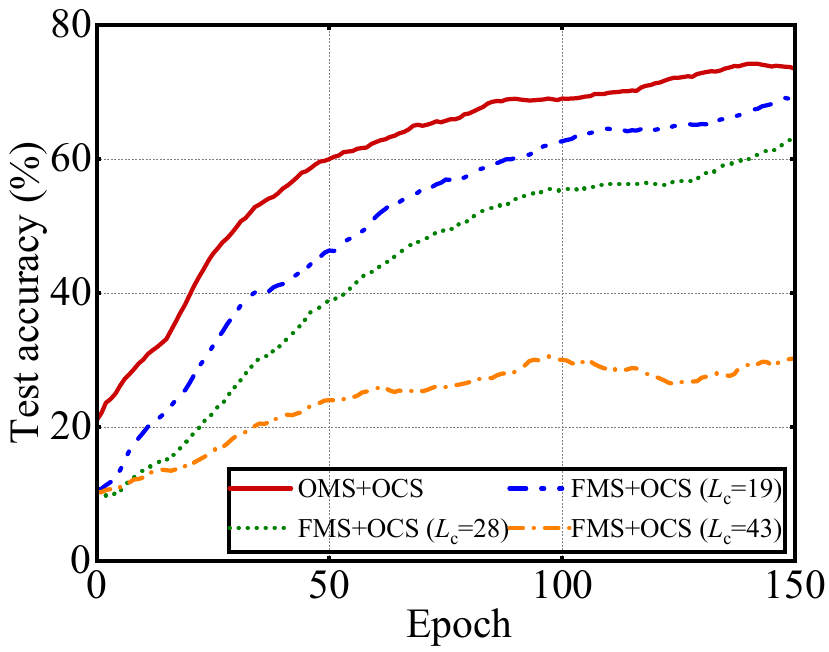}
	}
         \subfigure[\centering CIFAR-10 under non-IID setting.] {
 \centering
		\centering\includegraphics[width=3.85cm]{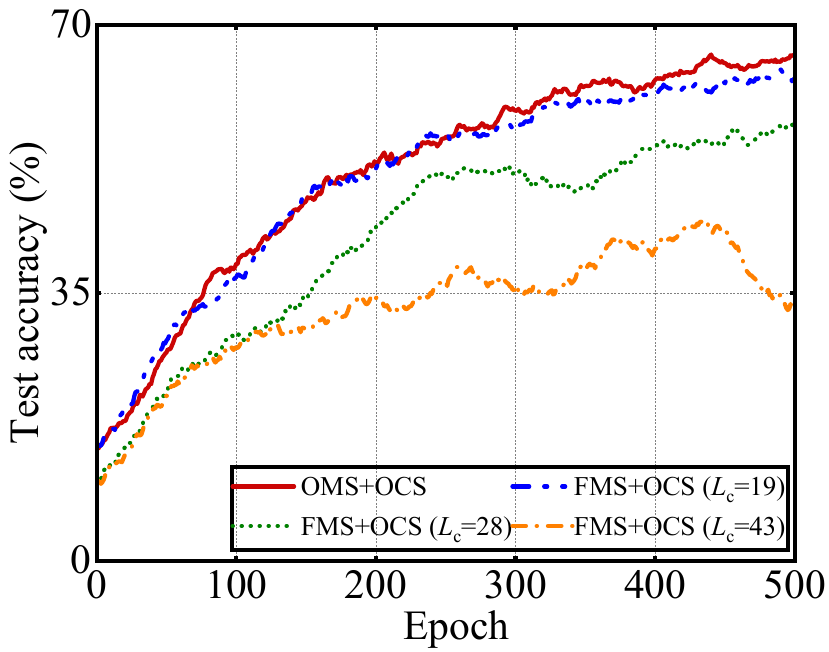}
        }
	\centering\caption{Ablation experiments for model splitting scheme on the CIFAR-10 dataset under IID and non-IID settings.} 
		\label{ablation}
        \vspace{-0.5em}
\end{figure}
Fig.~\ref{ablation} presents the training performance on the CIFAR-10 dataset under both IID and non-IID data distributions with the \textbf{OMS+OCS} and \textbf{FMS+OCS} strategies, serving as the ablation study on model splitting. We observe that the \textbf{OMS+OCS} approach achieves the highest test accuracy. In contrast, the performance of the \textbf{FMS+OCS} approach is critically dependent on the value of $L_c$. This comparison validates the effectiveness of our model splitting strategy.

%Moreover, the phenomenon also This is attributed to the fact that a   shallow splitting point is advantageous because it minimizes the risk of discarding valuable information during the client-side gradient downloading and aggregation process, being consistent with \textbf{Insight 3}.

% \textcolor{blue}{
%  \subsection{Scalability Evaluation.}
% We extend the evaluation beyond EMNIST and CIFAR-10 by including CINIC-10\cite{darlow2018cinic10imagenetcifar10}, which augments CIFAR-10 with additional downsampled ImageNet images. With multi-source construction, CINIC-10 exhibits higher intra-class diversity and a more pronounced domain shift, making optimization under non-IID client distributions more demanding. We further evaluate a \emph{deeper model architecture} (ResNet-101~\cite{he2015deepresiduallearningimage}) to increase model depth and computational complexity. As shown in Table~\ref{tab:test_accuracy1}, the proposed framework remains effective  across all evaluated datasets and model settings, achieving consistent converged test accuracy.
% }

\section{Conclusion}\label{conclusion}
\textcolor{black}{In this paper, we have presented an optimal framework for client sampling and model splitting in SFL under resource-constrained and unstable network conditions. By formulating a joint optimization problem that minimizes the global training loss, our approach effectively enhances training performance with heterogeneous edge devices under unstable network environments. Extensive simulations over diverse datasets have validated that our method significantly improves test accuracy compared to existing baselines. Future work will explore further enhancements to adaptively manage dynamic network resources and extend the framework to edge computing scenarios with more complex client drop patterns, and investigate online learning algorithms for highly
dynamic or partially observable environments.}
% \begin{align*}
% &\mathbf{1}\!\left( {s_{u}^{i}} \right) =\left\{ \begin{array}{c}
% 	1, \text{with}\ \text{probablity}\ 1-p_i,\\
% 	0, \text{with}\ \text{probablity}\ p_i.\\
% \end{array} \right. 
% \\
% &\mathbf{1}\!\left( {s_{d}^{i}} \right) =\left\{ \begin{array}{c}
% 	1, \text{with}\ \text{probablity}\ 1-\varphi _i,\\
% 	0, \text{with}\ \text{probablity}\ \varphi _i.\\
% \end{array} \right. 
% \\
% &\mathbf{1}\!\left( sa_i \right) =\left\{ \begin{array}{c}
% 	1, \text{with}\ \text{probablity}\ 1-a_i,\\
% 	0, \text{with}\ \text{probablity}\ a_i.\\
% \end{array} \right. 
% \end{align*}

\appendices

\section{A. Proof of Lemma 1} \label{aa}
We first fix a training round \( t \geq 1 \). Let \( t_0 \leq t \) be the largest integer such that \( t_0 \bmod I = 0 \), so \( 0 \leq t - t_0 \leq I \).
Recalling the model updating rules, we have
{\begin{align}
\,\,\mathbf{h}_{c,i}^{t}&=\mathbf{h}_{c}^{t_0}-\gamma \sum_{\tau =t_0+1}^t{\frac{\mathbf{1}\!\left( {s_{a}^{i}} \right)}{\left( 1-a_i \right)}\frac{\mathbf{1}\!\left( {s_{u}^{i}},\,\,{s_{d}^{i}} \right)}{\left( 1-p_i \right) \left( 1-\varphi _i \right)}}\,\,\mathbf{g}_{c,i}^{\tau}
, \label{eq:local_update} \\
\mathbf{h}_{c}^{t}&=\mathbf{h}_{c}^{t_0}-\gamma \sum_{\tau =t_0+1}^t\sum_{i\in \mathcal{K} ^{\left( \tau -1 \right)}(\boldsymbol{q})}\frac{1}{K}\frac{m_i}{q_i}\frac{\mathbf{1}\!\left( {s_{a}^{i}} \right)}{\left( 1-a_i \right)}\cdot\notag\\
&\quad \quad \quad \quad \quad \ \quad \quad \quad  \quad  \quad \quad \frac{\mathbf{1}\!\left( {s_{u}^{i}},\,\,{s_{d}^{i}} \right)}{\left( 1-p_i \right) \left( 1-\varphi _i \right)}\mathbf{g}_{c,i}^{\tau}.
 \label{eq:global_update}
\end{align}
}

% Subtracting \eqref{eq:local_update} from \eqref{eq:global_update}:
% {\begin{align}
% &\mathbf{h}_{c}^{t} - {\frac{(1-p_{i})(1-\varphi _{i})}{q_{i }}}\mathbf{h}_{c,i}^{t} = 
% \\&\gamma \sum_{\tau = t_0 + 1}^{t} \bigg( \sum_{i \in \mathcal{K}^{(\tau -1)}(\boldsymbol{q})} \frac{(1 - p_{i})(1 - \varphi_{i})}{K q_{i}} \mathbf{g}_{c,i}^{\tau} \notag\\&+ (1 - p_{i})(1 - \varphi_{i}) \mathbf{g}_{c,i}^{\tau} \bigg). \label{eq:difference}
% \end{align}}

We bound the expected squared norm
\begin{align}
&\mathbb{E} \left[ \| \mathbf{h}_{c}^{t} - \mathbf{h}_{c,i}^{t} \|^{2} \right] \leq \gamma^{2} \left( t - t_0 \right)^2 \notag \\&\quad \quad \quad \quad \quad \sum_{\tau = t_0 + 1}^{t} \mathbb{E} \bigg[ \bigg\| \sum_{i \in \mathcal{K}^{(\tau -1)}} A_{i} + D_{i} \bigg\|^{2} \bigg], \label{eq:bound1}
\end{align}
where
{\begin{align}
A_{i} &= 
\frac{1}{K}\frac{m_i}{q_i}\frac{\mathbf{1}\!\left( {s_{a}^{i}} \right)}{\left( 1-a_i \right)}\frac{\mathbf{1}\!\left( {s_{u}^{i}},\,\,{s_{d}^{i}} \right)}{\left( 1-p_i \right) \left( 1-\varphi _i \right)} \mathbf{g}_{c,i}^{\tau}
, \notag \\
D_{i} &= 
\frac{\mathbf{1}\!\left( {s_{a}^{i}} \right)}{\left( 1-a_i \right)}\frac{\mathbf{1}\!\left( {s_{u}^{i}},\,\,{s_{d}^{i}} \right)}{\left( 1-p_i \right) \left( 1-\varphi _i \right)}
 \mathbf{g}_{c,i}^{\tau}.
\end{align}}

Using the inequality \( \| \mathbf{a} + \mathbf{b} \|^{2} \leq 2 (\| \mathbf{a} \|^{2} + \| \mathbf{b} \|^{2}) \), it follows that
{\begin{align}
&\mathbb{E} \bigg[ \bigg\| \sum_{i \in \mathcal{K}^{(\tau -1)}(\boldsymbol{q})} A_{i} + D_{i} \bigg\|^{2} \bigg] \notag\\&\quad \quad \leq 2 \mathbb{E} \bigg[ \bigg\| \sum_{i \in \mathcal{K}^{(\tau -1)}(\boldsymbol{q})} A_{i} \bigg\|^{2} \bigg] + 2 \mathbb{E} \left[ \| D_{i} \|^{2} \right]. \label{eq:bound2}
\end{align}}

Thus, we have
% \begin{align}
% \mathbb{E} \bigg[ \bigg\| \sum_{j} A_{j,\tau} \mathbf{g}_{c,j}^{\tau} \bigg\|^{2} \bigg] &= K ^2\mathbb{E}_{j \in \mathcal{K}^{(\tau -1)}(\boldsymbol{q})} \left[ \| A_{j,\tau} \mathbf{g}_{c,j}^{\tau} \|^{2} \right] \notag\\&\leq   \bigg(\sum\limits_{i=1}^N\frac{(1 - p_{i,\tau})(1 - \varphi_{i,\tau})}{ q_{i,(\tau -1)}} \bigg)^2 G_c^{2}, \label{eq:bound3}
% \end{align}
{\begin{align}
&\mathbb{E} \bigg[ \bigg\| \sum_{i \in \mathcal{K}^{(\tau -1)}(\boldsymbol{q})} A_{i}  \bigg\|^{2} \bigg] \notag \\
% &= \frac{1}{K^2}\mathbb{E} [\parallel \sum_{i\in \mathcal{K} ^{\left( \tau -1 \right)}(\boldsymbol{q})}{\frac{m_i}{q_i}\frac{\mathbf{1}\!\left( sa_i \right)}{\left( 1-a_i \right)}\frac{\mathbf{1}\!\left( {s_{u}^{i}},\,\,{s_{d}^{i}} \right)}{\left( 1-p_i \right) \left( 1-\varphi _i \right)}}\mathbf{g}_{c,i}^{\tau}\parallel ^2]
% \notag\\
&\overset{(a)}{\leq } \frac{1}{K}\mathbb{E} _{}[\sum_{i\in \mathcal{K} ^{\left( \tau -1 \right)}(\boldsymbol{q})}{\frac{{m_i}^2}{{q_i}^2}\frac{\mathbf{1}\!\left( {s_{a}^{i}} \right) ^2}{\left( 1-a_i \right) ^2}\frac{\mathbf{1}\!\left( {s_{u}^{i}},\,\,{s_{d}^{i}} \right) ^2}{\left( 1-p_i \right) \left( 1-\varphi _i \right) ^2}}\parallel \mathbf{g}_{c,i}^{\tau}\parallel ^2]
\notag\\
& \overset{(b)}{=}\mathbb{E} _{i\in \mathcal{K} ^{\left( \tau -1 \right)}(\boldsymbol{q})}[\frac{{m_i}^2}{{q_i}^2}\frac{\mathbf{1}\!\left( {s_{a}^{i}} \right) ^2}{\left( 1-a_i \right) ^2}\frac{\mathbf{1}\!\left( {s_{u}^{i}},\,\,{s_{d}^{i}} \right) ^2}{\left( 1-p_i \right) \left( 1-\varphi _i \right) ^2}\parallel \mathbf{g}_{c,i}^{\tau}\parallel ^2]
\notag\\
&\overset{(c)}{\leq } \sum_{i=1}^N{q_i}\frac{{m_i}^2}{{q_i}^2}\frac{1}{\left( 1-a_i \right)}\frac{1}{\left( 1-p_i \right) \left( 1-\varphi _i \right)}\sum_{j=1}^{L_c}{G_{j}^{2}}
, \label{eq:bound3}
\end{align}
where $(a)$ follows by using $
||\sum\nolimits_{i=1}^n{z_i||^2\le}n\sum\nolimits_{i=1}^n{||z_i||^2}
$; $(b)$ follows the proof of Lemma 1 in 
\cite{10443546};
and $(c)$ follows from \textbf{Assumption 2}.}

Similarly, we have
{\begin{align}
\mathbb{E} \left[ \parallel D_i\parallel ^2 \right] \le \frac{1}{\left( 1-a_i \right)}\frac{1}{\left( 1-p_i \right) \left( 1-\varphi _i \right)}\sum_{j=1}^{L_c}{G_{j}^{2}}. 
\label{eq:bound4}
\end{align}}

Combining \eqref{eq:bound2}, \eqref{eq:bound3}, and \eqref{eq:bound4}, and noting \( t - t_0 \leq I \), we have
{\begin{align}
&\mathbb{E} \left[ \| \mathbf{h}_{c}^{t} - \mathbf{h}_{c,i}^{t} \|^{2} \right] \notag\\&\leq 
2\gamma ^2I^2\bigg(\sum_{i=1}^N{}\frac{{m_i}^2}{q_i}\frac{1}{\left( 1-a_i \right)}\frac{1}{\left( 1-p_i \right) \left( 1-\varphi _i \right)}+\notag\\
&\quad \quad \quad \quad \quad \quad \frac{1}{\left( 1-a_i \right)}\frac{1}{\left( 1-p_i \right) \left( 1-\varphi _i \right)}\bigg)\sum_{j=1}^{L_c}{G_{j}^{2}}, \notag \\
&\leq 2\gamma ^2I^2\bigg(N\underset{i^\prime}{\max}\big\{ \frac{{m_{i^\prime}}^2}{q_{i^\prime}}\frac{1}{\left( 1-a_{i^\prime} \right)}\frac{1}{\left( 1-p_{i^\prime} \right) \left( 1-\varphi_{i^\prime} \right)} \big\} +\notag \\
&\quad \quad \quad \quad \quad \quad\frac{1}{\left( 1-a_i \right)}\frac{1}{\left( 1-p_i \right) \left( 1-\varphi _i \right)}\bigg)\sum_{j=1}^{L_c}{G_{j}^{2}}.
\label{eq:final_bound}
\end{align}}

This completes the proof.

\section{B. Proof of Theorem 1}\label{bb}
For all training rounds  $t \ge 1$,  the smoothness of the loss function $f(\cdot)$ implies that
{
\begin{align}\label{eq:equal_1_total}
\mathbb{E}[ {f({{\bf{w}}^t})} ] \le& {\rm{ }}\mathbb{E}[ {f({{\bf{w}}^{t - 1}})} ] + \frac{\beta }{2}\mathbb{E}[ {{{\| {{{\bf{w}}^t} - {{\bf{w}}^{t - 1}}} \|}^2}} ] \nonumber \\&~+ 
\mathbb{E}[ {\langle {\nabla _{\bf{w}}}f({{\bf{w}}^{t - 1}}),{{\bf{w}}^t} - {{\bf{w}}^{t - 1}}\rangle } ]{\rm{  }}
.
\end{align}
}

According to \cite{lin2024adaptsfl}, it follows that
{\begin{align}\label{eq:w_decouple}
 \mathbb{E}[ {{{\| {{{\bf{w}}^t} - {{\bf{w}}^{t - 1}}} \|}^2}} ]=\mathbb{E}[ {{{\| { {{\bf{h}}_c^t - {\bf{h}}_c^{t - 1}} } \|}^2}} ] + \mathbb{E}[ {{{\| { {{\bf{h}}_s^t - {\bf{h}}_s^{t - 1}} } \|}^2}} ],
\end{align}}%
where $\mathbb{E}[ {{{\| { {{\bf{h}}_c^t - {\bf{h}}_c^{t - 1}} } \|}^2}} ]$ can be bounded as 

{
\begin{align}\label{eq:wc_squre}
 &\mathbb{E}[ {{{\| { {{\bf{h}}_c^t - {\bf{h}}_c^{t - 1}} } \|}^2}} ] \notag\\ \overset{(a)}{=}&
\gamma ^2\mathbb{E} [||\frac{1}{K}\sum_{i\in \mathcal{K} ^{\left( t-1 \right)}(\boldsymbol{q})}\frac{m_i}{q_i}{\frac{\mathbf{1}\!\left( {s_{a}^{i}} \right)}{\left( 1-a_i \right)}}\frac{\mathbf{1}\!\left( {s_{u}^{i}},\,\,{s_{d}^{i}} \right)}{\left( 1-p_i \right) \left( 1-\varphi _i \right)}\mathbf{g}_{c,i}^{t}||^2]
 \nonumber\\ 
 \overset{(b)}{=}& {\gamma ^2}\mathbb{E}[||
\frac{1}{K}\sum_{i\in \mathcal{K} ^{\left( t-1 \right)}(\boldsymbol{q})}\frac{m_i}{q_i}{\frac{\mathbf{1}\!\left( {s_{a}^{i}} \right)}{\left( 1-a_i \right)}}\frac{\mathbf{1}\!\left( {s_{u}^{i}},\,\,{s_{d}^{i}} \right)}{\left( 1-p_i \right) \left( 1-\varphi _i \right)}
 {( \bf{g}}_{c,i}^t - \nonumber\\&{\nabla _{{{\bf{h}}_c}}}{f_i}\left( {{\bf{h}}_{c,i}^{t - 1}} \right) ) |{|^2}] +{\gamma ^2}\mathbb{E}[||\frac{1}{K}\sum_{i\in \mathcal{K} ^{\left( t-1 \right)}(\boldsymbol{q})}\frac{m_i}{q_i}\frac{\mathbf{1}\!\left( {s_{a}^{i}} \right)}{\left( 1-a_i \right)}\cdot\nonumber\\ 
& 
\frac{\mathbf{1}\!\left( {s_{u}^{i}},\,\,{s_{d}^{i}} \right)}{\left( 1-p_i \right) \left( 1-\varphi _i \right)}
 {{\nabla _{{{\bf{h}}_c}}}{f_i}\left( {{\bf{h}}_{c,i}^{t - 1}} \right)} |{|^2}] \nonumber\\
\overset{(c)} \leq & 
\frac{\gamma ^2}{K}\mathbb{E} [\sum_{i\in \mathcal{K} ^{(t-1)}(\boldsymbol{q})} {\frac{{m_i}^2}{{q_i}^2}}{\frac{\mathbf{1}\!\left( {s_{a}^{i}} \right) ^2}{\left( 1-a_i \right) ^2}}\frac{\mathbf{1}\!\left( {s_{u}^{i}},\,\,{s_{d}^{i}} \right) ^2}{\left( 1-p_i \right) ^2\left( 1-\varphi _i \right) ^2} \cdot
\nonumber\\
 & 
||\mathbf{g}_{c,i}^{t}-\nabla _{\mathbf{h}_c}f_i\left( \mathbf{h}_{c,i}^{t-1} \right) ||^2]+\gamma ^2\mathbb{E} [||\sum_{i\in \mathcal{K} ^{\left( t-1 \right)}(\boldsymbol{q})}{\frac{1}{K}\frac{m_i}{q_i}}\cdot\nonumber\\
&\frac{\mathbf{1}\!\left( {s_{a}^{i}} \right)}{\left( 1-a_i \right)}\frac{\mathbf{1}\!\left( {s_{u}^{i}},\,\,{s_{d}^{i}} \right)}{\left( 1-p_i \right) \left( 1-\varphi _i \right)}\nabla _{\mathbf{h}_c}f_i\left( \mathbf{h}_{c,i}^{t-1} \right) ||^2]
 \nonumber\\
 \overset{(d)}{=}&\frac{\gamma ^2K}{K}\mathbb{E} _{i\in \mathcal{K} ^{\left( t-1 \right)}(\boldsymbol{q})}[\frac{{m_i}^2}{{q_i}^2}\frac{\mathbf{1}\!\left( {s_{a}^{i}} \right) ^2}{\left( 1-a_i \right) ^2}\frac{\mathbf{1}\!\left( {s_{u}^{i}},\,\,{s_{d}^{i}} \right) ^2}{\left( 1-p_i \right) ^2\left( 1-\varphi _i \right) ^2} \cdot \nonumber
\\
&||\mathbf{g}_{c,i}^{t}-\nabla _{\mathbf{h}_c}f_i\left( \mathbf{h}_{c,i}^{t-1} \right) ||^2]+\frac{\gamma ^2K}{K}\mathbb{E} _{i\in \mathcal{K} ^{\left( t-1 \right)}(\boldsymbol{q})}[\frac{{m_i}^2}{{q_i}^2}\cdot\nonumber\\
&\frac{\mathbf{1}\!\left( {s_{a}^{i}} \right) ^2}{\left( 1-a_i \right) ^2}\frac{\mathbf{1}\!\left( {s_{u}^{i}},\,\,{s_{d}^{i}} \right) ^2}{\left( 1-p_i \right) ^2\left( 1-\varphi _i \right) ^2}||\nabla _{\mathbf{h}_c}f_i\left( \mathbf{h}_{c,i}^{t-1} \right) ||^2] \nonumber \\
\overset{(e)}{\leq }& 
\gamma ^2\sum_{i=1}^N{q_i}\frac{{m_i}^2}{{q_i}^2}\frac{1}{\left( 1-a_i \right)}\frac{1}{\left( 1-p_i \right) \left( 1-\varphi _i \right)}\sum_{j=1}^{L_c}{\sigma _{j}^{2}}+\gamma ^2\sum_{i=1}^N{q_i} \cdot
 \nonumber\\
 & 
\frac{{m_i}^2}{{q_i}^2}\frac{1}{\left( 1-a_i \right)}\frac{1}{\left( 1-p_i \right) \left( 1-\varphi _i \right)}\mathbb{E} [||\nabla _{\mathbf{h}_c}f_i\left( \mathbf{h}_{c,i}^{t-1} \right) ||^2]
,
\end{align}}%
where (a) is derived using $      \mathbf{h}_{c}^{t}
      \;=\;
      \frac{1}{K}
      \sum_{i\in\mathcal{K}^{(t)}(\boldsymbol{q})}
      \frac{m_i}{q_i}\,
      \mathbf{h}_{c,i}^t$ and Eq.~\eqref{eq:local_update}; (b) follows from $\mathbb{E}[\mathbf{g}_{c,i}^{t}] = \nabla_{{\bf{h}}_c} f_{i}({\mathbf{h}}_{c,i}^{t-1})$ and the decomposition $\mathbb{E}[\Vert \mathbf{z} \Vert^{2}] = \mathbb{E} [ \Vert \mathbf{\mathbf{z}} - \mathbb{E}[\mathbf{z}]\Vert^{2}] + \Vert\mathbb{E}[\mathbf{z}] \Vert^{2}$, valid for any random vector $\mathbf{z}$; (c) applies by using $
||\sum\nolimits_{i=1}^n{z_i||^2\le}n\sum\nolimits_{i=1}^n{||z_i||^2}
$; (d) follows the proof of \textbf{Lemma 1} in \cite{10443546}; and (e) follows from {\bf Assumption 2}.

Similarly, $\mathbb{E}[ {{{\| { {{\bf{h}}_s^t - {\bf{h}}_s^{t - 1}} ]} \|}^2}} $ has an upper bound:
{
 \begin{align}\label{eq:ws_squre}
&\mathbb{E}[ {{{\| { {{\bf{h}}_s^t - {\bf{h}}_s^{t - 1}} } \|}^2}}] \le 
\gamma ^2\sum_{i=1}^N{}\frac{{m_i}^2}{q_i}\frac{1}{\left( 1-p_i \right)}\sum_{j=L_c+1}^L{\sigma _{j}^{2}}
 \nonumber\\
&
+\gamma ^2\sum_{i=1}^N{}\frac{{m_i}^2}{q_i}\frac{1}{\left( 1-p_i \right)}\mathbb{E} [||\nabla _{\mathbf{h}_s}f_i\left( \mathbf{h}_{s,i}^{t-1} \right) ||^2]
.
\end{align}}

Substituting Eq.~\eqref{eq:wc_squre} and Eq.~\eqref{eq:ws_squre} into Eq.~\eqref{eq:w_decouple} yields
{\begin{align}\label{eq:w-difference-squre}
&\mathbb{E}[\|{{\bf{w}}^t} - {{\bf{w}}^{t - 1}}\|{^2}] \nonumber \\
\le& 
\gamma ^2\sum_{i=1}^N{}\frac{{m_i}^2}{q_i}(\frac{1}{\left( 1-a_i \right)}\frac{1}{\left( 1-p_i \right) \left( 1-\varphi _i \right)}\sum_{j=1}^{L_c}{\sigma _{j}^{2}} +\frac{1}{\left( 1-p_i \right)}
\nonumber \\
&  
\sum_{j=L_c+1}^L{\sigma _{j}^{2}})+\gamma ^2\sum_{i=1}^N{}\frac{{m_i}^2}{q_i}\frac{1}{\left( 1-p_i \right)}\mathbb{E} [||\nabla _{\mathbf{h}_s}f_i\left( \mathbf{h}_{s,i}^{t-1} \right) ||^2]+ \nonumber \\
& 
\gamma ^2\sum_{i=1}^N{}\frac{{m_i}^2}{q_i}\frac{1}{\left( 1-a_i \right)}\frac{1}{\left( 1-p_i \right) \left( 1-\varphi _i \right)}\mathbb{E} [||\nabla _{\mathbf{h}_c}f_i\left( \mathbf{h}_{c,i}^{t-1} \right) ||^2].
\end{align}}

We further observe that 
\begin{align} \label{eq:inner_product_w}
&\mathbb{E}[\langle \nabla_{\bf{w}} f({\mathbf{w}}^{t-1}), {\mathbf{w}}^{t} - {\mathbf{w}}^{t-1}\rangle] \nonumber\\
\overset{(a)}{=}& -\gamma \mathbb{E} [\langle \nabla_{\bf{w}} f({\mathbf{w}}^{t-1}), \frac{1}{K}
\frac{m_i}{q_i}
\sum\limits_{i \in \mathcal{K}^{(t -1)}(\bm q)}  \left[ {{\bf{g}}_{s,i}^t; {\bf{g}}_{c,i}^t} \right]\rangle] \nonumber \\
\overset{(b)}{=}& -\gamma \mathbb{E}[\langle {\nabla_{\bf{w}}}f({{\bf{w}}^{t - 1}}), \frac{1}{K}\frac{m_i}{q_i}\sum\limits_{i \in \mathcal{K}^{(t -1)}(\bm q)} \nabla_{\bf{w}}  {f_i}({\bf{w}}_i^{t - 1})\rangle ]\nonumber \\
\overset{(c)}=&  - \frac{\gamma }{2}\mathbb{E}[||{\nabla _{\bf{w}}}f({{\bf{w}}^{t - 1}})|{|^2} + ||\sum\limits_{i \in \mathcal{K}^{(t -1)}(\bm q)}\frac{m_i}{q_i} {{\nabla _{\bf{w}}}} {f_i}({\bf{w}}_{i}^{t - 1})|{|^2} \cdot\nonumber \\
&  \frac{1}{K^2}- ||{\nabla _{{{\bf{w}}}}}f({{\bf{w}}^{t - 1}}) -  \frac{1}{K}\frac{m_i}{q_i}\sum\limits_{i \in \mathcal{K}^{(t -1)}(\bm q)}  {{\nabla _{{{\bf{w}}}}}} {f_i}({\bf{w}}_{i}^{t - 1})|{|^2}],
\end{align}
where (a) follows from ${{\bf{w}}^t} = \frac{1}{K}\sum\limits_{i \in \mathcal{K}^{(t -1)}(\bm q)} {\bf{w}}_i^t$; (c) follows from the  identity $\langle \mathbf{z}_{1}, \mathbf{z}_{2}\rangle = \frac{1}{2} \big( \Vert \mathbf{z}_{1}\Vert^{2} + \Vert \mathbf{z}_{2}\Vert^{2} - \Vert \mathbf{z}_{1} - \mathbf{z}_{2}\Vert^{2} \big)$ for any two vectors $\mathbf{z}_{1}, \mathbf{z}_{2}$ of the same length; (b) follows from 
	\begin{align}
	&\mathbb{E}[\langle \nabla_{\bf{w}} f({\mathbf{w}}^{t-1}), \frac{1}{K}
\frac{m_i}{q_i}
\sum\limits_{i \in \mathcal{K}^{(t -1)}(\bm q)}  \left[ {{\bf{g}}_{s,i}^t; {\bf{g}}_{c,i}^t} \right]\rangle] \notag\\
	=& \mathbb{E}[\mathbb{E}[\langle \nabla_{\bf{w}} f({\mathbf{w}}^{t-1}),  \frac{1}{K}
\frac{m_i}{q_i}
\sum\limits_{i \in \mathcal{K}^{(t -1)}(\bm q)}  \left[ {{\bf{g}}_{s,i}^t; {\bf{g}}_{c,i}^t} \right]\rangle | \boldsymbol{\xi}^{[t-1]}]]  \notag\\
	=& \mathbb{E}[\langle \nabla_{\bf{w}} f({\mathbf{w}}^{t-1}),  \frac{1}{K}
\frac{m_i}{q_i}
\sum\limits_{i \in \mathcal{K}^{(t -1)}(\bm q)} \mathbb{E}[ \left[ {{\bf{g}}_{s,i}^t; {\bf{g}}_{c,i}^t} \right]| \boldsymbol{\xi}^{[t-1]}]\rangle ] \notag\\
	 =& \mathbb{E}[\langle \nabla_{\bf{w}} f({\mathbf{w}}^{t-1}), \frac{1}{K}\frac{m_i}{q_i}\sum\limits_{i \in \mathcal{K}^{(t -1)}(\bm q)}  \nabla_{\bf{w}} f_{i}(\mathbf{w}_{i}^{t-1})\rangle ],
	\end{align}
%$\mathbb{E}[\langle \nabla f(\overline{\mathbf{x}}^{t-1}), \frac{1}{N} \sum_{i=1}^{N} \mathbf{G}_{i}^{t}\rangle] 
%= \mathbb{E}[\mathbb{E}[\langle \nabla f(\overline{\mathbf{x}}^{t-1}), \frac{1}{N} \sum_{i=1}^{N} \mathbf{G}_{i}^{t}\rangle | \boldsymbol{\zeta}^{[t-1]}]] = \mathbb{E}[\langle \nabla f(\overline{\mathbf{x}}^{t-1}), \frac{1}{N} \sum_{i=1}^{N} \mathbb{E}[\mathbf{G}_{i}^{t}| \boldsymbol{\zeta}^{[t-1]}]\rangle ] = \mathbb{E}[\langle \nabla f(\overline{\mathbf{x}}^{t-1}), \frac{1}{N} \sum_{i=1}^{N} \nabla f_{i}(\mathbf{x}_{i}^{t-1})\rangle ]$
where the first equality uses the law of expectations, the second equality holds since ${\mathbf{w}}^{t-1}$ is measurable with respect to $\boldsymbol{\xi}^{[t-1]}= [\boldsymbol{\xi}^{1}, \ldots, \boldsymbol{\xi}^{t-1}]$ and the third equality is derived using $\mathbb{E}[\mathbf{g}_{i}^{t} | \boldsymbol{\xi}^{[t-1]}] = \mathbb{E}[\nabla F_{i}(\mathbf{w}_{i}^{t-1};\xi^{t}_{i}) | \boldsymbol{\xi}^{[t-1]}] = \nabla f_{i}(\mathbf{w}_{i}^{t-1})$.

Substituting Eq.~\eqref{eq:w-difference-squre} and Eq.~\eqref{eq:inner_product_w} into Eq.~\eqref{eq:equal_1_total}, we have  %{\small $\mathbb{E}[f(\overline{\mathbf{x}}^{t})] \leq \mathbb{E}[f(\overline{\mathbf{x}}^{t-1})] - \frac{\gamma - \gamma^{2}L}{2} \mathbb{E} [\Vert \frac{1}{N} \sum_{i=1}^{N} \nabla f_{i} (\mathbf{x}_{i}^{t-1})\Vert^{2}] - \frac{\gamma}{2} \mathbb{E}[\Vert \nabla f(\overline{\mathbf{x}}^{t-1})\Vert^{2}]  + \frac{\gamma}{2}  \mathbb{E}[
{\begin{align}
&\mathbb{E} [f(\mathbf{w}^t)]\notag\\
\overset{(a)}{\le}&\mathbb{E} [f(\mathbf{w}^{t-1})]-\frac{\gamma}{2}\mathbb{E} \left[ ||\nabla _{\mathbf{w}}f(\mathbf{w}^{t-1})||^2 \right] -\frac{\gamma}{2}\mathbb{E} [ ||\frac{1}{K}\frac{m_i}{q_i} \notag \\
&
\sum_{i\in \mathcal{K} ^{(t-1)}(\boldsymbol{q})}{\nabla _{\mathbf{w}}}f_i(\mathbf{w}_{i}^{t-1})||^2 ] +\frac{\beta}{2}\gamma ^2\sum_{i=1}^N{}\frac{{m_i}^2}{q_i}(\frac{1}{\left( 1-p_i \right) }\cdot
\notag \\
&\frac{1}{\left( 1-\varphi _i \right)}\frac{1}{\left( 1-a_i \right)}\sum_{j=1}^{L_c}{\sigma _{j}^{2}}+\frac{1}{\left( 1-p_i \right)}\sum_{j=L_c+1}^L{\sigma _{j}^{2}})+\frac{\beta}{2}\gamma ^2\cdot
\notag \\
&
\sum_{i=1}^N{}\frac{{m_i}^2}{q_i}\frac{1}{\left( 1-a_i \right)}\frac{1}{\left( 1-p_i \right) \left( 1-\varphi _i \right)}\mathbb{E} [||\nabla _{\mathbf{h}_c}f_i\left( \mathbf{h}_{c,i}^{t-1} \right) ||^2]
\notag \\
&+\frac{\beta}{2}\gamma ^2\sum_{i=1}^N{}\frac{{m_i}^2}{q_i}\frac{1}{\left( 1-p_i \right)}\mathbb{E} [||\nabla _{\mathbf{h}_s}f_i\left( \mathbf{h}_{s,i}^{t-1} \right) ||^2] \notag\\
&
+\frac{\gamma}{2}\mathbb{E} [ ||\nabla _{\mathbf{h}_s}f(\mathbf{h}_{s}^{t-1})-\frac{1}{K}\frac{m_i}{q_i}\sum_{i\in \mathcal{K} ^{(t-1)}(\boldsymbol{q})}{\nabla _{\mathbf{h}_s}}f_i(\mathbf{h}_{s,i}^{t-1})||^2 ] 
\notag\\
&+\frac{\gamma}{2}\mathbb{E} [ ||\nabla _{\mathbf{h}_c}f(\mathbf{h}_{c}^{t-1})-\frac{1}{K}\frac{m_i}{q_i}\sum_{i\in \mathcal{K} ^{(t-1)}(\boldsymbol{q})}{\nabla _{\mathbf{h}_c}}f_i(\mathbf{h}_{c,i}^{t-1})||^2 ] \notag\\
%bbbbbbbbb
\overset{(b)}{\le} &\mathbb{E} [f(\mathbf{w}^{t-1})]-\frac{\gamma}{2}\mathbb{E} \left[ ||\nabla _{\mathbf{w}}f(\mathbf{w}^{t-1})||^2 \right] 
 \notag \\
&
+\frac{\beta}{2}\gamma ^2\sum_{i=1}^N{}\frac{{m_i}^2}{q_i}(\frac{1}{\left( 1-p_i \right) }\frac{1}{\left( 1-\varphi _i \right)}\frac{1}{\left( 1-a_i \right)}\sum_{j=1}^{L_c}{\sigma _{j}^{2}}+
\notag \\
&\frac{1}{\left( 1-p_i \right)}\sum_{j=L_c+1}^L{\sigma _{j}^{2}})+\frac{\beta}{2}\gamma ^2\sum_{i=1}^N{}\frac{{m_i}^2}{q_i}\frac{1}{\left( 1-p_i \right)}
\sum_{j=L_c+1}^L{G_j^2}
\notag \\
&
+\frac{\beta}{2}\gamma ^2\sum_{i=1}^N{}\frac{{m_i}^2}{q_i}\frac{1}{\left( 1-a_i \right)}\frac{1}{\left( 1-p_i \right) \left( 1-\varphi _i \right)}
\sum_{j=1}^{L_c}{G_j^2}
\notag \\
&
+\frac{\gamma}{2}\sum_{i=1}^N{}\beta ^2\frac{{m_i}^2}{q_i}2\gamma ^2I^2 \big( N\underset{i^\prime}{\max}\big\{ \frac{{m_{i^\prime}}^2}{q_{i^\prime}}\frac{1}{\left( 1-p_{i^\prime} \right) \left( 1-\varphi_{i^\prime} \right)}  \notag\\
&\frac{1}{\left( 1-a_{i^\prime} \right)}\big\}+\frac{1}{\left( 1-a_i \right)}\frac{1}{\left( 1-p_i \right) \left( 1-\varphi _i \right)} \big) \sum_{j=1}^{L_c}{G_{j}^{2}} 
    \label{eq:11}
\end{align}}
where (a) is justified by
\begin{align}
&\mathbb{E} [ ||\nabla _{\mathbf{w}}f(\mathbf{w}^{t-1})-\frac{1}{K}\frac{m_i}{q_i}\sum_{i\in \mathcal{K} ^{(t-1)}(\boldsymbol{q})}{\nabla _{\mathbf{w}}}f_i(\mathbf{w}_{i}^{t-1})||^2 ] 
\notag\\=&
\mathbb{E} [ ||\nabla _{\mathbf{h}_s}f(\mathbf{h}_{s}^{t-1})-\frac{1}{K}\frac{m_i}{q_i}\sum_{i\in \mathcal{K} ^{(t-1)}(\boldsymbol{q})}{\nabla _{\mathbf{h}_s}}f_i(\mathbf{h}_{s,i}^{t-1})||^2 ] +
\notag\\
&\mathbb{E} [ ||\nabla _{\mathbf{h}_c}f(\mathbf{h}_{s}^{t-1})-\frac{1}{K}\frac{m_i}{q_i}\sum_{i\in \mathcal{K} ^{(t-1)}(\boldsymbol{q})}{\nabla _{\mathbf{h}_s}}f_i(\mathbf{h}_{s,i}^{t-1})||^2 ],
\end{align}
and (b) follows from {\bf Assumption 2} together with the  inequality~\eqref{difference_wc} and~\eqref{difference_ws}, namely
{\begin{align}\label{difference_wc}
&\mathbb{E}[ \Vert \nabla_{{{\bf{h}}_c}} f({\mathbf{h}}_c^{t-1}) -
\frac{1}{K}\frac{m_i}{q_i}\sum_{i\in \mathcal{K} ^{(t-1)}(\boldsymbol{q})}{\nabla _{\mathbf{h}_c}}f_i(\mathbf{h}_{c,i}^{t-1})
\Vert^{2}] \nonumber \\
=&\mathbb{E} [||\frac{1}{K}\frac{m_i}{q_i}\sum_{i\in \mathcal{K} ^{(t-1)}(\boldsymbol{q})}{\nabla _{\mathbf{h}_c}}f_i(\mathbf{h}_{c}^{t-1})-\nonumber \\
& \quad \quad \quad \quad \quad \quad \quad \quad \quad \quad \frac{1}{K}\frac{m_i}{q_i}\sum_{i\in \mathcal{K} ^{(t-1)}(\boldsymbol{q})}{}\nabla _{\mathbf{h}_c}f_i(\mathbf{h}_{c,i}^{t-1})||^2]\nonumber \\
\leq& \frac{1}{K} \mathbb{E} [ 
\sum_{i\in \mathcal{K} ^{(t-1)}(\boldsymbol{q})}{\frac{{m_i}^2}{{q_i}^2}||}\nabla _{\mathbf{h}_c}f_i(\mathbf{h}_{c}^{t-1})-\nabla _{\mathbf{h}_c}f_i(\mathbf{h}_{c,i}^{t-1})||^2
] \nonumber \\
\leq& 
\beta ^2\frac{K}{K}\mathbb{E} _{\mathcal{K} ^{(t-1)}(\boldsymbol{q})}[\frac{{m_i}^2}{{q_i}^2}||\mathbf{h}_{c}^{t-1}-\mathbf{h}_{c,i}^{t-1}||^2] 
 \nonumber \\
\leq & 
\sum_{i=1}^N{}\frac{\beta ^2{ m_i}^22\gamma ^2I^2}{q_i}\big(  N\underset{i^\prime}{\max}\big\{ \frac{{m_{i^\prime}}^2}{q_{i^\prime}}\frac{1}{\left( 1-a_{i^\prime} \right)}\frac{1}{\left( 1-p_{i^\prime} \right) \left( 1-\varphi_{i^\prime} \right)} \big\} \notag\\&+\frac{1}{\left( 1-a_i \right)}\frac{1}{\left( 1-p_i \right) \left( 1-\varphi _i \right)} \big) \sum_{j=1}^{L_c}{G_{j}^{2}} 
,
\end{align}
}%
where the first inequality uses $\Vert \sum_{i=1}^{N} \mathbf{z}_{i}\Vert^{2} \leq N \sum_{i=1}^{N} \Vert \mathbf{z}_{i}\Vert^{2}$ for arbitrary vectors $\mathbf{z}_{i}$; the second inequality leverages the $\beta$-smooth of each $f_{i}$ by {\bf Assumption 1}; and the third inequality is derived according to {\bf Lemma 1}, and
 \begin{align}\label{difference_ws}
&\mathbb{E}[ \Vert \nabla_{{{\bf{h}}_s}} f({\mathbf{h}}_s^{t-1}) - \frac{1}{K}\sum\limits_{i \in \mathcal{K}^{(t -1)}(\bm q)} \nabla_{{{\bf{h}}_s}} f_{i} (\mathbf{h}_{s,i}^{t-1})\Vert^{2}] \nonumber \\
\leq& 
\sum_{i=1}^N{}\beta ^2\frac{{m_i}^2}{q_i}\mathbb{E} [||\mathbf{h}_{s}^{t-1}-\mathbf{h}_{s,i}^{t-1}||^2]
 \overset{(a)}{=} 0,
% \leq & 2 \gamma^{2} \beta^2 I^{2}   \bigg(\sum\limits_{i \in \mathcal{K}^{(t -1)}(\boldsymbol{q})}\bigg(\frac{(1 - p_{i})(1 - \varphi_{i})}{ Kq_{i}} \bigg)^2 +\notag\\
% &(1 - p_{i})^2(1 - \varphi_{i})^2 \bigg)\sum_{z=L_c+1}^{L}G_z^2.
\end{align}
where (a) holds since the server-side sub-models are aggregated every round. Thus, at any round $t$, each client’s server-side sub-model coincides with the aggregated server-side model.

Dividing Eq.~\eqref{eq:11} both sides by $\frac{\gamma}{2}$ and rearranging the  terms yields
{ \begin{align}
&\mathbb{E}\left [\Vert \nabla_{\bf{w}} f({\mathbf{w}}^{t-1})\Vert^{2}\right] \nonumber \\
\leq& \frac{2}{\gamma} \left(\mathbb{E}\left[f({\mathbf{w}}^{t-1})\right] \! - \!\mathbb{E}\left[f({\mathbf{w}}^{t})\right]\right)+ \sum_{i=1}^N\frac{{m_i}^2}{q_i}\cdot
 \notag \\
&
\frac{\beta\gamma}{\left( 1-p_i \right) }\bigg\{\frac{1}{\left( 1-\varphi _i \right)}\frac{1}{\left( 1-a_i \right)}\sum_{j=1}^{L_c}({\sigma _{j}^{2}}+{G_j^2})+\sum_{j=L_c+1}^L({\sigma _{j}^{2}}+
\notag \\
&{G_j^2})\bigg\}+\sum_{i=1}^N{}\beta ^2\frac{{m_i}^2}{q_i}2\gamma ^2I^2 \big( N\underset{i^\prime}{\max}\big\{ \frac{{m_{i^\prime}}^2}{q_{i^\prime}}\frac{1}{\left( 1-p_{i^\prime} \right) }\frac{1}{ \left( 1-\varphi_{i^\prime} \right)}  \notag\\
&\frac{1}{\left( 1-a_{i^\prime} \right)}\big\}+\frac{1}{\left( 1-a_i \right)}\frac{1}{\left( 1-p_i \right) \left( 1-\varphi _i \right)} \big) \sum_{j=1}^{L_c}{G_{j}^{2}}.
\label{eq:pf-thm-rate-eq8}
\end{align}
}%

Summing over $t\in\{1,\ldots, R\}$ and dividing by $R$ gives 
\begin{align}
&\frac{1}{R} \sum_{t=1}^{R} \mathbb{E}\left [\Vert \nabla_{\bf{w}} f({\mathbf{w}}^{t-1})\Vert^{2}\right] \notag
\\\overset{(a)}{\leq} 
&  \frac{2}{\gamma R} \left(f({\mathbf{w}}^{0}) -f^{\ast}\right)  + \sum_{i=1}^N\frac{{m_i}^2}{q_i}\frac{\beta\gamma}{\left( 1-p_i \right) }\cdot
 \notag \\
&
\bigg\{\frac{1}{\left( 1-\varphi _i \right)}\frac{1}{\left( 1-a_i \right)}\sum_{j=1}^{L_c}({\sigma _{j}^{2}}+{G_j^2})+\sum_{j=L_c+1}^L({\sigma _{j}^{2}}+
{G_j^2})\bigg\}
+
\notag \\
&\sum_{i=1}^N{}\beta ^2\frac{{m_i}^2}{q_i}2\gamma ^2I^2 \big( N\underset{i^\prime}{\max}\big\{ \frac{1}{\left( 1-p_{i^\prime} \right) \left( 1-\varphi_{i^\prime} \right) \left( 1-a_{i^\prime} \right) }\cdot \notag\\
& \frac{{m_{i^\prime}}^2}{q_{i^\prime}}\big\}+\frac{1}{\left( 1-a_i \right)}\frac{1}{\left( 1-p_i \right) \left( 1-\varphi _i \right)} \big) \sum_{j=1}^{L_c}{G_{j}^{2}} \notag\\
\overset{(b)}{\leq} 
&  \frac{2}{\gamma R} \left(f({\mathbf{w}}^{0}) -f^{\ast}\right) 
+ \sum_{i=1}^N\frac{{m_i}^2}{q_i}\frac{\beta\gamma}{\left( 1-p_i \right) } \cdot
 \notag \\
&
\bigg\{\frac{1}{\left( 1-\varphi _i \right)}\frac{1}{\left( 1-a_i \right)}\sum_{j=1}^{L_c}({\sigma _{j}^{2}}+{G_j^2})+\sum_{j=L_c+1}^L({\sigma _{j}^{2}}+
{G_j^2})\bigg\}
+
\notag \\
&\sum_{i=1}^N{}\beta ^2\frac{{m_i}^2}{q_i}2\gamma ^2I^2 \big( N\underset{i^\prime}{\max}\big\{ \frac{1}{\left( 1-p_{i^\prime} \right) \left( 1-\varphi_{i^\prime} \right) \left( 1-a_{i^\prime} \right) }\cdot \notag\\
& \frac{{m_{i^\prime}}^2}{q_{i^\prime}}\big\}+\frac{1}{\left( 1-a_i \right)}\frac{1}{\left( 1-p_i \right) \left( 1-\varphi _i \right)} \big) \sum_{j=1}^{L_c}{G_{j}^{2}},
\end{align}
where (a) follows from $f^{\ast}$ is the minimum value of  Problem \textbf{P1}.

\bibliographystyle{IEEEtran}
\bibliography{citationlist}
\vspace{-0.4cm}
% \begin{IEEEbiography}
% % [{\includegraphics[width=1in,height=1.25in,clip,keepaspectratio]{weiwei.jpg}}]
% {Wei Wei} (Graduate Student Member, IEEE) received the B.S. degree in electronics engineering from Shandong University, Jinan, China, in 2020, and the M.S. degree (Hons.) in electronic and information engineering from Tsinghua University, Beijing, China, in 2023. She is currently pursuing the Ph.D. degree with the Department of Electrical and Electronic Engineering, the University of Hong Kong, Hong Kong, China. Her research interests include edge computing and multiagent reinforcement learning.
% \end{IEEEbiography}
\begin{IEEEbiographynophoto}{Wei Wei}(Graduate Student Member, IEEE)
received the B.S. degree in electronics engineering from Shandong University, Jinan, China, in 2020, and the M.S. degree (Hons.) in electronic and information engineering from Tsinghua University, Beijing, China, in 2023. She is currently pursuing the Ph.D. degree with the Department of Electrical and Electronic Engineering, the University of Hong Kong, Hong Kong, China. Her research interests include edge computing and multiagent reinforcement learning.
\end{IEEEbiographynophoto}
\vspace{-0.4cm}
% \begin{IEEEbiography}[{\includegraphics[width=1in,height=1.25in,clip,keepaspectratio]{LZ.jpg}}]{Zheng Lin}(Graduate Student Member, IEEE)
% received the B.Eng. degree in electronic information from Fuzhou University in 2020, and the M.Eng. degree in electrical and computer engineering from Fudan University in 2023. He is currently pursuing his Ph.D. degree with the Department of Electrical and Electronic Engineering, the University of Hong Kong, Hong Kong, China. He was awarded Fudan Outstanding Graduate and Shanghai Outstanding Graduate in 2023. He serves as a TPC
% member of several international conferences and a Young Professional in IEEE Vehicular Technology Society Ad Hoc Committee. His research interests include wireless networking, edge intelligence, and distributed machine learning.
% \end{IEEEbiography}
\begin{IEEEbiographynophoto}{Zheng Lin}(Graduate Student Member, IEEE)
received the B.Eng. degree in electronic information from Fuzhou University in 2020, and the M.Eng. degree in electrical and computer engineering from Fudan University in 2023. He is currently pursuing his Ph.D. degree with the Department of Electrical and Electronic Engineering, the University of Hong Kong, Hong Kong, China. He was awarded Fudan Outstanding Graduate and Shanghai Outstanding Graduate in 2023. He serves as a TPC member of several international conferences and a Young Professional in IEEE Vehicular Technology Society Ad Hoc Committee. His research interests include wireless networking, edge intelligence, and distributed machine learning.
\end{IEEEbiographynophoto}

\vspace{-0.4cm}
% \begin{IEEEbiography}[{\includegraphics[width=1in,height=1.25in,clip,keepaspectratio]{XihuiLiu-photo.jpeg}}]{Xihui Liu}(Member, IEEE)
% received the bachelor’s
% degree from Tsinghua University and the PhD degree
% from the Chinese University of Hong Kong. She is an
% assistant professor with the Department of Electrical
% and Electronic Engineering and Institute of Data Science, the University of Hong Kong. Before joining
% HKU, she was a postdoctoral researcher with the University of California, Berkeley. Her research interests
% include computer vision, deep learning, generative
% models, and multimodal AI. She was awarded Adobe
% Research Fellowship 2020, EECS Rising Stars 2021,
% and WAIC Rising Star Award 2022. She serves as area chairs for CVPR 2024,
% ACM MM 2024, and ICLR 2025.
% \end{IEEEbiography}
\begin{IEEEbiographynophoto}{Xihui Liu}(Member, IEEE)
received the bachelor’s degree from Tsinghua University and the PhD degree from the Chinese University of Hong Kong. She is an assistant professor with the Department of Electrical and Electronic Engineering and the Institute of Data Science, the University of Hong Kong. Before joining HKU, she was a postdoctoral researcher with the University of California, Berkeley. Her research interests include computer vision, deep learning, generative models, and multimodal AI. She was awarded the Adobe Research Fellowship in 2020, named an EECS Rising Star in 2021, and received the WAIC Rising Star Award in 2022. She serves as an area chair for CVPR 2024, ACM MM 2024, and ICLR 2025.
\end{IEEEbiographynophoto}

\vspace{-0.4cm}
% \begin{IEEEbiography}[{\includegraphics[width=1in,height=1.25in, clip,keepaspectratio]{Du.pdf}}]{Hongyang Du} is an assistant professor at the Department of Electrical and Computer Engineering, The University of Hong Kong, where he directs the Network Intelligence and Computing Ecosystem (NICE) Laboratory. He received the B.Eng. degree from the Beijing Jiaotong University, China, and the Ph.D. degree from the Nanyang Technological University, Singapore. He serves as the Editor of IEEE Communications Surveys \& Tutorials, IEEE Transactions on Communications, IEEE Transactions on Vehicular Technology, IEEE Open Journal of the Communications Society, and the Guest Editor of IEEE Vehicular Technology Magazine. He is the recipient of the IEEE ComSoc Young Professional Award for Best Early Career Researcher in 2024, IEEE Daniel E. Noble Fellowship Award from the IEEE Vehicular Technology Society in 2022, the IEEE Signal Processing Society Scholarship from the IEEE Signal Processing Society in 2023, and the Singapore Data Science Consortium (SDSC) Dissertation Research Fellowship in 2023. He was recognized as an exemplary reviewer of the IEEE Transactions on Communications and IEEE Communications Letters. His research interests include edge intelligence, generative AI, and network management.
% \end{IEEEbiography}
\begin{IEEEbiographynophoto}{Hongyang Du}
is an assistant professor with the Department of Electrical and Computer Engineering, The University of Hong Kong, where he directs the Network Intelligence and Computing Ecosystem (NICE) Laboratory. He received the B.Eng. degree from Beijing Jiaotong University, China, and the Ph.D. degree from Nanyang Technological University, Singapore. He serves as an Editor of IEEE Communications Surveys \& Tutorials, IEEE Transactions on Communications, IEEE Transactions on Vehicular Technology, and IEEE Open Journal of the Communications Society, and as a Guest Editor of IEEE Vehicular Technology Magazine. He is the recipient of the IEEE ComSoc Young Professional Award for Best Early Career Researcher in 2024, the IEEE Daniel E. Noble Fellowship Award from the IEEE Vehicular Technology Society in 2022, the IEEE Signal Processing Society Scholarship in 2023, and the Singapore Data Science Consortium Dissertation Research Fellowship in 2023. He was recognized as an exemplary reviewer of the IEEE Transactions on Communications and IEEE Communications Letters. His research interests include edge intelligence, generative AI, and network management.
\end{IEEEbiographynophoto}

\vspace{-0.4cm}
% \begin{IEEEbiography}[{\includegraphics[width=1in,height=1.25in,clip,keepaspectratio]{tao2020.jpg}}]{Dusit Niyato}(Fellow, IEEE) received the
% B.Eng. degree from the King Mongkut’s Institute of Technology
% Ladkrabang (KMITL), Thailand, in 1999, and the Ph.D. degree
% in electrical and computer engineering from the University of
% Manitoba, Canada, in 2008. He is currently a Professor with
% the School of Computer Science and Engineering, Nanyang
% Technological University, Singapore. His research interests
% include the Internet of Things (IoT), machine learning, and
% incentive mechanism design.
% \end{IEEEbiography}
\begin{IEEEbiographynophoto}{Dusit Niyato}(Fellow, IEEE)
received the B.Eng. degree from King Mongkut’s Institute of Technology Ladkrabang (KMITL), Thailand, in 1999, and the Ph.D. degree in electrical and computer engineering from the University of Manitoba, Canada, in 2008. He is currently a Professor with the School of Computer Science and Engineering, Nanyang Technological University, Singapore. His research interests include the Internet of Things (IoT), machine learning, and incentive mechanism design.
\end{IEEEbiographynophoto}

\vspace{-0.4cm}
% \begin{IEEEbiography}[{\includegraphics[width=1in,height=1.25in,clip,keepaspectratio]{Xianhao_Chen.jpg}}]{Xianhao Chen}(Member, IEEE) received the B.Eng. degree in electronic information from Southwest Jiaotong University in 2017, and the Ph.D. degree in electrical and computer engineering from the University of Florida in 2022. He is currently an assistant professor at the Department of Electrical and Electronic Engineering, the University of Hong Kong, where he directs the Wireless Information \& Intelligence (WILL) Lab. He serves as a TPC member of several international conferences and on the editorial board of ACM Computing Surveys, IEEE Transactions on Networking, and IEEE Transactions on Vehicular Technology. He received the Early Career Award from the Research Grants Council (RGC) of Hong Kong in 2024, the ECE Graduate Excellence Award for research from the University of Florida in 2022, and the ICCC Best Paper Award in 2023. His research interests include wireless networking, edge intelligence, and machine learning.
% \end{IEEEbiography}
\begin{IEEEbiographynophoto}{Xianhao Chen}(Member, IEEE)
received the B.Eng. degree in electronic information from Southwest Jiaotong University in 2017, and the Ph.D. degree in electrical and computer engineering from the University of Florida in 2022. He is currently an assistant professor with the Department of Electrical and Electronic Engineering, the University of Hong Kong, where he directs the Wireless Information \& Intelligence (WILL) Lab. He serves as a TPC member of several international conferences and on the editorial board of ACM Computing Surveys, IEEE Transactions on Networking, and IEEE Transactions on Vehicular Technology. He received the Early Career Award from the Research Grants Council (RGC) of Hong Kong in 2024, the ECE Graduate Excellence Award for Research from the University of Florida in 2022, and the ICCC Best Paper Award in 2023. His research interests include wireless networking, edge intelligence, and machine learning.
\end{IEEEbiographynophoto}
% \end{CJK}
% that's all folks
\end{document}